\title{Verifying {Chemical Reaction Network} Implementations: \\A
  Pathway Decomposition Approach \thanks{Preliminary versions of
    this manuscript appeared in the proceedings of VEMDP 2014 {and are available on arXiv:1411.0782 [cs.CE]}.}}
\author[$1$]{Seung Woo Shin}
\author[$2$]{Chris Thachuk}
\author[$2$]{Erik Winfree}
\affil[$1$]{\textit{University of California, Berkeley}}
\affil[$2$]{\textit{California Institute of Technology}}
\date{}                                           % Activate to display a given date or no date
\theoremstyle{definition}
\newtheorem{definition}{Definition}
\newtheorem{theorem}{Theorem}[section]
\newtheorem{lemma}[theorem]{Lemma}
\newtheorem{corollary}[theorem]{Corollary}
\newcommand*{\llbrace}{%
  \BeginAccSupp{method=hex,unicode,ActualText=2983}%
    \textnormal{\usefont{OMS}{lmr}{m}{n}\char102}%
    \mathchoice{\mkern-4.05mu}{\mkern-4.05mu}{\mkern-4.3mu}{\mkern-4.8mu}%
    \textnormal{\usefont{OMS}{lmr}{m}{n}\char106}%
  \EndAccSupp{}%
}
\newcommand*{\rrbrace}{%
  \BeginAccSupp{method=hex,unicode,ActualText=2984}%
    \textnormal{\usefont{OMS}{lmr}{m}{n}\char106}%
    \mathchoice{\mkern-4.05mu}{\mkern-4.05mu}{\mkern-4.3mu}{\mkern-4.8mu}%
    \textnormal{\usefont{OMS}{lmr}{m}{n}\char103}%
  \EndAccSupp{}%
}
\begin{document}
\maketitle
{\abstract
The emerging fields of genetic engineering, synthetic biology, DNA computing, DNA nanotechnology, and molecular programming herald the birth of a new information technology that acquires information by directly sensing molecules within a chemical environment, stores information in molecules such as DNA, RNA, and proteins, processes that information by means of chemical and biochemical transformations, and uses that information to direct the manipulation of matter at the nanometer scale.   To scale up beyond current proof-of-principle demonstrations, new methods for managing the complexity of designed molecular systems will need to be developed.  Here we focus on the challenge of verifying the correctness of molecular implementations of abstract chemical reaction networks, where operation in a well-mixed ``soup'' of molecules is stochastic, asynchronous, concurrent, and often involves multiple intermediate steps in the implementation, parallel pathways, and side reactions.  %This problem relates to the verification of Petri nets, but existing approaches are not sufficient for certain situations that commonly arise in molecular implementations, such as what we call ``delayed choice.'' We formulate a new theory of pathway decomposition that provides an elegant formal basis for comparing chemical reaction network implementations, and we present an algorithm that computes this basis.
{This problem relates to the verification of Petri nets, but existing approaches are not sufficient for providing a single guarantee covering an infinite set of possible initial states (molecule counts) and an infinite state space potentially explored by the system given any initial state.   We address these issues by formulating a new theory of pathway decomposition that provides an elegant formal basis for comparing chemical reaction network implementations, and we present an algorithm that computes this basis.  Our theory naturally handles certain situations that commonly arise in molecular implementations, such as what we call ``delayed choice,'' that are not easily accommodated by other approaches.}
We further show how pathway decomposition can be combined with weak bisimulation to handle a wider class that includes {most} currently known enzyme-free DNA implementation techniques.  We anticipate that our notion of logical equivalence between chemical reaction network implementations will be valuable for other molecular implementations such as biochemical enzyme systems, and perhaps even more broadly in concurrency theory.
}

\bigskip
\noindent\textbf{Keywords:} chemical reaction networks; molecular computing; DNA computing; formal verification; molecular programming; automated design

\section{Introduction} \label{section:introduction}
%% $\mset{X,Y}$ vs $\set{X,Y}$

A central problem in molecular computing and bioengineering is that of implementing algorithmic behavior using chemical molecules. The ability to design chemical systems that can sense and react to the environment finds applications in many different fields, such as nanotechnology \cite{chen2013programmable}, medicine \cite{douglas2012logic}, and robotics \cite{gu2010proximity}. Unfortunately, the complexity of such engineered chemical systems often makes it challenging to ensure that a designed system really behaves according to specification. 
Furthermore, since experimentally synthesizing chemical systems can require considerable resources, mistakes are generally expensive, and it would be useful to have a procedure by which one can theoretically verify the correctness of a design using computer algorithms prior to synthesis.
% Since an experiment is generally an expensive process that requires considerable resources, it would be useful to have a procedure by which one can theoretically verify the correctness of a design using computer algorithms. 
In this paper we propose a theory that can serve as a foundation for such automated verification procedures.

Specifically, we focus our attention on the problem of verifying chemical reaction network (CRN) implementations. Informally, a CRN is a set of chemical reactions that specify the behavior of a given chemical system in a well mixed solution.  For example, the reaction equation $A + B \rightarrow C$ means that a reactant molecule of type $A$ and another of type $B$ can be \textit{consumed} in order to \textit{produce} a product molecule of type $C$.  A reaction is applicable if all of its reactants are present in the solution in sufficient quantities. In case both $A+B\to C$ and $C\to A+B$ are in the CRN, we may also use the shorthand notation $A+B\leftrightharpoons C$. In general, the evolution of the system from some initial set of molecules is a stochastic, asynchronous, and concurrent process.
 While abstract CRNs provide the most widely used formal language for describing chemical systems, and have done so for over a century, only recently have abstract CRNs been used explicitly as a programming language in molecular programming and bioengineering. This is because CRNs are often used to specify the target behavior for an engineered chemical system (see Figure \ref{fig:crn_implementation}).  How can one realize these ``target'' CRNs experimentally?  Unfortunately, synthesizing chemicals to efficiently interact --- and only as prescribed --- presents a significant, if not infeasible, engineering challenge.  Fortunately, any target CRN can be \textit{emulated} by a (generally more complex) ``implementation'' CRN.  For example, in the field of DNA computing, implementing a given CRN using synthesized DNA strands is a well studied topic that has resulted in a number of translation schemes \cite{SSW10, C09, QSW11}.

In order to evaluate CRN implementations prior to their experimental
demonstration, a mathematical model describing the expected molecular
interactions is necessary.  For this purpose, software simulators that
embody the relevant physics and chemistry can be used.  Beyond
performing simulations -- which by themselves can't provide absolute
statements about the correctness of an implementation -- it is often
possible to describe the model of the molecular implementation as a
CRN.  That is, software called ``reaction enumerators'' can, given a set of initial molecules, evaluate all possible configuration changes and interactions, possibly generating new molecular species, and repeating until the full set of species and reactions have been enumerated. In the case of DNA systems, there are multiple software packages available for this task~\cite{DSD, Grun}.   {More general biochemical implementations could be modeled using languages such as BioNetGen~\cite{BNGL} and Kappa~\cite{danos2007}.}

Given a ``target'' CRN which specifies a desired algorithmic behavior and an ``implementation'' CRN which purports to implement the target CRN, how can one check that the implementation CRN is indeed correct? As we shall see, this question involves subtle issues that make it difficult to even define a notion of correctness that can be universally agreed upon, despite the fact that in this paper we study a somewhat simpler version of the problem in which chemical kinetics, i.e.~rates of chemical reactions, is dropped from consideration. However, we note that this restriction is not without its own advantages. For instance, when basing a theory on chemical kinetics, it is of interest to accept approximate matches to the target behavioral dynamics \cite{Klavins1, Klavins2}, which may overlook certain logical flaws in the implementation that occur rarely.  While theories of kinetic equivalence are possible and can in principle provide guarantees about timing \cite{C14}, they can be difficult to apply to molecular engineering in practice.  In contrast, a theory that ignores chemical kinetics can be exact and therefore emphasize the logical aspect of the correctness question.

The main challenge in this verification problem lies in the fact that the implementation CRN is usually much more complex than the target CRN. This is because each reaction in the target CRN, which is of course a single step in principle, gets implemented as a sequence of steps which may involve ``intermediate'' species that were not part of the original target CRN. For example, in DNA-based implementations, the implementation CRN can easily involve an order of magnitude more reactions and species than the target CRN (the size will depend upon the level of detail in the model of the implementation \cite{DSD, Grun, Schaeffer, DOL13}). Given that the intermediate species participating in implementations of different target reactions can potentially interact with each other in spurious ways, it becomes very difficult to verify that such an implementation CRN is indeed ``correct.''

\begin{figure}[!h]
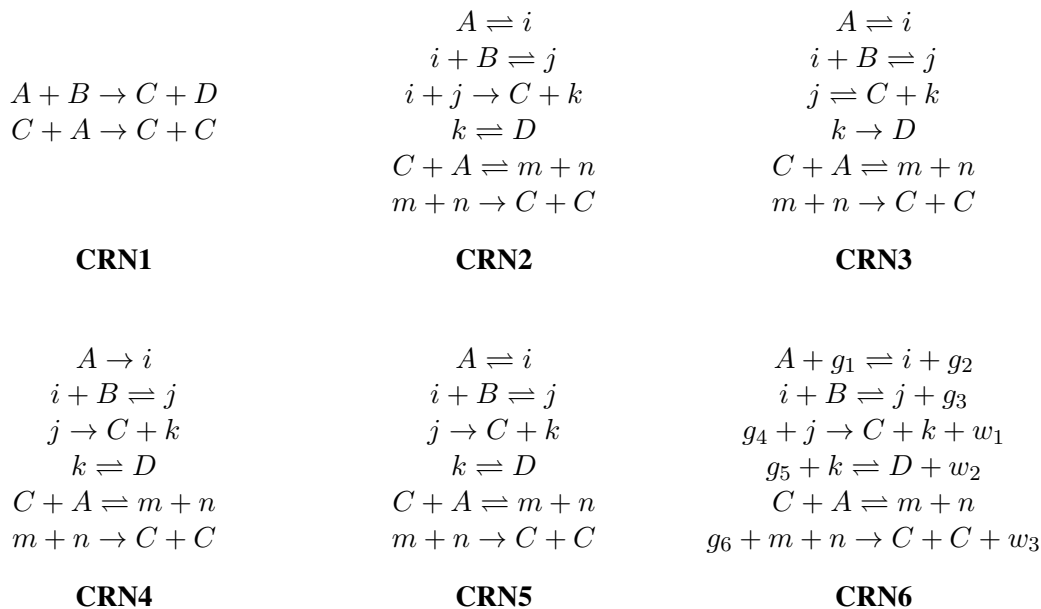

\centering
\vspace{0.7cm}
\begin{minipage}[b]{0.3\linewidth}
\centering
$A+B\to C+D$\\
$C+A\to C+C$\\
\ \\
\ \\
\vspace{0.3cm}
\textbf{CRN1}
\end{minipage}
\begin{minipage}[b]{0.3\linewidth}
\centering
$A\rightleftharpoons i$\\
$i+B\rightleftharpoons j$\\
$i+j\to C+k$\\
$k\rightleftharpoons D$\\
$C+A\rightleftharpoons m+n$\\
$m+n\to C+C$\\
\vspace{0.3cm}
\textbf{CRN2}
\end{minipage}
\begin{minipage}[b]{0.3\linewidth}
\centering
$A\rightleftharpoons i$\\
$i+B\rightleftharpoons j$\\
$j\rightleftharpoons C+k$\\
$k\to D$\\
$C+A\rightleftharpoons m+n$\\
$m+n\to C+C$\\
\vspace{0.3cm}
\textbf{CRN3}
\end{minipage}
\begin{minipage}[b]{0.3\linewidth}
\vspace{1cm}
\centering
$A\to i$\\
$i+B\rightleftharpoons j$\\
$j\to C+k$\\
$k\rightleftharpoons D$\\
$C+A\rightleftharpoons m+n$\\
$m+n\to C+C$\\
\vspace{0.3cm}
\textbf{CRN4}
\end{minipage}
\begin{minipage}[b]{0.3\linewidth}
\vspace{1cm}
\centering
$A\rightleftharpoons i$\\
$i+B\rightleftharpoons j$\\
$j\to C+k$\\
$k\rightleftharpoons D$\\
$C+A\rightleftharpoons m+n$\\
$m+n\to C+C$\\
\vspace{0.3cm}
\textbf{CRN5}
\end{minipage}
\begin{minipage}[b]{0.3\linewidth}
\vspace{1cm}
\centering
$A+g_1\rightleftharpoons i+g_2$\\
$i+B\rightleftharpoons j+g_3$\\
$g_4+j\to C+k+w_1$\\
$g_5+k\rightleftharpoons D+w_2$\\
$C+A\rightleftharpoons m+n$\\
$g_6+m+n\to C+C+w_3$\\
\vspace{0.3cm}
\textbf{CRN6}
\end{minipage}
\caption{An example of CRN implementation. CRN1 represents the ``target'' CRN, i.e.,~the behavior we desire to implement, whereas CRN2-5 are potential ``implementations'' of this target CRN. In these CRNs, the lowercase species are ``intermediate'' species of the implementations, while the uppercase species are ``formal'' species. CRN6 illustrates the way in which ``fuel'' and ``waste'' species may appear in a typical DNA-based system, with fuel species denoted by $g_i$ and waste species denoted by $w_i$.  Removing inert waste species and ever-present fuel species from CRN6 yields CRN5.}\label{fig:crn_implementation}\end{figure}

It is not immediately obvious how to precisely define what makes an implementation correct or incorrect, so it is helpful to informally examine a few examples. Figure \ref{fig:crn_implementation} illustrates various different ways that a proposed implementation can be ``incorrect.'' For instance, one can easily see that CRN2 is clearly not a good implementation of CRN1, because it implements the reaction $A+A+B\to C+D$ in place of $A+B\to C+D$. CRN3 is incorrect in a more subtle way. While a cursory look may not reveal any immediate problem with this implementation, one can check that CRN3 can get from the initial state\footnote{In this paper, we use the notation $\llbrace \cdot \rrbrace$ to denote multisets.}  $\llbrace A,A,B\rrbrace$ to a final state $\llbrace A,B,C\rrbrace$, whereas there is no way to achieve this using reactions from CRN1.\footnote{The pathway is 
($A\to i$, $i+B\to j$, $j\to C +k$, $C+A \to m+n$, $m+n\to C+C$, $C+k\to j$, $j\to i+B$, $i\to A$).
}
CRN4 is incorrect in yet another way. Starting from the initial state $\llbrace A,C\rrbrace$, one can see that the system will sometimes get ``stuck'' in the state $\llbrace i, C\rrbrace$,
unable to produce $\llbrace C, C\rrbrace$,
with $i$ becoming an intermediate species that is not really ``intermediate.'' Now, CRN5 seems to be free of any such issue, but with what confidence can we declare that it is a correct implementation of CRN1, having seen the subtle ways that an implementation can go wrong? A goal of this paper is to provide a mathematical definition of ``correctness'' of CRN implementations which can be used to test them in practice.

In our further discussions, we will restrict our attention to
implementation CRNs that satisfy the condition that we call
``tidiness.'' Informally stated, tidy CRNs are implementation CRNs
which do not get ``stuck'' in the way that CRN4 got stuck above,  i.e., they always can ``clean up'' intermediate species. This means that any intermediate species that are produced during the evolution of the system can eventually turn back into species of the target CRN. Of course, the algorithm we present in this paper for testing our definition of correctness will also be able to test whether the given implementation is tidy.

Finally, we briefly mention that many CRN implementations also involve what are called ``fuel'' and ``waste'' species, in addition to the already mentioned intermediate species. Fuel species are helper species that are assumed to be always present in the system at fixed concentration, whereas waste species are chemically inert species that sometimes get produced as a byproduct of implemented pathways (see CRN6 of Figure \ref{fig:crn_implementation} or for a more detailed explanation Example \#1 of Section \ref{sec:case}). While our core theory addresses the version of the problem in which there is no fuel or waste species, as we demonstrate in Section \ref{section:discussion}, it can easily be extended to handle the general case with fuel and waste species, using existing tools.

\section{Motivations for a new theory}\label{section:motivation}

To one who is experienced in formal verification, the problem seems to be closely related to various well-studied notions such as reachability, (weak) trace equivalence, (weak) bisimulation, serializability, etc. In this section, we briefly demonstrate why none of these traditional notions seems to give rise to a definition which is entirely satisfactory for the problem at hand.

The first notion we consider is reachability between formal
states~~\cite{Mayr1981,Lipton1976,EN94}.  We call the species that
appear in both the target and the implementation CRNs ``formal,'' to
distinguish them from species that appear only in the implementation
CRN, which we call ``intermediate.'' Formal states are defined to be
states which do not contain any intermediate species. Since we are
assuming that our implementation CRN is tidy, it then makes sense to
ask whether the target CRN and the implementation CRN have the same
reachability when we restrict our attention to formal states only ---
this is an important distinction from the traditional Petri net
reachability-equivalence problem. That is, given some formal state,
what is the set of formal states that can be reached from that state
using reactions from one CRN, as opposed to the other CRN? Do the
target CRN and the implementation CRN give rise to exactly the same
reachability for every formal initial state? While it is obvious that
any ``correct'' implementation must satisfy this condition, it is also
easy to see that this notion is not very strong. For example, consider
the target CRN $\{A\to B,\ B\to C,\ C\to A\}$ and the implementation
CRN $\{A\to i,\ i\to C,\ C\to j,\ j\to B,\ B\to k,\ k\to A\}$. The two CRNs are implementing opposite behaviors in the sense that
starting from one $A$ molecule, the target CRN will visit formal states in the clockwise
order $\llbrace A\rrbrace, \llbrace B\rrbrace, \llbrace C\rrbrace, \llbrace A\rrbrace, \llbrace B\rrbrace, \llbrace C\rrbrace, \ldots$, whereas the implementation CRN will visit formal states in the counter-clockwise order $\llbrace A\rrbrace, \llbrace C\rrbrace, \llbrace B\rrbrace, \llbrace A\rrbrace, \llbrace C\rrbrace, \llbrace B\rrbrace, \ldots$. Nonetheless, they still give rise to the same reachability between purely formal states.

% CT: edits to include new examples (one with delayed choice, another with deadlock)
Trace equivalence \cite{EN94,PEM99} is another notion of equivalence that is often found in formal verification literature.
To our knowledge, it has not been applied in the context of CRN equivalence.
We interpret its application in this context as follows. 
\textit{Weak} trace equivalence requires that it should be possible to ``label'' the reactions of the implementation CRN to be either a reaction of the target CRN or a ``null'' reaction.
This labeling must be such that for any formal initial state, any sequence of reactions that can take place in the target CRN should also be able to take place in the implementation CRN and vice versa, up to the interpretation specified by the given labeling.
However, it turns out to be an inappropriate notion in our setting.
For example, consider the target CRN $\{A\rightleftharpoons B, B\rightleftharpoons C, C\rightleftharpoons A\}$ and the implementation CRN $\{A\rightleftharpoons i, B\rightleftharpoons i, C\rightleftharpoons i\}$.
The dynamics of the implementation appear correct since each reaction of the target CRN can be simulated in the implementation CRN in the obvious way by exactly two reactions:
the first reaction consumes the reactant and produces an intermediate species $i$ while the second reaction consumes $i$ and produces the intended formal species.
However, these CRNs are \emph{not} (weak-)trace equivalent.
Consider that every reaction of the implementation CRN must be labeled by one of the six formal reactions (since the implementation CRN also consists of six reactions) and none can be labeled as a ``null'' reaction.
Since any initial reaction of the implementation CRN must begin in a formal state, and since there are only three reactions that can occur from one of the three formal states, then any trace of the target CRN that begins with one of the \emph{other} three possible reactions cannot be simulated by the implementation CRN.
Consider a second example with target CRN $\{A\rightleftharpoons B, B\to C, C\to A\}$ and implementation CRN $\{A\rightleftharpoons B, B\to j, j\to C, C\to A, C\to \emptyset, \emptyset\rightleftharpoons i\}$ where the implementation reactions $\{j\to C, C\to \emptyset, \emptyset\to i, i\to \emptyset\}$ are labeled as ``null'' and the other reactions are labeled in the obvious way that is consistent with formal species names.
The implementation CRN exemplifies a common shortcoming of trace equivalence: inability to distinguish the two systems with respect to \textit{deadlock}.
In our example the implementation CRN \textit{can in principle} simulate all finite and infinite traces of the target CRN, but once the first ``null'' reaction $C\to \emptyset$ occurs then only ``null'' reactions can follow.
In essence, the implementation CRN can become ``stuck'' whereas the target CRN cannot.
While (weak-)trace equivalence cannot distinguish based on deadlock conditions as in our second example, other equivalence notions such as bisimulation can.

Bisimulation{~\cite{milner1989,buchholz2008}}  is perhaps the most influential notion of equivalence in state transition systems such as CRNs, Petri nets, or concurrent systems{~\cite{Heiner2008,SS00,DK05}}.
A notion of CRN equivalence based on the idea of weak bisimulation is explored in detail in \cite{Qing,Johnson2016}, and indeed it proves to be much more useful than the above two notions. For bisimulation equivalence of CRNs, each intermediate species is ``interpreted'' as some combination of formal species, such that in any state of the implementation CRN, the set of possible next non-trivial reactions is exactly the same as it would be in the formal CRN. 
(Here, a ``trivial'' reaction is one where the interpretation of the reactants is identical to the interpretation of the products.)
However, one potential problem of this approach is that it demands a way of interpreting every intermediate species in terms of formal species. Therefore, if we implement the target CRN $\{A\to B,\ A\to C,\ A\to D\}$ as $\{A\to i,\ i\to B,\ i\to C,\ A\to j,\ j\to D\}$, we cannot apply this bisimulation approach because the intermediate $i$ cannot be interpreted to be any of $A$, $B$, or $C$. Namely, calling it $A$ would be a bad interpretation because $i$ can never turn into $D$. Calling it $B$ would be bad because $i$ can turn into $C$ whereas $B$ should not be able to turn into $C$. For the same reason calling it $C$ is not valid either.
%(Similar reasoning demonstrates why bisimulation cannot be applied to the first example of the previous paragraph on the shortcomings of trace equivalence.)

Perhaps this example deserves closer attention. We name this type of phenomenon the ``delayed choice'' phenomenon, to emphasize the point that when $A$ becomes $i$, although it has committed to becoming either $B$ or $C$ instead of $D$, it has delayed the choice of whether to become $B$ or $C$ until the final reaction takes place.  
This is the same phenomenon occurring in the first example given when discussing \hbox{(weak-)trace} equivalence.
Neither \hbox{(weak-)trace} equivalence nor bisimulation can be applied in systems that exhibit ``delayed choice''.
There are two reasons that the phenomenon is interesting; firstly, there may be a sense in which it is related to the efficiency of the implementation, because the use of delayed choice may allow for a smaller number of intermediate species in implementing the same CRN. Secondly, this phenomenon actually does arise in actual systems, as presented in \cite{Grun}.

We note an important distinction between the various notions of
equivalence discussed here and those found in the Petri net
literature.  Whereas two Petri nets are compared for
(reachability/trace/bisimulation)-equivalence for a particular initial
state~\cite{PEM99}, we are concerned about the various notions of
equivalence of two CRNs for \textit{all} initial states.  This
distinction may limit the applicability of common verification
methodologies and software tools~\cite{Holzmann1997,BM2010}, since the
set of initial states is by necessity always infinite (and the set of
reachable states from a particular initial state may also be
infinite).  Finally, we note that \cite{Lakin} proposes yet another
notion of equivalence based on serializability from database and
concurrency theory.  The serializability result works on a class of
implementations that are ``modular''.  Formal reactions are
\textit{encoded} by a set of implementation reactions and species.
Roughly speaking, modular implementations ensure that each formal
reaction has a unique and correct encoding that does not
``cross-talk'' with the encodings of other formal reactions.  In
general, this results in a one-to-one mapping between formal reactions
and their encodings.  Implementation CRNs satisfying the formal
modularity definitions of \cite{Lakin} will correctly emulate their
target CRN.  However, this class of implementation CRNs precludes those
that utilize ``delayed choice''.  Interestingly, when restricted to
``modular'' implementations, the notion of serializability and our
notion of pathway decomposition have a close
correspondence.

Our approach (originally developed in \cite{S11}) differs from any of the above in that we ignore the target CRN and pay attention only to the implementation CRN. Namely, we simply try to infer what CRN the given implementation would look like in a hypothetical world where we cannot observe the intermediate species. We call this notion ``formal basis.'' We show that not only is the formal basis unique for any valid implementation, but it also has the convenient property that a CRN that does not have any intermediate species has itself as its formal basis. This leads us to a simple definition of CRN equivalence; we can declare two CRNs to be equivalent if and only if they have the same formal basis. Therefore, unlike trace equivalence or weak bisimulation \cite{Qing,Johnson2016}, our definition is actually an equivalence relation and therefore even allows for the comparison of an implementation with another implementation.

\section{Theory}

\subsection{Overview}\label{section:theory_overview}

In previous sections we saw that a reaction which is a single step in the target CRN gets implemented as a pathway of reactions which involves intermediate species whose net effect only changes the number of ``formal'' species molecules. For instance, the pathway $A\to i, i+B\to j, j\to C+k, k\to D$ involves intermediate molecules $i$, $j$, and $k$ but the net effect of this pathway is to consume $A$ and $B$ and produce $C$ and $D$. In this sense this pathway may be viewed as an implementation of $A+B\to C+D$.

In contrast, we will not want to consider the pathway $A\to i, i\to B, B\to j, j\to C$ to be an implementation of $A\to C$, even though its net effect is to consume $A$ and produce $C$. Intuitively, the reason is that this pathway, rather than being an indivisible unit, looks like a composition of smaller unit pathways each implementing $A\to B$ and $B\to C$.

The core idea of our definition, which we call \textit{pathway decomposition}, is to identify all the pathways which act as indivisible units in the above sense. The set of these ``indivisible units'' is called the formal basis of the given CRN. If we can show that all potential pathways in the CRN can be expressed as compositions of these indivisible units, then that will give us ground to claim that this formal basis may be thought of as the target CRN that the given CRN is implementing.

\subsection{Basic definitions}\label{section:definitions}
The theory of pathway decomposition will be developed with respect to a chosen set $\mathbb F$ of species called the {\bf formal species}; all other species will be {\bf intermediate species}.  All the definitions and theorems below should be implicitly taken to be with respect to the choice of $\mathbb F$. As a convenient convention, we use upper case and lower case letters to denote formal and intermediate chemical species, respectively.

\begin{definition}
A \textbf{state} is a multiset of species. If every species in a state $S$ is a formal species, then $S$ is called a \textbf{formal state}. In this paper we will use $+$ and $-$ to denote multiset sum and multiset difference respectively, e.g., $S+T$ will denote the sum of two states $S$ and $T$.
\end{definition}
\begin{definition}
If $S$ is a state, $\mbox{Formal}(S)$ denotes the multiset we obtain by removing all the intermediate species from $S$.
\end{definition}
\begin{definition}
A \textbf{reaction} is a pair of multisets of species $(R, P)$ and it is \textbf{trivial} if $R=P$. Here, $R$ is called the set of \textbf{reactants} and $P$ is called the set of \textbf{products}. We say that the reaction $(R,P)$ \textbf{can occur} in the state $S$ if $R\subseteq S$. If both $R$ and $P$ are formal states, then $(R,P)$ is called a \textbf{formal reaction}. If $r=(R,P)$, we will sometimes use the notation $\bar r$ to denote the reverse reaction $(P,R)$.
\end{definition}
\begin{definition}
If $(R,P)$ is a reaction that can occur in the state $S$, we write $S\oplus (R,P)$ to denote the resulting state $S- R+ P$. As an operator, $\oplus$ is left-associative.
\end{definition}
\begin{definition}
A \textbf{CRN} is a (nonempty) set of nontrivial reactions. A CRN that contains only formal reactions is called a \textbf{formal CRN}.
\end{definition}
\begin{definition}
A \textbf{pathway} $p$ of a CRN $\mathcal C$ is a (finite) sequence of reactions $(r_1,\ldots, r_k)$ with $r_i\in \mathcal C$ for all $i$. We say that a pathway can occur in the state $S$ if all its reactions can occur in succession starting from $S$. Note that given any pathway, we can find a unique minimal state from which the pathway can occur. We will call such state the \textbf{minimal initial state}, or simply the \textbf{initial state} of the pathway. Correspondingly, the \textbf{final state} of a pathway will denote the state $S\oplus r_1\oplus r_2\oplus\cdots\oplus r_k$ where $S$ is the (minimal) initial state of the pathway. If both the initial and final states of a pathway are formal, but not necessarily the intermediate states, it is called a \textbf{formal pathway}. A pathway is called \textbf{trivial} if its initial state equals its final state. In this paper, we will write $p+q$ to denote the concatenation of two pathways $p$ and $q$.
\end{definition}

To absorb these definitions, we can briefly study some examples. Consider the chemical reaction $2A+B\to C$. According to our definitions, this will be written $(\llbrace A,A,B\rrbrace,\llbrace C\rrbrace)$. Here, $\llbrace A,A,B\rrbrace$ is called the reactants and $\llbrace C\rrbrace$ is called the products, just as one would expect. Note that this reaction can occur in the state $\llbrace A,A,A,B,B\rrbrace$ but cannot occur in the state $\llbrace A,B,C,C,C,C\rrbrace$ because the latter state does not have all the required reactants. If the reaction takes place in the former state, then the resulting state will be $\llbrace A,B,C\rrbrace$ and thus we can write $\llbrace A,A,A,B,B\rrbrace\oplus(\llbrace A,A,B\rrbrace,\llbrace C\rrbrace)=\llbrace A,B,C\rrbrace$. In this paper, although we formally define a reaction to be a pair of multisets, we will interchangeably use the chemical notation whenever it is more convenient. For instance, we will often write $2A+B\to C$ instead of $(\llbrace A,A,B\rrbrace,\llbrace C\rrbrace)$.

Note that we say that a pathway $p=(r_1,r_2,\ldots,r_k)$ can occur in the state $S$ if $r_1$ can occur in $S$, $r_2$ can occur in $S\oplus r_1$, $r_3$ can occur in $S\oplus r_1\oplus r_2$, and so on. For example, consider the pathway that consists of $2A+B\to C$ and $B+C\to A$. This pathway cannot occur in the state $\llbrace A,A,B\rrbrace$ because even though the first reaction can occur in that state, the resulting state after the first reaction, which is $\llbrace C\rrbrace$, will not have all the reactants required for the second reaction to occur. In contrast, it is easy to see that this pathway can occur in the state $\llbrace A,A,B,B\rrbrace$, which also happens to be its minimal initial state.

We also point out that we cannot directly express a reversible reaction in this formalism. Thus, a reversible reaction will be expressed using two independent reactions corresponding to each direction, e.g., $A\rightleftharpoons B$ will be expressed as two reactions: $A\to B$ and $B\to A$.

Before we proceed, we formally define the notion of tidiness which we informally introduced in \mbox{Section \ref{section:introduction}}.

\begin{definition}
Let $p$ be a pathway with a formal initial state and $T$ its final state. Then, a (possibly empty) pathway $p'=(r_1,\ldots,r_k)$ is said to be a \textbf{closing pathway} of $p$ if $p'$ can occur in $T$ and $T\oplus r_1\oplus\cdots\oplus r_k$ is a formal state. A CRN is \textbf{weakly tidy} if every pathway with a formal initial state has a closing pathway.
\end{definition}
As was informally explained before, this means that the given CRN is
always capable of cleaning up all the intermediate species. For
example, the CRN $\{A\to i,\ i+B\to C\}$ will not be weakly tidy because
if the system starts from the state $\llbrace A\rrbrace$, it can transition to the
state $\llbrace i\rrbrace$ and become ``stuck'' in a non-formal state: there does
not exist a reaction to convert the intermediate species $i$ back into
some formal species.

For a more subtle example, let us consider the CRN $\{A\to i+B,\ i+B\to B\}$, which is weakly tidy according to the definition as stated above. In fact, it is easy to see that this implementation CRN will never get stuck when it is operating by itself, starting with any formal initial state. However, this becomes problematic when we begin to think about composing different CRNs. Namely, when intermediate species require other formal species in order to get removed, the implementation CRN may not work correctly if some other formal reactions are also operating in the system. For instance, if the above implementation runs in an environment that also contains the reaction $B\to C$, then it is no longer true that the system is always able to get back to a formal state.

This is not ideal because the ability to compose different CRNs, at
least in the case where they do not share any intermediate species, is
essential for CRN implementations to be useful. To allow for this type
of composition, and more importantly to allow for the proofs of Theorems in Section \ref{sec:modular} and to make the algorithm defined in Section \ref{sec:algorithm} tractable, we define a stronger notion of tidiness which is preserved under such composition.

\begin{definition}
A closing pathway is \textbf{strong} if its reactions do not consume any formal species.
A CRN is \textbf{strongly tidy} if every pathway with a formal initial state has a strong closing pathway.
\end{definition}

In the rest of the paper, unless indicated otherwise, we will simply say tidiness to mean strong tidiness. Similarly, we will simply say closing pathway to mean strong closing pathway. For some examples of different levels of tidiness, see Figure \ref{fig:tidiness}.

\begin{figure}[!h]
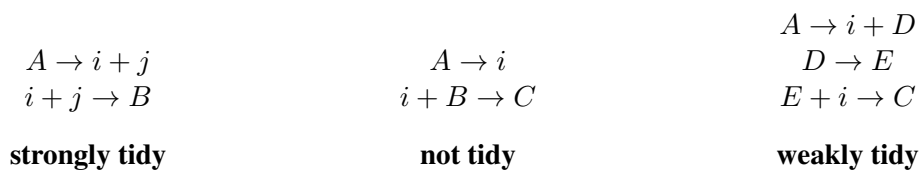

\centering
\vspace{0.7cm}
\begin{minipage}[b]{0.3\linewidth}
\centering
$A\to i + j$\\
$i+j\to B$ \\
\vspace{0.3cm}
\textbf{strongly tidy}\\
\end{minipage}
\begin{minipage}[b]{0.3\linewidth}
\centering
$A\to i$\\
$i+B\to C$\\
\vspace{0.3cm}
\textbf{not tidy}\\
\end{minipage}
\begin{minipage}[b]{0.3\linewidth}
\centering
$A\to i + D$\\
$D\to E$\\
$E+i\to C$\\
\vspace{0.3cm}
\textbf{weakly tidy}\\
\end{minipage}
\caption{Some examples of tidy and non-tidy CRNs}\label{fig:tidiness}
\end{figure}

\subsection{Pathway decomposition}\label{section:theory_pd}

Now we formally define the notion of pathway decomposition. Following our intuition from Section \ref{section:theory_overview}, we first define what it means to implement a formal reaction.

\begin{definition}\label{def:regularity}
Consider a pathway $p=(r_1,\ldots,r_k)$ and let $S_i= S\oplus r_1\oplus\cdots\oplus r_i$, so that $S_0, S_1,\ldots, S_k$ are all the states that $p$ goes through.  Then, $p$ is  \textbf{regular} if there  exists a \textbf{turning point} reaction $r_j=(R',P')$ such that $\mbox{Formal}(S_i)\subseteq S$ for all $i<j$, $\mbox{Formal}(S_i)\subseteq T$ for all $i\geq j$, and $\mbox{Formal}(S_{j-1}-R')=\emptyset$. 
\end{definition}

\begin{definition}\label{def:implement}
We say that a pathway $p=(r_1,\ldots,r_k)$ \textbf{implements} a formal reaction $(R,P)$ if it is regular and $R$ and $P$ are equal to the initial and final states of $p$, respectively.
\end{definition}

While the first condition is self-evident, the second condition needs a careful explanation. It asserts that there should be a point in the pathway prior to which we only see the formal species from the initial state and after which we only see the formal species from the final state. The existence of such a ``turning point'' allows us to interpret the pathway as an implementation of the formal reaction $(R,P)$ where in a sense the real transition is occurring at that turning point.
%Before this turning point, because we only see the formal species from the initial state, we can say that we are still in the state $R$ only happening to not detect some of the molecules in it. For a similar reason, we can claim that we are already in state $P$ after the turning point.
Importantly, this condition rules out such counterintuitive implementations as $(A\to i,\ i\to C+j,\ C+j\to k,\ k\to B)$ or $(A\to i+B,\ i+B\to j+A,\ j+A\to B)$ as implementations of $A\to B$. %The second condition also rules out more serious problems that can arise if for example we implement $A+D\to A+B$ as $(A\to A+i, D+i\to B)$. If there is also the reaction $A\to D$ in the system, a disastrous situation may occur in which $A+D\to A+B$ and $A\to D$ enable the state $\{A\}$ to become $\{B\}$.
%In particular, the following theorem shows that an irregular formal pathway may cause problems if we compose the implementation CRN with other CRNs.
%
Note that a formal pathway that consumes but does not produce formal species prior to its turning point, and thereafter produces but does not consume formal species, is by this definition regular, and this is the ``typical case.''  However our definition also allows additional flexibility; for example, the reactants can fleetingly bind, as $B$ does in the second and third reactions of $(A \to i, i+B \to j, j \to B+i, i+B \to C)$, whose turning point is unambiguously the last reaction. One may also wonder why we need the condition $\mbox{Formal}(S_{j-1}-R') = \emptyset$. This is to prevent ambiguity that may arise in the case of catalytic reactions. Consider the pathway $(A\to i+A, i\to B)$. Without the above condition, both reactions in this pathway qualify as a turning point, but the second reaction being interpreted as the turning point is counterintuitive because the product $A$ gets produced before the turning point. %[[ also comment on why the $\mbox{Formal}(S_{j-1}-R') = \emptyset$ is needed.]]

One problem of the above definition is that it interprets the pathway $(A\to i,\ A\to i,\ i\to B,\ i\to B)$ as implementing $A+A\to B+B$. As explained in Section \ref{section:theory_overview}, we would like to be able to identify such a pathway as a composition of smaller units.

\begin{definition}
We say that a pathway $p$ can be \textbf{partitioned} into two pathways $p_1$ and $p_2$ if $p_1$ and $p_2$ are subsequences of $p$ (which need not be contiguous, but must preserve order) and every reaction in $p$ belongs to exactly one of $p_1$ and $p_2$. Equivalently, we can say $p$ is formed by \textbf{interleaving} $p_1$ and $p_2$.
\end{definition}

\begin{definition}
A formal pathway $p$ is \textbf{decomposable} if $p$ can be partitioned into $p_1$ and $p_2$ that are each formal pathways. A nonempty formal pathway that is not decomposable is called \textbf{prime}.
\end{definition}
%  For clarity, note that when a sequence is partitioned into two subsequences, no reordering is allowed.
For example, consider the formal pathway $p=(A\to i,\ B\to i,\ i\to C,\ i\to D,\ D\to j,\ j\to E)$. This pathway is not prime because it can be decomposed into two formal pathways $p_1=(A\to i,\ i\to C)$ and $p_2=(B\to i,\ i\to D,\ D\to j,\ j\to E)$. Note that within each of the two subsequences, reactions must appear in the same order as in the original pathway $p$. In this example, $p_1$ is already a prime pathway after the first decomposition, whereas $p_2$ can be further decomposed into $(B\to i,\ i\to D)$ and $(D\to j,\ j\to E)$. In this manner, any nonempty formal pathway can eventually be decomposed into one or more prime pathways. Note that such a decomposition may not be unique, e.g., $p$ can also be decomposed into $(A\to i,\ i\to D)$, $(B\to i,\ i\to C)$, and $(D\to j,\ j\to E)$.
  
\begin{figure}[!b]
\centering
\vspace{0.5cm}
\includegraphics[width=\textwidth]{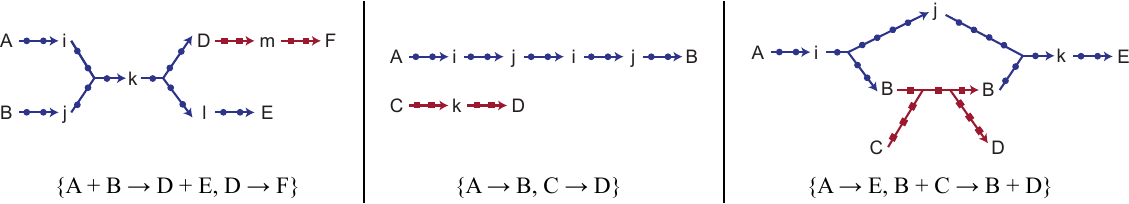}
\caption {Three examples of decomposable formal pathways and the formal bases of their corresponding CRNs. The partition of reactions is marked by lines of different types and colors.  In the right most example, the decomposed pathway denoted by blue lines with circles (which shows up as $A\to E$ in the formal basis) is not regular, and therefore pathway decomposition equivalence does not apply.}\label{fig:decomposability}
\end{figure}

\begin{definition}
The set of prime pathways in a given CRN is called the \textbf{elementary basis} of the CRN. The \textbf{formal basis} is the set of $(\mbox {initial state}, \mbox{final state})$ pairs of the pathways in the elementary basis.
\end{definition}

Note that the elementary basis and/or the formal basis can be either finite or infinite. The elementary basis may contain trivial pathways, and the formal basis may contain trivial reactions.

\begin{definition}
A CRN is \textbf{regular} if every prime pathway implements some formal reaction (in particular, it must have a well-defined turning point reaction as defined in Definition \ref{def:regularity}). Equivalently, a CRN is regular if every prime pathway is regular.
\end{definition}

\begin{definition}
Two tidy and regular CRNs are said to be \textbf{pathway decomposition equivalent} if their formal bases are identical, up to addition or removal of trivial reactions.
\end{definition}

For clarity, we remind the reader here that the definitions and theorems in this section are implicitly taken to be with respect to the choice of $\mathbb F$. In particular, this means that each choice of $\mathbb F$ gives rise to a different pathway decomposition equivalence relation. For instance, CRNs $\{A\to i,\ i\to C\}$ and $\{A\to C\}$ are clearly pathway decomposition equivalent with respect to the conventional choice of $\mathbb F$, which contains exactly those species named with upper case letters, but if e.g.~we defined $\mathbb F'=\mathbb F\cup\{i\}$, these two CRNs would not be pathway decomposition equivalent with respect to $\mathbb F'$.

\subsection{Theorems}\label{section:theory_theorems}
\subsubsection{Properties}
It is almost immediate that pathway decomposition equivalence satisfies many nice properties, some of which are expressed in the following theorems. 
\begin{theorem}\label{thm:equivalence}
For any fixed choice of $\mathbb F$, pathway decomposition equivalence with respect to $\mathbb F$ is an equivalence relation, i.e., it satisfies the reflexive, symmetric, and transitive properties.
\end{theorem}
\begin{theorem}
If $\mathcal C$ is a formal CRN, its formal basis is itself.
\end{theorem}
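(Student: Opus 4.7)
The plan is to show two containments: every reaction of $\mathcal C$ appears in the formal basis, and conversely every element of the formal basis corresponds to some reaction of $\mathcal C$. Since a reaction $r = (R,P)$ in $\mathcal C$ is already formal by hypothesis, the length-one pathway $(r)$ has initial state $R$ and final state $P$, both formal. So $(r)$ is a formal pathway, and since it contains only a single reaction it cannot be partitioned into two nonempty subsequences at all. Hence $(r)$ is prime, lies in the elementary basis, and contributes $(R,P) = r$ to the formal basis.

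For the reverse containment, I would argue that in a formal CRN \emph{every} pathway of length at least two is decomposable. Let $p = (r_1,\ldots,r_k)$ with $k \geq 2$ be any formal pathway in $\mathcal C$. The key observation is that since every reaction in $\mathcal C$ is formal, the minimal initial state of \emph{any} subsequence of $p$ is a multiset of formal species, and every intermediate state reached by executing a subsequence from its minimal initial state is again a formal state. Thus the partition into $(r_1)$ and $(r_2,\ldots,r_k)$ yields two nonempty subsequences that are each formal pathways in their own right. By definition $p$ is decomposable, and hence not prime. Consequently the elementary basis consists exactly of the length-one pathways $(r)$ for $r \in \mathcal C$, and the formal basis equals $\mathcal C$ itself.

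The main subtlety to handle carefully is the claim that a subsequence of reactions from a formal CRN is always executable from its own minimal initial state --- this needs to be justified from the definition of pathway (the minimal initial state always exists as the componentwise maximum of ``deficits'' along the sequence), but there is no genuine difficulty since formality of the underlying reactions guarantees that no intermediate species ever appear. I would also remark briefly that the argument shows the elementary basis of a formal CRN contains no trivial pathways (because $\mathcal C$ is required to contain only nontrivial reactions), so no ``up to trivial reactions'' caveat is needed in the conclusion.
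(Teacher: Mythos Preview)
Your argument is correct and is precisely the natural spelling-out of what the paper merely flags as ``almost immediate'' without giving an explicit proof. The two containments you establish --- that each $r \in \mathcal C$ yields a prime length-one formal pathway, and that in a formal CRN any formal pathway of length $\geq 2$ decomposes as $(r_1)$ and $(r_2,\ldots,r_k)$ --- are exactly what is needed, and your justification that both pieces are formal pathways (because only formal species can ever appear when every reaction is formal) is sound.

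One small quibble with your closing remark: your claim that the elementary basis contains no trivial pathways overlooks the empty pathway, which (under the paper's definitions) is vacuously formal, cannot be partitioned into two nonempty subsequences, and hence is prime; it would contribute the trivial reaction $(\emptyset,\emptyset)$ to the formal basis. Whether empty pathways are admitted is a convention the paper does not fully pin down, and in any case the theorem statement is plainly intended modulo such trivialities, so this does not affect the substance of your proof.
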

\begin{corollary}
If $\mathcal C_1$ and $\mathcal C_2$ are formal CRNs, they are pathway decomposition equivalent if and only if $\mathcal C_1=\mathcal C_2$, up to removal or addition of trivial reactions.
\end{corollary}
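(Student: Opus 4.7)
The plan is to reduce the corollary directly to the preceding theorem that a formal CRN is its own formal basis. Both the forward and backward directions should then follow with only minor bookkeeping about trivial reactions and a quick verification that formal CRNs satisfy the tidy and regular prerequisites that pathway decomposition equivalence requires.

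First I would verify these prerequisites. Every formal CRN is strongly tidy, since all reachable states are already formal and so the empty pathway serves as a strong closing pathway for any pathway with a formal initial state. Every formal CRN is also regular: in a formal CRN every state is formal, so any formal pathway of length at least two can be split (for instance, into its first reaction and the remainder) into two nonempty formal sub-pathways, meaning the prime formal pathways are precisely the single-reaction pathways (together with the vacuously prime empty pathway). A single-reaction pathway $(r)$ with $r=(R,P)$ is regular with turning point $r$ itself: the conditions $\mbox{Formal}(S_0)\subseteq R$ and $\mbox{Formal}(S_1)\subseteq P$ hold trivially, and $\mbox{Formal}(S_0-R)=\mbox{Formal}(\emptyset)=\emptyset$.

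The two directions of the corollary are then immediate. For the backward direction, if $\mathcal{C}_1$ and $\mathcal{C}_2$ agree up to addition or removal of trivial reactions, then by the theorem their formal bases (each being itself) also agree up to trivial reactions, so they are pathway decomposition equivalent by definition. For the forward direction, if they are pathway decomposition equivalent, then by definition their formal bases agree up to trivial reactions, and again by the theorem those formal bases are just $\mathcal{C}_1$ and $\mathcal{C}_2$, so $\mathcal{C}_1=\mathcal{C}_2$ up to trivial reactions. I do not anticipate any real obstacle; the only point requiring care is that the empty pathway vacuously counts as prime and contributes the trivial reaction $(\emptyset,\emptyset)$ to the formal basis, whereas a CRN by definition contains only nontrivial reactions --- the ``up to addition or removal of trivial reactions'' clause in the corollary's statement is exactly what absorbs this discrepancy.
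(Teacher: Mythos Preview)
Your proposal is correct and follows the same approach the paper has in mind: the corollary is stated immediately after Theorem~3.2 with no explicit proof, so the intended argument is precisely the reduction to ``a formal CRN is its own formal basis'' that you carry out. Your extra care in explicitly verifying that formal CRNs are strongly tidy and regular (so that the definition of pathway decomposition equivalence applies) and in noting how the ``up to trivial reactions'' clause absorbs the empty-pathway edge case is a reasonable elaboration of what the paper leaves implicit.
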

\begin{theorem}
Any formal pathway of $\mathcal C$ can be generated by interleaving one or more prime pathways of $\mathcal C$.
\end{theorem}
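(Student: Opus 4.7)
The plan is to prove this by strong induction on the length $k$ of the formal pathway $p = (r_1,\ldots,r_k)$. The key observation is that the definition of decomposability already hands us exactly what we need: a partition of $p$ into two (not necessarily contiguous) subsequences that are each formal pathways. Recursively decomposing each piece until all pieces are prime yields the desired collection of prime formal pathways, and the partition structure we track along the way is precisely an interleaving.

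For the base case, any formal pathway of length $0$ or $1$ is trivially prime, since it admits no partition into two nonempty subsequences, so it is an interleaving of itself (one prime formal pathway). For the inductive step, assume the claim for every formal pathway of length less than $k$, and let $p$ have length $k \geq 2$. If $p$ is already prime, we are done. Otherwise $p$ is decomposable, so by definition there is an index set $I \subsetneq \{1,\ldots,k\}$ with $\emptyset \neq I$ such that the subsequence $p_1 := (r_i)_{i \in I}$ and its complement $p_2 := (r_i)_{i \notin I}$ are both formal pathways. Both have length strictly less than $k$, so by the inductive hypothesis each $p_j$ is obtainable by interleaving some collection of prime formal pathways $\pi_{j,1},\ldots,\pi_{j,m_j}$ of $\mathcal C$. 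Composing the two interleavings via the index set $I$ expresses $p$ itself as an interleaving of the combined collection $\{\pi_{j,\ell}\}$, completing the induction.

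There are two small technical points to handle carefully. First, although $p_1$ and $p_2$ are subsequences of $p$, we must check that each, viewed on its own with its own minimal initial state, really is a valid pathway (i.e., each reaction can occur in succession); this is part of what the definition of decomposability asserts when it calls them formal pathways, so no additional argument is needed. Second, I should be explicit about what "interleaving" means: the partition structure witnesses $p$ as a shuffle of $p_1$ and $p_2$ preserving the internal order of each, and shuffles compose associatively, so the nested interleavings from the inductive hypothesis flatten into a single interleaving of all the primes. Termination of the recursion is guaranteed because $|p_1|, |p_2| \geq 1$ forces both to have length strictly less than $k$.

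The main obstacle, modest as it is, is purely bookkeeping: carefully tracking the index sets when composing the two sub-interleavings so that one can verify the resulting shuffle of $\{\pi_{j,\ell}\}$ literally reproduces $p$ as a sequence. This is mechanical and introduces no new conceptual content beyond the inductive decomposition itself. Note that the theorem does not claim uniqueness — and indeed, as the definition of formal basis explicitly remarks, the decomposition of a formal pathway into primes need not be unique — so I do not need to worry about canonicity of the chosen partition $I$ at each step.
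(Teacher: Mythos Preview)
Your proposal is correct and is exactly the natural argument. The paper in fact does not give an explicit proof of this theorem at all: it is listed among several properties described as ``almost immediate'' from the definitions, and your strong induction on pathway length is the obvious way to spell out that immediacy.
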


It is perhaps worth noting here that the decomposition of a formal
pathway may not always be unique. For example, the pathway $(A\to i,\
B\to i,\ i\to C,\ i\to D)$ can be decomposed in two different ways:
$(A\to i,\ i\to C)$ and $(B\to i,\ i\to D)$, and $(A\to i,\ i\to D)$
and $(B\to i,\ i\to C)$. Pathway decomposition differs from other
notions such as (weak) bisimulation or (weak) trace equivalence in
that it allows such degeneracy of interpretations. We note that such
degeneracy, which is closely related to the previously mentioned
delayed choice phenomenon, may permit a more efficient
implementation of a target CRN in terms of the number of species or
reactions used in the implementation CRN. For example, if we wish to
implement the formal CRN consisting of the twelve reactions $A\rightleftharpoons B$,
$A\rightleftharpoons C$, $A\rightleftharpoons D$, $B\rightleftharpoons
C$, $B\rightleftharpoons D$ and $C\rightleftharpoons D$, it may be
more efficient to implement it as the following eight reactions: $A\rightleftharpoons i$,
$B\rightleftharpoons i$, $C\rightleftharpoons i$ and
$D\rightleftharpoons i$.

The following theorems illuminate the relationship between a tidy and regular CRN $\mathcal C$ and its formal basis $\mathcal F$ and how to better understand this degeneracy of interpretations.
\begin{definition}
Let $\mathcal C$ be a tidy and regular CRN and $\mathcal F$ its formal basis. Suppose $p=(r_1,\ldots,r_k)$ is a formal pathway in $\mathcal C$ (i.e.~$r_i\in \mathcal C$ for all $i$) and $q=(s_1,\ldots,s_l)$ is a formal pathway in $\mathcal F$ (i.e.~$s_i\in \mathcal F$ for all $i$). Then, we say $p$ can be \textbf{interpreted} as $q$ if
\begin{enumerate}
\item $q$ can occur in the initial state $S$ of $p$,
\item $S\oplus r_1\oplus\cdots\oplus r_k=S\oplus s_1\oplus\cdots\oplus s_l$, and
\item there is a decomposition of $p$ such that if we replace a selected turning point reaction of each prime pathway 
with the corresponding reaction of $\mathcal F$ and remove all other reactions, the result is $q$.
\end{enumerate}
\end{definition}
It is clear that the interpretation may not be unique, because there can be many different decompositions of $p$ as well as many different choices of the turning point reactions. For example, consider the pathway $p=(A\to i,\ B\to j,\ i\to C,\ j\to D)$, which has a unique decomposition into pathways $(A\to i,\ i\to C)$ and $(B\to j,\ j\to D)$. In each of these constituent pathways, there are two ways to select a turning point reaction. If we picked $A\to i$ and $j\to D$, the process in condition 3 would yield $(A\to C,\ B\to D)$ as the interpretation of $p$. On the other hand, if we selected $i\to C$ and $B\to j$ as our turning point reactions, $p$ would end up being interpreted as $(B\to D,\ A\to C)$.

One might also wonder why we do not simply require that $p$ and $q$ must have the same initial states. This is because of a subtlety in the concept of the minimal initial state, which arises due to a potential parallelism in the implementation. For instance, consider the pathway $(A\to i,\ B\to A,\ i\to B)$. This pathway, which can be interpreted as two formal reactions $A\to B$ and $B\to A$ occuring in parallel, has initial state $\llbrace A,B\rrbrace$. However, no such parallelism is allowed in the formal CRN and thus this pathway is forced to correspond to either $(A\to B,\ B\to A)$ or $(B\to A,\ A\to B)$, neither of which has initial state $\llbrace A,B\rrbrace$.

\begin{theorem} \label{thm:interpretation}
Suppose $\mathcal C$ is a tidy and regular CRN and $\mathcal F$ is its formal basis.
\begin{enumerate}
\item For any formal pathway $q$ in $\mathcal F$, there exists a formal pathway $p$ in $\mathcal C$ whose initial and final states are equal to those of $q$, such that $p$ can be interpreted as $q$.
\item Any formal pathway $p$ in $\mathcal C$ can be interpreted as some pathway $q$ in $\mathcal F$.
\end{enumerate}
\end{theorem}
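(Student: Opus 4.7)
For Part 1, given $q = (s_1, \ldots, s_l)$ with each $s_i \in \mathcal F$, the plan is to pick for each $i$ a prime formal pathway $\pi_i$ of $\mathcal C$ that implements $s_i$ (existence follows from the definition of $\mathcal F$ together with regularity, which ensures every prime formal pathway implements some formal reaction and so designates a turning point), and take $p$ to be the concatenation $\pi_1 \pi_2 \cdots \pi_l$. A short induction shows that $p$ can run from the minimal initial state $S_q$ of $q$: the prefix $\pi_1$ needs only $R_1 \subseteq S_q$ to start, and after it has run the state agrees with $S_q \oplus s_1$ on formal species; by the validity of $q$ this state contains $R_2$, so $\pi_2$ proceeds, and so on. The same chain of inequalities read in reverse shows that any state in which $p$ runs must contain $S_q$, so $p$ and $q$ share the same minimal initial state; they also share the same final state since $\pi_i$ and $s_i$ have identical net effects. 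Conditions (2) and (3) of the interpretation definition are immediate from the natural decomposition of $p$ into the $\pi_i$'s with each $\pi_i$'s turning point replaced by $s_i$.

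For Part 2, given $p$ in $\mathcal C$, I would invoke the decomposition theorem to write $p$ as an interleaving of prime formal pathways $\pi_1, \ldots, \pi_m$ and use regularity to pick a turning-point reaction inside each $\pi_j$ together with the corresponding implemented formal reaction $s_j \in \mathcal F$. Letting $\sigma$ order the $\pi_j$'s by the position of their turning points in $p$ (ties broken arbitrarily), I would set $q := (s_{\sigma(1)}, \ldots, s_{\sigma(m)})$. Conditions (2) and (3) of the interpretation definition are then automatic: the net effect of $p$ equals that of $q$ pathway-by-pathway, and replacing each $\pi_j$'s turning point by $s_j$ while removing all other reactions (read in $p$-order) reproduces $q$ exactly.

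The substantive obstacle is condition (1): showing that $q$ actually runs from the initial state $S_0$ of $p$. I would prove by induction on $k$ that $T_k := S_0 \oplus s_{\sigma(1)} \oplus \cdots \oplus s_{\sigma(k)}$ contains $R_{\sigma(k+1)}$. The key tool is the accounting identity
\[
\mbox{Formal}(S_p^{(t)}) = \mbox{Formal}(S_0) + \sum_j \bigl(\mbox{Formal}(L_j(t)) - R_j\bigr),
\]
where $L_j(t)$ is the state $\pi_j$ would occupy if run standalone up to the prefix of its own reactions that has appeared in $p$ by position $t$. Applying regularity at position $t = t_{\sigma(k+1)} - 1$ --- specifically that $\mbox{Formal}(L_j) \subseteq P_j$ for $j \in \{\sigma(1),\ldots,\sigma(k)\}$, $\mbox{Formal}(L_j) \subseteq R_j$ for $j \in \{\sigma(k+2),\ldots,\sigma(m)\}$, and $\mbox{Formal}(L_{\sigma(k+1)}) = \mbox{Formal}(R')$ pinned to the turning-point reactants $R'$ by the condition $\mbox{Formal}(S_{j-1} - R') = \emptyset$ --- I expect to derive $R_{\sigma(k+1)} \leq T_k$ after cancelling the formal ``credits'' loaned by pre-turning-point pathways against the deficit held by not-yet-complete post-turning-point ones. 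The hard part is precisely this bookkeeping: the interleaved prime pathways share a common formal pool, and one must carefully handle the case in which pre-turning-point reactions inside $\pi_{\sigma(k+1)}$ itself still carry some formal species at the moment of its turning point, so that the formal reactants of the turning-point reaction are only a strict sub-multiset of $R_{\sigma(k+1)}$.
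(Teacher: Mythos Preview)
Your Part 1 is the same as the paper's (which simply says ``replace each reaction in $q$ with the corresponding prime formal pathway''), only with the initial-state bookkeeping spelled out more fully; that is fine.

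For Part 2, your overall structure---fix a decomposition, pick turning points, let $q$ be the sequence of implemented formal reactions in turning-point order, and then verify condition (1)---is exactly the paper's. Where you diverge is in how you establish condition (1). The paper uses a \emph{hybrid argument}: it defines $p_j$ by replacing the first $j$ turning points by their formal-basis reactions (and deleting the rest of those primes), so $p_0=p$ and $p_l=q$, and then shows that each $p_j$ can occur in the initial state of $p_{j-1}$; this reduces the problem to tracking one prime at a time against a background that is still a concrete pathway. Your route is a \emph{direct global accounting}: the identity $\mbox{Formal}(S_p^{(t)}) = S_0 + \sum_j(\mbox{Formal}(L_j(t))-R_j)$ together with the regularity bounds at $t=t_{\sigma(k+1)}-1$ gives the inequality $T_k\supseteq R_{\sigma(k+1)}$ in one shot.

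Your approach is correct, and the ``hard part'' you flag is not actually an obstacle. The turning-point clause $\mbox{Formal}(S_{j-1}-R')=\emptyset$ says precisely that $\mbox{Formal}(L_{\sigma(k+1)}(t))=\mbox{Formal}(R')$, and since $p$ is a valid pathway we have $R'\subseteq S_p^{(t)}$; hence the residual term $\mbox{Formal}(S_p^{(t)})-\mbox{Formal}(L_{\sigma(k+1)}(t))$ is nonnegative, and the inequality closes. The two proofs buy slightly different things: the hybrid argument avoids signed-multiset arithmetic and keeps every intermediate object a bona fide pathway, while your accounting identity is shorter once the identity is stated and makes transparent exactly which regularity inequalities are being spent where.
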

\begin{proof}
\begin{enumerate}
\item Replace each reaction in $q$ with the corresponding prime pathway of $\mathcal C$.
\item Fix a decomposition of $p$ and pick a turning point for each prime pathway. Replace the turning points with the corresponding formal basis reaction and remove all other reactions. We call the resulting pathway $q$. Then it suffices to show that $q$ can occur in the initial state $S$ of $p$. We show this by a hybrid argument. 

Define $p_j$ to be the pathway obtained by replacing the first $j$ turning points in $p$ by the corresponding formal basis elements and removing all other reactions that belong to those prime pathways. In particular, note that $p_0=p$ and $p_l=q$. We show that $p_{j}$ can occur in the initial state of $p_{j-1}$ for all $j>0$. First, write $p_{j-1}=(r_1,\ldots,r_m)$ and $p_{j}=(r_{i_1},\ldots,r_{i_k},s_j,r_{i_{k+1}},\ldots,r_{i_n})$. 
Then it follows from the definition of a turning point that $\text{Formal}(S\oplus r_{i_1}\oplus\cdots\oplus r_{i_{x-1}})\supseteq\text{Formal}(S\oplus r_1\oplus\cdots\oplus r_{i_x-1})$ for every $1\leq x\leq k$. Therefore $(r_{i_1},\ldots,r_{i_k})$ can occur in $S$. (Note that we need not worry about the intermediate species because $(r_{i_1},\ldots,r_{i_k})$ has a formal initial state.) Moreover, since the definition of a turning point asserts that all the reactants must be consumed at the turning point, it also implies that $\text{Formal}(S\oplus r_{i_1}\oplus\cdots\oplus r_{i_k})\supseteq\text{Formal}(S\oplus r_1\oplus\cdots\oplus r_{t-1} - X + R)$ where $r_t=(X,Y)$ denotes the turning point that is being replaced by $s_j$ in this round and $R$ denotes the reactants of $s_j$. Therefore, $s_j$ can occur in $S\oplus r_{i_1}\oplus \cdots\oplus r_{i_k}$. Finally, it again follows from the definition of a turning point that $\text{Formal}(S\oplus r_{i_1}\oplus\cdots\oplus r_{i_{k}}\oplus s_j\oplus r_{i_{k+1}}\oplus\cdots\oplus r_{i_{x-1}})\supseteq\text{Formal}(S\oplus r_1\oplus\cdots\oplus r_{i_x-1})$ for every $k+1\leq x\leq n$. We conclude that $(r_{i_1},\ldots,r_{i_k},s_j,r_{i_{k+1}},\ldots,r_{i_n})=p'_j$ can occur in $S$.
\end{enumerate}
\end{proof}

It is interesting to observe that tidiness is not actually used in the proof of Theorem \ref{thm:interpretation} above (nor in that of Theorem \ref{thm:reachability} below), so that condition could be removed from the theorem statement.  We retain the tidiness condition to emphasize that this is when the theorem characterizes the behavior of the CRN; without tidiness, a CRN could have many relevant behaviors that take place along pathways that never return to a formal state, and these behaviors would not be represented in its formal basis.

In our final theorem we prove that pathway decomposition equivalence implies formal state reachability equivalence. Note that the converse is not true because $\{A\to B,B\to C,C\to A\}$ is not pathway decomposition equivalent to $\{A\to C,C\to B,B\to A\}$.
\begin{theorem} \label{thm:reachability}
If two tidy and regular CRNs $\mathcal C_1$ and $\mathcal C_2$ are pathway decomposition equivalent, they give rise to the same reachability between formal states.
\end{theorem}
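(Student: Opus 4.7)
The plan is to route reachability between $\mathcal C_1$ and $\mathcal C_2$ through their common formal basis $\mathcal F$, invoking Theorem \ref{thm:interpretation} as the main tool. Specifically, I will show that for any formal states $S$ and $T$, $T$ is reachable from $S$ in $\mathcal C_i$ ($i=1,2$) if and only if $T$ is reachable from $S$ using only reactions from $\mathcal F$. Since pathway decomposition equivalence gives $\mathcal C_1$ and $\mathcal C_2$ the same $\mathcal F$, the two reachability relations must coincide.

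Suppose first that $T$ is reachable from $S$ in $\mathcal C_1$, witnessed by a pathway $p_1$ that can occur in $S$ with $S \oplus p_1 = T$. Let $S_0 \subseteq S$ be the minimal initial state of $p_1$; then the final state of $p_1$ is $T_0 := T - (S - S_0)$, and since $S$, $T$, and $S - S_0$ are formal, so are $S_0$ and $T_0$, making $p_1$ a formal pathway. Theorem \ref{thm:interpretation}(2) then tells me that $p_1$ can be interpreted as some pathway $q$ in $\mathcal F$ that can occur in $S_0$ with $S_0 \oplus q = T_0$; the unused multiset $S - S_0$ simply rides along, so $q$ can occur in $S$ with $S \oplus q = T$. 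Conversely, if $q$ is a pathway in $\mathcal F$ taking $S$ to $T$, Theorem \ref{thm:interpretation}(1) applied to $\mathcal C_2$ (which shares $\mathcal F$) produces a formal pathway $p_2$ in $\mathcal C_2$ whose minimal initial state $Q_0$ and final state agree with those of $q$; since $Q_0 \subseteq S$, the same bookkeeping shows that $p_2$ takes $S$ to $T$ in $\mathcal C_2$. Combined with the symmetric argument swapping $\mathcal C_1$ and $\mathcal C_2$, this yields the claim.

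I do not anticipate any real conceptual obstacle, as Theorem \ref{thm:interpretation} does the heavy lifting. The only point demanding care is the bookkeeping of minimal initial states, which can strictly shrink at each translation step ($p_1 \to q \to p_2$) while the ``true'' starting state $S$ is held fixed; so at each step I must verify that enlarging the initial state by the unused multiset preserves both applicability and the final state. This reduces to the familiar observation that species which do not appear as reactants in any step of a pathway are simply preserved throughout its execution, so that $S \oplus (\cdot) = (S - M) + (M \oplus (\cdot))$ whenever $M$ contains the minimal initial state of the pathway.
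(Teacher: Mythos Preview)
Your proposal is correct and follows essentially the same route as the paper's own proof: both arguments pass reachability through the common formal basis via two applications of Theorem~\ref{thm:interpretation}, then invoke symmetry. Your version is in fact more scrupulous than the paper's about the distinction between the given state $S$ and the (possibly smaller) minimal initial states of $p_1$, $q$, and $p_2$; the paper simply asserts ``the initial and final states of $p'$ are $S$ and $T$'' without tracking the shrinkage you handle explicitly.
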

\begin{proof}
Suppose formal state $T$ is reachable from formal state $S$ in $\mathcal C_1$, i.e.~there is a formal pathway $p$ in $\mathcal C_1$ whose initial state is $S$ and final state is $T$. By Theorem \ref{thm:interpretation}, it can be interpreted as some pathway $q$ consisting of the reactions in the formal basis of $\mathcal C_1$. Since $\mathcal C_1$ and $\mathcal C_2$ have the same formal basis, by another application of Theorem \ref{thm:interpretation}, there exists some formal pathway $p'$ in $\mathcal C_2$ that can be interpreted as $q$. That is, the initial and final states of $p'$ are $S$ and $T$ respectively, which implies that $T$ is reachable from $S$ in $\mathcal C_2$ also. By symmetry between $\mathcal C_1$ and $\mathcal C_2$, the theorem follows.
\end{proof}

\subsubsection{Modular composition of CRNs}\label{sec:modular}
As we briefly mentioned in Section \ref{section:definitions}, it is very important for the usefulness of a CRN implementation that it be able to be safely composed with other CRNs. For instance, consider the simplest experimental setup of putting the molecules of the implementation CRN in the test tube and measuring the concentration of each species over time. In practice, the concentration measurement of species $A$ is typically carried out by implementing a catalytic reaction that uses $A$ to produce fluorescent material. Therefore even this simple scenario already involves a composition of two CRNs, namely the implementation CRN itself and the CRN consisting of the measurement reactions. It is evident that the ability to compose CRNs would become even more essential in more advanced applications.

 In this section, we prove theorems that show that pathway decomposition equivalence is preserved under composition of CRNs, as long as those CRNs do not share any intermediate species.
\begin{theorem}\label{thm:compose_tidiness}
Let $\mathcal C$ and $\mathcal C'$ be two CRNs that do not share any intermediate species. 
Then, $\mathcal C \cup \mathcal C'$ is tidy if and only if both $\mathcal C$ and $\mathcal C'$ are tidy.
%If $\mathcal C$ and $\mathcal C'$ are two tidy CRNs that do not share any intermediate species, then $\mathcal C \cup \mathcal C'$ is also tidy.
\end{theorem}
\begin{proof}
For the forward direction, by symmetry it suffices to show that $\mathcal C$ is tidy. We begin by proving the following lemma.
\begin{lemma}
Let $\mathcal C$ and $\mathcal C'$ be two CRNs that do not share any intermediate species. Let $p$ be any formal pathway in $\mathcal C\cup \mathcal C'$. If we partition $p$ into two pathways $p_1$ and $p_2$ such that $p_1$ is a pathway of $\mathcal C$ and $p_2$ is a pathway of $\mathcal C'$, then each of $p_1$ and $p_2$ is formal.
\end{lemma}
\begin{proof}
Since $\mathcal C$ and $\mathcal C'$ do not share any intermediate species, it follows that all the intermediate species in the initial state of $p_1$ will also show up in the initial state of $p$ and all the intermediate species in the final state of $p_1$ will also show up in the final state of $p$. Hence $p_1$ must be formal. The case for $p_2$ follows by symmetry.
\end{proof}
Now let $p$ be any pathway in $\mathcal C$ with a formal initial state. Since $p$ is also a pathway in $\mathcal C\cup \mathcal C'$, it has a closing pathway $q$ in $\mathcal C\cup \mathcal C'$. Since $s=p+q$ is a formal pathway, we can partition it into $s_1$ and $s_2$ as in the above lemma. In particular, since all the reactions in $p$ belong to $\mathcal C$, we have $s_1=p+q_1$ and $s_2=q_2$ where $q_1$ and $q_2$ are a partition of $q$ such that $q_1$ is a pathway of $\mathcal C$ and $q_2$ is a pathway of $\mathcal C'$. Since $s_1$ is formal by the lemma, $q_1$ is a closing pathway of $p$. Hence, $\mathcal C$ is tidy.

For the reverse direction, suppose $p$ is a pathway of $\mathcal C\cup \mathcal C'$ that has a formal initial state. Since $\mathcal C$ and $\mathcal C'$ do not share intermediate species, we can partition the intermediate species found in the final state of $p$ into two multisets $A$ and $A'$, corresponding to the intermediate species used by $\mathcal C$ and $\mathcal C'$ respectively. Now, if we remove from $p$ all the reactions that belong to $\mathcal C'$ and call the resulting pathway $q$, then the multiset of all the intermediate species found in the final state of $q$ will be exactly $A$. This is because the removed reactions, which belonged to $C'$, cannot consume or produce any intermediate species used by $\mathcal C$. Since $\mathcal C$ is tidy, $q$ has a closing pathway $r$. This time, remove from $p$ all the reactions that belong to $\mathcal C$ and call the resulting pathway $q'$. By a symmetric argument, $q'$ must have a closing pathway $r'$. Now observe that $r+r'$ is a closing pathway for $p$.
\end{proof}
\begin{theorem}\label{thm:regularity}
Let $\mathcal C$ and $\mathcal C'$ be two CRNs that do not share any intermediate species. Then, $\mathcal C \cup \mathcal C'$ is regular if and only if both $\mathcal C$ and $\mathcal C'$ are regular.
%If $\mathcal C$ and $\mathcal C'$ are two regular CRNs that do not share any intermediate species, then $\mathcal C \cup \mathcal C'$ is also regular.
\end{theorem}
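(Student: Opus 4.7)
The plan is to reduce regularity of $\mathcal C\cup\mathcal C'$ to regularity of the individual CRNs by exploiting the disjointness of their intermediate species. Let $p$ be any prime formal pathway in $\mathcal C\cup\mathcal C'$. Split the reactions of $p$, preserving order, into the subsequence $q$ consisting of the reactions that belong to $\mathcal C$ and the subsequence $q'$ consisting of the reactions that belong to $\mathcal C'$. We need to verify that $p$ is regular, i.e.\ that some reaction in $p$ qualifies as a turning point.

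The crux is showing that both $q$ and $q'$ are themselves formal pathways. Because the two CRNs share no intermediate species, reactions in $\mathcal C'$ can neither produce nor consume intermediates of $\mathcal C$, and vice versa. Consequently, whenever a reaction of $\mathcal C$ in $p$ requires a $\mathcal C$-intermediate as a reactant, that molecule must have been produced by an earlier reaction of $\mathcal C$ in $p$, not by any $\mathcal C'$ reaction; hence the minimal initial state of $q$ contains no $\mathcal C$-intermediates, and by construction it contains no $\mathcal C'$-intermediates either, so it is formal. For the final state of $q$: since the final state of $p$ is formal and the $\mathcal C'$-reactions do not alter the $\mathcal C$-intermediate count, the net effect of $q$ alone on $\mathcal C$-intermediates is zero, so the final state of $q$ is also formal. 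The same argument applied symmetrically shows $q'$ is a formal pathway.

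Now invoke primality of $p$: if both $q$ and $q'$ were nonempty, they would give a partition of $p$ into two nonempty formal subpathways, contradicting primality. Hence one of them is empty; without loss of generality $q'=\emptyset$, so every reaction of $p$ lies in $\mathcal C$. Moreover, $p$ is still prime when viewed as a pathway of $\mathcal C$, since any decomposition of $p$ into formal subpathways inside $\mathcal C$ would be an equally valid decomposition inside $\mathcal C\cup\mathcal C'$. Because $\mathcal C$ is regular, this prime formal pathway of $\mathcal C$ implements some formal reaction, i.e.\ it has a turning point satisfying the required conditions. That same reaction serves as a turning point for $p$ in $\mathcal C\cup\mathcal C'$, so $p$ is regular.

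The main obstacle I anticipate is the bookkeeping in the step showing that $q$ and $q'$ are formal pathways, particularly being careful about the definition of the minimal initial state: one must check that no $\mathcal C$-intermediate is ``orphaned'' by the deletion of the $\mathcal C'$-reactions (it is not, because $\mathcal C'$-reactions never created it in the first place), and similarly that no $\mathcal C$-intermediate lingers in the final state of $q$ (it does not, because the final state of $p$ is formal and $\mathcal C'$-reactions are neutral with respect to $\mathcal C$-intermediates). Once this disjointness accounting is done carefully, the rest of the argument is essentially a one-line appeal to primality plus the regularity of the component CRN.
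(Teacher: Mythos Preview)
Your proposal is correct and follows essentially the same approach as the paper: partition $p$ into the $\mathcal C$-reactions $q$ and the $\mathcal C'$-reactions $q'$, observe (via disjointness of intermediates) that both are formal pathways, use primality to force one to be empty, and then invoke regularity of the surviving CRN. Your write-up is in fact more careful than the paper's, which simply asserts ``it is clear that $q$ and $q'$ must both be formal'' and does not explicitly note that $p$ remains prime in the smaller CRN; your expanded bookkeeping on the minimal initial state and the final state of $q$ fills in exactly the details the paper elides.
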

\begin{proof}
For the forward direction, simply observe that any prime pathway $p$ of $\mathcal C$ is also a prime pathway of $\mathcal C\cup \mathcal C'$ and therefore must be regular. Hence, $\mathcal C$ is regular. By symmetry, $\mathcal C'$ is also regular.

For the reverse direction, let $p$ be a prime pathway in $\mathcal C \cup \mathcal C'$. Partition $p$ into two subsequences $q$ and $q'$, which contains all reactions of $p$ which came from $\mathcal C$ and $\mathcal C'$ respectively. Since the two CRNs do not share any intermediate species, it is clear that $q$ and $q'$ must both be formal. Since $p$ was prime, it implies that one of $q$ and $q'$ must be empty. Therefore, $p$ is indeed a prime pathway in either $\mathcal C$ or $\mathcal C'$, and since each was a regular CRN, $p$ must be regular.
\end{proof}
\begin{theorem}\label{thm:compose}
Let $\mathcal C$ and $\mathcal C'$ be two tidy and regular CRNs that do not share any intermediate species, and $\mathcal F$ and $\mathcal F'$ their formal bases respectively. Then the formal basis of $\mathcal C \cup \mathcal C'$ is exactly $\mathcal F\cup \mathcal F'$.
\end{theorem}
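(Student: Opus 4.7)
The plan is to prove the two inclusions $\mathcal F\cup\mathcal F'\subseteq\text{basis}(\mathcal C\cup\mathcal C')$ and $\text{basis}(\mathcal C\cup\mathcal C')\subseteq\mathcal F\cup\mathcal F'$ separately. The workhorse observation, already used in the proof of Theorem \ref{thm:regularity}, is that because $\mathcal C$ and $\mathcal C'$ share no intermediate species, any formal pathway $p$ of $\mathcal C\cup\mathcal C'$ decomposes naturally into the subsequence $q$ of its $\mathcal C$-reactions and the subsequence $q'$ of its $\mathcal C'$-reactions, and each of $q, q'$ is itself a formal pathway. The reason is that neither CRN's reactions can touch the other's intermediates, so the counts of $\mathcal C$-intermediates and $\mathcal C'$-intermediates at the start and end of $p$ --- both zero, since $p$ is formal --- force each of $q$ and $q'$ to individually begin and end at formal states.

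For the first inclusion, I would take any prime formal pathway $p$ in $\mathcal C$, which contributes one element to $\mathcal F$. Clearly $p$ is still a formal pathway in $\mathcal C\cup\mathcal C'$ with the same initial and final states; I need to check that it remains prime in the enlarged CRN. Any decomposition of $p$ into two nonempty formal subsequences within $\mathcal C\cup\mathcal C'$ would consist entirely of $\mathcal C$-reactions (since $p$ contains no others), hence would also be a decomposition in $\mathcal C$, contradicting the primality of $p$ there. So the (initial, final) pair of $p$ belongs to the formal basis of $\mathcal C\cup\mathcal C'$; the symmetric argument handles $\mathcal F'$.

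For the reverse inclusion, I would take a prime formal pathway $p$ in $\mathcal C\cup\mathcal C'$ and apply the splitting observation to write $p$ as an interleaving of formal subsequences $q$ (from $\mathcal C$) and $q'$ (from $\mathcal C'$). Since $p$ is prime, exactly one of $q, q'$ must be empty; without loss of generality assume $q' = \emptyset$, so that $p$ lies entirely in $\mathcal C$. Finally, $p$ must remain prime in $\mathcal C$, since any nontrivial decomposition of $p$ in $\mathcal C$ would be one in $\mathcal C\cup\mathcal C'$ as well. Hence its (initial, final) pair is in $\mathcal F$.

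There is no real obstacle here; the entire argument reduces to the splitting observation plus the fact that primality is stable under adding or removing reactions that a given pathway does not use. The only subtlety worth stating explicitly is that when $q$ is extracted from $p$ as a standalone pathway, its own minimal initial state may differ from that of $p$ (since some formal species used by $\mathcal C$-reactions in $p$ may have been produced by interleaved $\mathcal C'$-reactions), but this is harmless: what we need, and what the disjoint-intermediates hypothesis delivers, is merely that $q$ is a formal pathway of $\mathcal C$ from its own minimal initial state, and analogously for $q'$.
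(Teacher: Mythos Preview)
Your proposal is correct and follows essentially the same approach as the paper: the reverse inclusion uses exactly the splitting argument from Theorem~\ref{thm:regularity} to show that any prime formal pathway of $\mathcal C\cup\mathcal C'$ lies entirely in one of the two CRNs and remains prime there, while the forward inclusion (which the paper simply labels ``trivial'') is spelled out by you in the natural way. Your added remark about the minimal initial state of $q$ possibly differing from that of $p$ is a valid clarification but, as you note, immaterial to the argument.
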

\begin{proof}
Let $p$ be a prime pathway in $\mathcal C \cup \mathcal C'$. By the same argument as in the proof of Theorem \ref{thm:regularity}, $p$ is a prime pathway of either $\mathcal C$ or $\mathcal C'$. Therefore, the formal basis of $\mathcal C \cup \mathcal C'$ is a subset of $\mathcal F\cup \mathcal F'$. The other direction is trivial.
\end{proof}

We note that the ability to compose CRNs has another interesting consequence. Frequently, molecular implementations of CRNs involve intermediate species that are specific to a pathway implementing a particular reaction, such that intermediates that belong to pathways that implement different reactions do not react with each other. This is a strong constraint on the architecture of the implementations that can facilitate their verification. (For instance, it has been observed and used by Lakin et al.~in \cite{Lakin}.) We observe that in such cases Theorem \ref{thm:compose} provides an easier way to find the formal basis of the implementation CRN. Namely, we can partition the CRN into disjoint subsets that do not share intermediate species with one another, find the formal basis of each subset, and then take the union of the found formal bases. For example, if the implementation CRN was $\{A\to i,\ i\to B,\ A\to j,\ j\to C,\ j+C\to k,\ k\to D\}$, then it can be partitioned into CRNs $\{A\to i,\ i\to B\}$ and $\{A\to j,\ j\to C,\ j+C\to k,\ k\to D\}$ such that they do not share intermediate species with each other. It is straightforward to see that the formal bases of these two subsets are $\{A\to B\}$ and $\{A\to C,\ A+C\to D\}$ respectively, so the formal basis of the whole implementation CRN must be $\{A\to B,\ A\to C,\ A+C\to D\}$. Similarly, Theorems \ref{thm:compose_tidiness} and \ref{thm:regularity} ensure that we can test for tidiness and regularity of the implementation CRN by testing tidiness and regularity of each of these subsets. 

\section{Algorithm} \label{sec:algorithm}
In this section, we present a simple algorithm for finding the formal basis of a given CRN. The algorithm can also test tidiness and regularity.

Our algorithm works by enumerating pathways that have formal initial states.
The running time of our algorithm depends on a quantity called maximum width, which can be thought of as the size of the largest state that a prime pathway can ever generate. Unfortunately it is easy to see that this quantity is generally unbounded; e.g., $\{A\to i, i\to i+i, i\to\emptyset \}$ has a finite formal basis $\{A\to \emptyset\}$ but it can generate arbitrarily large states.\footnote{Clearly, there may  also be cases where the formal basis itself is infinite, e.g.~$\{A\to i, i\to i+i, i\to B\}$.} However, since such implementations are highly unlikely to arise in practice, in this paper we focus on the bounded width case. We note that even in the bounded width case it is still nontrivial to come up with an algorithm that finishes in finite time, because it is unclear at what width we can safely stop the enumeration.

\subsection{Exploiting bounded width}
We begin by introducing a few more definitions and theorems.

\begin{definition}
A pathway that has a formal initial state is called \textbf{semiformal}.
\end{definition}
\begin{definition}
A semiformal pathway $p$ is \textbf{decomposable} if $p$ can be partitioned into two nonempty subsequences (which need not be contiguous) that are each semiformal pathways.

It is obvious that this reduces to our previous definition of decomposability if $p$ is a formal pathway.
\end{definition}

\begin{definition}
Let $p=(r_1,\ldots,r_k)$ be a pathway and let $S_i=S\oplus r_1\oplus \cdots \oplus r_i$ where $S$ is the initial state of $p$. The \textbf{width} of $p$ is defined to be $\max_i |S_i|$.
\end{definition}
\begin{definition}
The \textbf{branching factor} of a CRN $\mathcal C$ is defined to be the following value.
\[
\max_{(R,P)\in \mathcal C}{\max\{|R|, |P|\}}
\]
We note that many implementations that arise in practice have small branching factors (e.g.~\cite{SSW10, C09, QSW11}).
\end{definition}

\begin{theorem}\label{prop:lemma}
Suppose that pathway $p$ is obtained by interleaving pathways $p_1,\ldots,p_l$. Let $S$ be the initial state of $p$ and $S_1,\ldots,S_l$ the initial states of $p_1,\ldots,p_l$ respectively. Then, $S\subseteq S_1+S_2+\cdots+S_l$.
\end{theorem}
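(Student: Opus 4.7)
The plan is to show directly that the combined multiset $T := S_1 + S_2 + \cdots + S_l$ is a valid initial state for $p$, i.e., that $p$ can occur when started from $T$. Since $S$ is by definition the \emph{minimal} initial state of $p$, this immediately gives $S \subseteq T$, which is the desired inclusion.

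First I would set up bookkeeping for the interleaving. Write $p = (r_1,\ldots,r_n)$ and, for each index $j \in \{1,\ldots,n\}$, let $\sigma(j) \in \{1,\ldots,l\}$ be the pathway that contributes $r_j$, so that the reactions of $p$ with $\sigma$-value $i$ are exactly the reactions of $p_i$ in their original order. For each $i$ and each $j$, let $k_i(j)$ denote the number of reactions among $r_1,\ldots,r_{j-1}$ that belong to $p_i$; thus $k_i(j)$ counts how far $p_i$ has progressed by the time we are about to apply $r_j$.

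The main step is the following invariant, which I would prove by induction on $j$: after executing $r_1,\ldots,r_{j-1}$ starting from $T$, the resulting state equals
\[
\sum_{i=1}^{l} \bigl( S_i \oplus r^{(i)}_1 \oplus \cdots \oplus r^{(i)}_{k_i(j)} \bigr),
\]
where $r^{(i)}_1,r^{(i)}_2,\ldots$ denote the reactions of $p_i$ in order. The base case $j=1$ is just $T = \sum_i S_i$. The inductive step is a bookkeeping calculation: applying $r_j$ (which is reaction number $k_{\sigma(j)}(j)+1$ of $p_{\sigma(j)}$) subtracts its reactants and adds its products, and only the $i = \sigma(j)$ summand changes, advancing by exactly one reaction. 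Granted the invariant, applicability of $r_j$ reduces to: the intermediate state of $p_{\sigma(j)}$ reached by its first $k_{\sigma(j)}(j)$ reactions contains the reactants of $r_j$. But that is precisely what it means for $p_{\sigma(j)}$ to be executable from $S_{\sigma(j)}$, which holds by assumption. The other summands can only supply additional species and never block the reaction, since multiset containment is preserved under addition of multisets.

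Once this invariant and step-by-step applicability are established, we conclude that $p$ can occur in $T$; by the minimality of $S$ (as recorded in the definition of a pathway's initial state), $S \subseteq T = S_1 + \cdots + S_l$. The main obstacle is conceptual rather than technical: one has to be comfortable with the fact that an interleaving preserves the per-pathway local state exactly, so that feasibility of each $p_i$ from $S_i$ lifts to feasibility of $p$ from the \emph{sum} of the $S_i$, and this in turn is cleanly encoded by the invariant above. Everything else is routine multiset arithmetic.
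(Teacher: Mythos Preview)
The paper states this theorem without proof, treating it as self-evident, so there is no detailed argument to compare against. Your proof is correct and supplies exactly the kind of argument one would expect: showing that $p$ can occur from $T=S_1+\cdots+S_l$ and then invoking the minimality of $S$. The invariant you set up---that after executing $r_1,\ldots,r_{j-1}$ from $T$ the global state equals the sum of the local states of the component pathways---is the right bookkeeping device, and your inductive step is sound. One small point you use implicitly is that the minimal initial state $S$ of a pathway is contained in \emph{every} state from which the pathway can occur; this follows from the characterization of $S$ as the coordinate-wise maximum of the deficits $R_{i+1}-\Delta_i$ (with $\Delta_i$ the net effect of the first $i$ reactions), but it is indeed what the paper's definition of ``minimal initial state'' intends.
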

\begin{theorem}\label{thm:main}
If $p$ is an undecomposable semiformal pathway of width $w>0$, there exists an undecomposable semiformal pathway of width smaller than $w$ but at least $(w - b)/b$, where $b$ is the branching factor of the CRN. (Note that if $w$ is small, the lower bound $(w-b)/b$ might be negative. In this case, it would simply mean that there exists an undecomposable semiformal pathway of width $0$, which would be the empty pathway.)
\end{theorem}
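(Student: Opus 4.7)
The plan is to extract the desired undecomposable pathway from the prefixes of $p$ itself. For each prefix $p_i=(r_1,\ldots,r_i)$, let $w_i$ denote its width. I would first establish that $w_0=0$, $w_k=w$, and $w_{i+1}\le w_i+b$: appending $r_{i+1}$ can enlarge the minimal initial state by at most $|R_{i+1}|\le b$ species (those reactants of $r_{i+1}$ not produced by $p_i$, which pass through every earlier state unchanged), and the newly reached state $S_{i+1}$ differs from $S_i$ by at most $|P_{i+1}|\le b$. Consequently, letting $i^*$ be the largest index with $w_{i^*}<w$, we have $w_{i^*}\in[w-b,\,w-1]$. (If $i^*=0$ then $w\le b$ so $(w-b)/b\le 0$ and the empty pathway already satisfies the theorem.)

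If $p_{i^*}$ is itself undecomposable, we are done, since $w-b\ge(w-b)/b$ for $b\ge 1$. Otherwise, decompose $p_{i^*}$ into undecomposable semiformal pieces $q_1,\ldots,q_l$. A natural generalization of Theorem~\ref{prop:lemma} from the initial state to every state along an interleaving (derived by adding, for each piece, the net changes effected by the reactions executed so far) gives $|S^{p_{i^*}}_t|\le\sum_t|S^{q_t}_{t_t}|$, and hence width of $p_{i^*}\le\sum_t w_{q_t}$. Therefore $\sum_t w_{q_t}\ge w-b$, and by averaging some $w_{q_t}\ge(w-b)/l$. Thus if $l\le b$, that piece is the desired pathway. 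To bound $l$, I would use that $p$ is undecomposable: no piece $q_t$ can be cleanly separated from $p$, so each $q_t$ must leave at least one intermediate in its final state that is consumed somewhere in $\{r_{i^*+1},\ldots,r_k\}$ (otherwise $(\{q_t\},\, p\setminus q_t)$ would be a valid decomposition of $p$). Focusing on the reaction $r_{i^*+1}$ that first pushes the width to $w$, one hopes that its $\le b$ reactant slots bound how many distinct pieces can simultaneously have unresolved leftovers at step $i^*$, yielding $l\le b$.

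The main obstacle will be making this final $l\le b$ bound airtight. The naive version of the argument breaks if a piece's leftover intermediate flows into a later reaction $r_{i^*+2},\ldots,r_k$ rather than into $r_{i^*+1}$ directly. My fallback is to replace the ``take a single piece'' candidate by one that merges, say, the pieces whose leftovers feed $r_{i^*+1}$ together with $r_{i^*+1}$ itself into a new semiformal pathway whose width can still be bounded in terms of $w$ and $b$ via the generalized interleaving inequality; verifying that such a merged candidate is either itself undecomposable or else admits a useful further decomposition (to which we can iterate) is where I expect the technical work to concentrate.
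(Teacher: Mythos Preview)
Your overall plan---take a prefix, show its width is at least $w-b$, decompose it into undecomposable semiformal pieces, bound the number of pieces by $b$, and average---matches the paper's structure. The decisive difference is \emph{which} prefix you take, and that is exactly where your argument breaks.

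You take $p_{i^*}$, the longest prefix with width strictly below $w$. This buys ``width $<w$'' immediately, but the discarded suffix $(r_{i^*+1},\ldots,r_k)$ may contain many reactions. Your pigeonhole for $l\le b$ then fails, as you yourself note: each piece $q_t$ must feed an intermediate into \emph{some} later reaction (else $q_t$ and $p\setminus q_t$ would decompose $p$), but nothing forces all of them to feed into $r_{i^*+1}$. A piece could feed only into $r_{i^*+7}$, say, and the total number of intermediate reactant slots across the whole suffix is unbounded in terms of $b$. Your fallback of merging the pieces that feed $r_{i^*+1}$ together with $r_{i^*+1}$ itself neither obviously terminates nor obviously preserves the $(w-b)/b$ lower bound on width, and you give no mechanism to control the remaining pieces that feed into $r_{i^*+2},\ldots,r_k$.

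The paper instead removes only the \emph{last} reaction $r_k=(R,P)$ to obtain $p_{-1}$. Now the discarded suffix is a single reaction with at most $b$ intermediate reactants, and the pigeonhole is clean: if $p_{-1}$ decomposed into more than $b$ pieces, some piece would contribute no intermediate reactant to $R$, and that piece together with (the remaining pieces followed by $r_k$) would decompose $p$. The price paid is that the selected piece may still have width exactly $w$; the paper handles this by repeating the whole construction on that strictly shorter piece. Length decreases at every iteration, so the width must eventually drop, and at the first drop the averaging argument of that round gives width in $[(w-b)/b,\,w)$. The missing idea in your proposal is precisely this trade: give up ``width drops in one shot'' in exchange for a one-reaction suffix that makes the $l\le b$ bound go through, and recover the width drop by iterating on length.
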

\begin{proof}
Since $w>0$, $p$ is nonempty. Let $p_{-1}$ denote the pathway obtained by removing the last reaction $(R,P)$ from $p$. Also, let $S_0,\ldots, S_k$ be the states that $p$ goes through, and $S'_0,\ldots,S'_{k-1}$ the states that $p_{-1}$ goes through. $S_i$ is potentially unequal to $S_i'$ because if the last reaction in $p$ consumes some new formal species, then the minimal initial state of $p_{-1}$ might be smaller than that of $p$. 

It is obvious that the minimal initial state of $p_{-1}$ is smaller than the minimal initial state of $p$ by at most $|R|$, i.e., $|S_0|-|S'_0|\leq |R|$. This means that for all $0\leq i\leq k-1$, we have that $|S_i|-|S'_i|\leq |R|$. Clearly, if there exists some $0\leq i\leq k-1$ such that $|S_i|=w$, then $|S'_i|\geq |S_i|-|R|=w-|R|\geq w-b$, so $p_{-1}$ has width at least $w-b$. If there exists no such $i$, then we have that $|S_k|=w$. Clearly, $|S_{k-1}|=|S_k|-|P|+|R|$ and it follows that
\[
|S_k|-|P|+|R|-|S'_{k-1}|=|S_{k-1}|-|S'_{k-1}|\leq |R|.
\]
This is equivalent to $|S_k|-|S'_{k-1}|\leq |P|$. Since $|S_k|=w$, we have that $|S'_{k-1}|\geq w-|P| \geq w-b$. Thus, $p_{-1}$ achieves width at least $w-b$.

Then, we decompose $p_{-1}$ until it is no longer decomposable. As a result, we will end up with $l\geq 1$ undecomposable pathways $p_1, p_2,\ldots, p_l$ which by interleaving can generate $p_{-1}$. Also, they are all semiformal. First, we show that $l$ is at most $b$. Assume towards a contradiction that $l>b$. Then, by the pigeonhole principle, there exists $i$ such that $(R-\mbox{Formal}(R),P)$ can occur in the sum of the final states of $p_1,\ldots,p_{i-1},p_{i+1},\ldots,p_l$ (since $|R-\mbox{Formal}(R)|\leq b$ and $(R-\mbox{Formal}(R),P)$ can occur in the sum of the final states of $p_1,\ldots,p_l$, the at most $b$ reactants of $(R-\mbox{Formal}(R),P)$ are distributed among $l>b$ pathways and there exists at least one $p_i$ that does not provide a reactant and can be omitted).
 Then, consider the decomposition $(p_i, p'_i)$ of $p_{-1}$ where $p'_i$ denotes the pathway we obtain by interleaving $p_1,\ldots,p_{i-1},p_{i+1},\ldots,p_l$ in the same order that those reactions occur in $p_{-1}$. By Theorem \ref{prop:lemma}, $p'_i$ is semiformal. Since $p_j$'s are all semiformal, this means that the intermediate species in the final state of $p'_i$ will be exactly the same as those in the sum of the final states of $p_1,\ldots,p_{i-1},p_{i+1},\ldots,p_l$. That is, the final state of $p'_i$ contains all the intermediate species that $(R,P)$ needs to occur, i.e., $p'_i$ with $(R,P)$ appended at the end should have a formal initial state. However, this means that $p$ is decomposable which is a contradiction. Hence, $l\leq b$.

Now, note that if we have $l$ pathways each with widths $w_1, \ldots, w_l$, any pathway obtained by interleaving them can have width at most $\sum_{i=1}^l w_i$. Since $p_{-1}$ had width  at least $w-b$, we have that $w-b\leq \sum_{i=1}^l w_i$. Then, if $w_i < (w-b)/b$ for all $i$, then $\sum_{i=1}^l w_i < w-b$, which is contradiction. Thus, we conclude that at least one of $p_1,\ldots,p_l$ has width greater than or equal to $(w-b)/b$. It is also clear that its width cannot exceed $w$. Thus, we have found a pathway $p'$ which
\begin{enumerate}
\item has a smaller length than $p$, and
\item has width at least $(w-b)/b$ and at most $w$.
\end{enumerate}
If $p'$ has width exactly $w$, then we have failed to meet the requirements of the claim. However, since we have decreased the length of the pathway by at least one, and the width of a zero-length pathway is always 0, we can eventually get a smaller width than $w$ by repeating this argument.  The first time that the width decreases, we will have found a pathway $p'$ that satisfies the theorem statement, because in that case conditions (1) and (2) must hold by the arguments above.
\end{proof}
\begin{corollary}\label{cor:main}
Suppose $w$ and $w_{\text{max}}$ are integers such that $(w+1)b\leq w_{\text{max}}$. Then, if there is no undecomposable semiformal pathway of width greater than $w$ and less than or equal to $w_{\text{max}}$, then there exists no undecomposable semiformal pathway of width greater than $w$.
\end{corollary}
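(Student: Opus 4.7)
The plan is to proceed by contradiction. Suppose, towards a contradiction, that there does exist some undecomposable semiformal pathway of width greater than $w$; call its width $W_0$. If $W_0 \leq w_{\text{max}}$, we immediately contradict the hypothesis, so assume $W_0 > w_{\text{max}}$. The idea is to iteratively apply Theorem \ref{thm:main} to produce a sequence of undecomposable semiformal pathways with strictly decreasing widths $W_0 > W_1 > W_2 > \ldots$, each satisfying $W_{i+1} \geq (W_i - b)/b$, and to show that the first time this sequence enters the range $[0, w_{\text{max}}]$, it must actually land inside $(w, w_{\text{max}}]$, yielding the desired contradiction.

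Concretely, I would first observe that since the $W_i$ are strictly decreasing nonnegative integers, there exists a smallest index $k \geq 1$ such that $W_k \leq w_{\text{max}}$. By minimality of $k$, we have $W_{k-1} > w_{\text{max}}$, and since the widths are integers and $w_{\text{max}} \geq (w+1)b$, this gives $W_{k-1} \geq w_{\text{max}} + 1 \geq (w+1)b + 1$. Now apply Theorem \ref{thm:main} at step $k-1$ to get
\[
W_k \;\geq\; \frac{W_{k-1} - b}{b} \;\geq\; \frac{(w+1)b + 1 - b}{b} \;=\; w + \frac{1}{b} \;>\; w.
\]
Since $W_k$ is an integer, $W_k \geq w+1$, and so $W_k \in (w, w_{\text{max}}]$. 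But then the pathway of width $W_k$ directly contradicts the hypothesis that no undecomposable semiformal pathway has width in $(w, w_{\text{max}}]$.

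One small point requires care: the parenthetical in Theorem \ref{thm:main} warns that $(W_{k-1}-b)/b$ could be negative, in which case the guaranteed pathway would be empty. In our situation this does not occur, because $W_{k-1} \geq (w+1)b + 1 \geq b+1$ forces $(W_{k-1}-b)/b \geq 1/b > 0$, so the produced pathway is nontrivial and has positive width. Apart from this sanity check, the proof is essentially a one-step estimate, so I do not expect any significant obstacle; the hypothesis $(w+1)b \leq w_{\text{max}}$ is precisely tuned to make the single application of Theorem \ref{thm:main} at the boundary crossing give a width strictly greater than $w$.
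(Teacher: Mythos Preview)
Your proof is correct and follows essentially the same route as the paper's own argument: both proceed by contradiction, iteratively apply Theorem~\ref{thm:main} to obtain a strictly decreasing sequence of widths, and use the bound $(w'-b)/b > w$ whenever $w' > w_{\text{max}} \geq (w+1)b$ to conclude that the sequence must land in the forbidden interval $(w, w_{\text{max}}]$. Your presentation is slightly more explicit about the integer arithmetic at the boundary crossing, but the structure and key estimate are identical.
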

\begin{proof}
Assume towards a contradiction that there exists an undecomposable semiformal pathway $p$ of width $w'>w$. If $w'\leq w_{\text{max}}$, then it is an immediate contradiction. Thus, assume that $w'>w_{\text{max}}$. By Theorem \ref{thm:main}, we can find a smaller undecomposable semiformal pathway $q$ of width $v$ where $(w'-b)/b\leq v<w'$. Since $w'>w_{\text{max}}$, we have that $v\geq (w'-b)/b > (w_{\text{max}}-b)/b \geq ((w+1)b-b)/b = w$. If $v\leq w_{\text{max}}$, we have a contradiction. If $v> w_{\text{max}}$, then take $q$ as our new $p$ and repeat the above argument. Since $v$ is smaller than $w'$ by at least one, we will eventually reach a contradiction.

Thus, there exists no undecomposable semiformal pathway of width greater than $w$.
\end{proof}

\subsection{Overview}
While Corollary \ref{cor:main} gives us a way to exploit the bounded width assumption, it is still unclear whether the enumeration can be made finite, because the number of undecomposable semiformal pathways of bounded width may still be infinite. For an easy example, if the CRN consists of $\{A\to i,\ i\to j,\ j\to i,\ j\to B\}$, we have infinitely many undecomposable semiformal pathways of width $1$, because after the initial reaction $A\to i$, the segment $i\to j,\ j\to i$ can be repeated arbitrarily many times without ever making the pathway decomposable. In this section, we sketch at high level how this difficulty is resolved in our finite-time algorithm.

The principal technique that lets us avoid infinite enumeration of pathways is memoization. To use memoization, we first define what is called the \textbf{signature} of a pathway, which is a collection of information about many important properties of the pathway, such as its initial and final states, decomposability, etc. It turns out that the number of possible signatures of bounded width pathways is always finite, even if the number of pathways themselves may be infinite. This means that the enumeration algorithm does not need to duplicate pathways with the same signatures, provided the signatures alone give us sufficient information for determining the formal basis and for testing tidiness and regularity of the CRN.

Therefore, the algorithm consists in enumerating all semiformal pathways of width up to ${(w+1)b}$, where $w$ is the maximum width of the undecomposable semiformal pathways discovered so far, while excluding pathways that have the same signatures as previously discovered pathways. It is important to emphasize that no a priori knowledge of the width bound is assumed, and the algorithm is guaranteed to halt as long as there exists some finite bound.
While the existence of this algorithm shows that the problem of finding the formal basis is decidable with the bounded width assumption, the worst-case time complexity seems to be adverse as is usual for algorithms based on exhaustive search. It is an open question to understand the computational complexity of this problem as well as to find an algorithm that has better practical performance. Another important open question is whether the problem without the bounded width assumption is decidable.

\subsection{Signature of a pathway}
While Corollary \ref{cor:main} gives us a way to make use of the bounded width assumption, it is still unclear whether the enumeration can be made finite, because the number of undecomposable semiformal pathways of bounded width may still be infinite. To resolve this problem, we need to define a few more concepts.

\begin{definition}
Let $p$ be a semiformal pathway. The \textbf{decomposed final states (DFS)} of $p$ is defined as the set of all unordered pairs $(T_1,T_2)$ that can be obtained by decomposing $p$ into two semiformal pathways and taking their final states. Note that for an undecomposable pathway, the DFS is the empty set.
\end{definition}
\begin{definition}
Let $p=(r_1,\ldots,r_k)$ be a semiformal pathway. Also, let $S_i=S\oplus r_1\oplus\cdots\oplus r_i$ where $S$ is the initial state of $p$. The \textbf{formal closure} of $p$ is defined as the unique minimal state $S'$ such that $\mbox{Formal}(S_i)\subseteq S'$ for all $i$.
\end{definition}
\begin{definition}
The \textbf{regular final states (RFS)} of $p$ is defined as the set of all minimal states $T$ such that there exists a potential turning point reaction $r_{j}=(R,P)$ which satisfies $\mbox{Formal}(S_i)\subseteq S$ for all $i< j$, $\mbox{Formal}(S_i)\subseteq T$ for all $i\geq j$, and $\mbox{Formal}(S_{j-1}-R)=\emptyset$.
\end{definition}

Some explanation is in order.  Although the RFS definition applies equally to semiformal pathways that are or could be regular, and to semiformal pathways that are not and cannot be regular, the RFS provides a notion of ``what the final state would/could be if the pathway were regular''. As examples, first consider the semiformal pathway $( A \to i, B+i \to j, j \to X + k)$.  The second and third reactions are potential turning points, and the RFS is $\{ \llbrace X\rrbrace \}$. One can easily check that if the pathway were to be completed in a way that its final state does not contain $X$, the resulting pathway cannot be regular (e.g.~were it to be closed by $X+k\to Y$, the pathway becomes irregular). Now consider $(A \to i, i \to B+j, B+j \to k)$. Only the first two reactions are potential turning points, and the RFS is $\{ \llbrace B\rrbrace \}$. One can also check in this case that the only way that this semiformal pathway can be completed as a regular pathway is for it to have $B$ in its final state.
 Finally consider $(A \to i, i \to A+j, A+j \to B)$, which is in fact a regular formal pathway implementing $A \to B$.  Because every reaction is a potential turning point by our definition, the RFS is $\{\llbrace A,B\rrbrace, \llbrace B\rrbrace\}$.  One of these states is the actual final state, corresponding to the actual turning point, and therefore we can see that this formal pathway is regular.

\begin{definition}
The \textbf{signature} of the pathway is defined to be the $6$-tuple of the initial state, final state, width, formal closure, DFS, and RFS.
\end{definition}

\begin{theorem}\label{thm:enumt}
If $m$ is any finite number, the set of signatures of all semiformal pathways of width up to $m$ is finite.
\end{theorem}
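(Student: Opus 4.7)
The plan is to show that each of the six coordinates of the signature ranges over only finitely many values; then the set of signatures lies in the Cartesian product of six finite sets and is itself finite. First I would fix the ambient alphabet: because $\mathcal{C}$ is a finite set of reactions, only finitely many species ever appear in any pathway. Call this set $\Sigma$ and its formal subset $\Sigma_F$. Every multiset arising in the analysis is supported on $\Sigma$, so a multiset with each entry bounded by some $N$ belongs to a finite set of size at most $(N+1)^{|\Sigma|}$.

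Five of the six coordinates are then straightforward. The initial state $S$ and the final state $T$ each appear among the $S_i$, so $|S|, |T| \leq m$, bounding each to finitely many multisets over $\Sigma$. The width is an integer in $\{0, \ldots, m\}$. The formal closure is a multiset over $\Sigma_F$ with each multiplicity at most $\max_i |\mbox{Formal}(S_i)| \leq m$, again finitely many. The RFS is a set of minimal formal states whose sizes are bounded by $m$, hence it is a subset of a finite universe of states and so takes only finitely many values.

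The substantive content, and the main obstacle, is bounding the DFS. An element is an unordered pair $(T_1,T_2)$ arising from a semiformal decomposition $p = p_1 \cup p_2$. The key algebraic identity, immediate from the definitions and Theorem~\ref{prop:lemma} with $l = 2$, is $T_1 + T_2 = T + X$ as multisets, where $X = S_1^{\min} + S_2^{\min} - S \geq 0$. Since $|T| \leq m$, it suffices to bound $|X|$ by a function of $m$. For each formal species $Y$, I would argue that any amount $c = S_i^{\min}(Y)$ of $Y$ required to start $p_i$ alone must actually coexist in some state of $p$'s interleaved execution, since those $c$ molecules must be supplied to $p_i$'s reactions within $p$ either by $S$ or by $p_j$'s earlier production, and therefore are simultaneously present at some instant; the hard part is performing this accounting carefully while leveraging the semiformality of both sub-pathways, which forbids intermediate species from appearing in their minimal initials. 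This argument would yield $S_i^{\min}(Y) \leq \max_t S_t^p(Y) \leq m$, and hence $|S_i^{\min}| \leq m|\Sigma_F|$, $|X| \leq 2m|\Sigma_F|$, and $|T_1|, |T_2| \leq m(1 + 2|\Sigma_F|)$. The pairs $(T_1,T_2)$ then range over a finite set, DFS is a subset thereof, and combining all six finite-set estimates via a Cartesian product argument completes the proof.
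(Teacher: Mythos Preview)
Your instinct that bounding the DFS is the nontrivial step is correct, but the inequality you plan to establish, $S_i^{\min}(Y) \leq \max_t S_t^p(Y)$, fails. Take the CRN $\{A \to B,\ B \to A\}$ (both species formal) and let $p_n$ be the length-$2n$ pathway $(A\to B,\ B\to A,\ A\to B,\ B\to A,\ \ldots)$. From the minimal initial state $\{A\}$ the states alternate between $\{A\}$ and $\{B\}$, so each $p_n$ is semiformal of width~$1$. Now decompose $p_n$ into the subsequence $q_1$ consisting of all $n$ copies of $A\to B$ and the subsequence $q_2$ consisting of all $n$ copies of $B\to A$. Both are semiformal: $q_1$ has minimal initial state $\{nA\}$ and final state $\{nB\}$, while $q_2$ has minimal initial state $\{nB\}$ and final state $\{nA\}$. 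Thus $(\{nA\},\{nB\}) \in \mathrm{DFS}(p_n)$, whereas your bound would give $n = S_{q_1}^{\min}(A) \leq \max_t S_t^{p_n}(A) = 1$. The informal accounting you sketch breaks down because the $n$ copies of $A$ that $q_1$ requires are supplied \emph{sequentially} by $q_2$ over the interleaved run and consumed as fast as they appear; they never coexist in any single state of $p_n$.

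This example in fact shows the theorem as stated is false: the pathways $p_n$ all have width~$1$ yet pairwise distinct DFS (each reaction here is unary, so $|T_1|+|T_2|\le 2m$ for any decomposition of $p_m$, and hence $(\{nA\},\{nB\})\notin\mathrm{DFS}(p_m)$ for $m<n$), and therefore pairwise distinct signatures. The paper's own one-line proof glosses over exactly the point you tried to fill in: it asserts that finiteness of the DFS follows from finiteness of ``possible final states,'' tacitly assuming the two sub-pathways in a decomposition of a width-$\leq m$ pathway are themselves of width $\leq m$, which the example refutes. So neither your argument nor the paper's closes this gap; a correct statement would require either modifying what the signature records about decompositions or adding a hypothesis that rules out such purely formal cycles.
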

\begin{proof}
Clearly, there is only a finite number of possible initial states, final states, widths, formal closures, and RFS. Also, since there is only a finite number of possible final states, there is only a finite number of possibilities for DFS.
\end{proof}

\begin{theorem}\label{thm:enumc}
Suppose $p_1$ and $p_2$ are two pathways with the same signature. Then, for any reaction $r$, $p_1+(r)$ and $p_2+(r)$ also have the same signature.
\end{theorem}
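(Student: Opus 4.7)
My plan is to show that each of the six components of the signature of $p + (r)$ is computable from the signature of $p$ and the reaction $r = (R, P)$ alone; the theorem then follows at once, since $p_1$ and $p_2$ sharing a signature feed identical inputs into these deterministic functions. Write $S_0, T^F, W, \mathrm{FC}, \mathrm{DFS}, \mathrm{RFS}$ for the six components of $p$'s signature, and set $\Delta$ to be the multiset whose count of each species $e$ equals $\max(R(e) - T^F(e),\, 0)$, i.e.\ the smallest augmentation that makes $r$ applicable after $p$; note that $\Delta$ itself is a function of $T^F$ and $R$. The organizing observation is that appending $r$ shifts every intermediate state $S_i$ of $p$ uniformly by $\Delta$ and then appends one new state $T^F + \Delta - R + P$.

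From this shift structure, four of the six components fall out directly. The new initial state is $S_0 + \Delta$; the new final state is $T^F + \Delta - R + P$; the new width is $\max(W + |\Delta|,\ |T^F| + |\Delta| - |R| + |P|)$; and the new formal closure is $(\mathrm{FC} + \mbox{Formal}(\Delta)) \cup \mbox{Formal}(T^F + \Delta - R + P)$, where $\cup$ denotes componentwise multiset maximum. Each of these expressions is manifestly a function of $(S_0, T^F, W, \mathrm{FC})$ and $r$.

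For the DFS, I would argue that every decomposition of $p + (r)$ into two nonempty semiformal pieces must place $r$ at the end of one of the pieces, call it $q + (r)$. If $q$ is empty then $R$ must be formal and the contributed unordered pair is $(P, T^F)$. Otherwise $(q, p_B)$ is itself a decomposition of $p$, realizing some pair $(U, V) \in \mathrm{DFS}$; the extra requirement that $q + (r)$ be semiformal is precisely that the excess reactants of $r$ over $U$ are formal, and the corresponding new final state of $q + (r)$ is computed from $U$ and $r$ (symmetrically if $r$ is attached on the other side). Enumerating over $\mathrm{DFS}$ thus yields the new DFS as a function of $(\mathrm{DFS}, T^F)$ and $r$.

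The delicate component, and the one I expect to be the main obstacle, is the RFS. The key lemma I would establish first is that any valid turning point $r_j = (R_j, P_j)$ in $p$ satisfies $\mbox{Formal}(S_{j-1}) = \mbox{Formal}(R_j)$ exactly: applicability $R_j \subseteq S_{j-1}$ yields one inclusion componentwise, while $\mbox{Formal}(S_{j-1} - R_j) = \emptyset$ yields the reverse. Substituting this identity into the analogous new condition $\mbox{Formal}(S_{j-1} + \Delta - R_j) = \emptyset$ at the same $j$ in $p + (r)$, every $R_j$-dependent term cancels and the requirement collapses to the single global predicate $\mbox{Formal}(\Delta) = \emptyset$, which depends only on $T^F$ and $R$. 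Consequently either $\mbox{Formal}(\Delta) = \emptyset$, in which case every old turning point survives and each old RFS element $T_j$ is replaced by $T_j \cup \mbox{Formal}(T^F + \Delta - R + P)$, or $\mbox{Formal}(\Delta) \neq \emptyset$ and no old turning point survives. The only remaining candidate turning point is $r$ itself; its validity reduces to the two signature-level checks $\mathrm{FC} \subseteq S_0$ and $\mbox{Formal}(R) \supseteq \mbox{Formal}(T^F)$, contributing one further element to the new RFS when valid. Taking the minimal elements of the resulting set yields the new RFS as a function of $(\mathrm{RFS}, \mathrm{FC}, S_0, T^F)$ and $r$, completing the analysis of all six components.
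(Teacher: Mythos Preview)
Your approach is correct and is essentially the paper's argument recast in a cleaner ``functional'' form: rather than transferring each element of the DFS/RFS of $p_1+(r)$ to $p_2+(r)$ directly, you show that every component of the new signature is computed by a fixed formula from the old signature and $r$. The key observations match the paper's exactly (the uniform $\Delta$-shift; the fact that an old turning point survives iff $\Delta=\emptyset$, which in the paper appears as ``$p_1$ and $p_1'$ have the same initial state''; the two signature-level tests for $r$ itself to be a turning point). Your explicit handling of the $q=\emptyset$ edge case in the DFS argument is actually tidier than the paper's, which glosses over it.

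One slip to fix: drop the ``taking the minimal elements'' step at the end of the RFS computation. The paper's RFS is the set of \emph{all} states $T_j=\bigcup_{i\ge j}\mbox{Formal}(S_i)$ as $j$ ranges over potential turning points, not merely the inclusion-minimal members of that set; the paper's own worked example $\mathrm{RFS}=\{A{+}B,\ B\}$ (where $B\subsetneq A{+}B$) makes this explicit. Your formula before that step---replacing each $T_j\in\mathrm{RFS}$ by $T_j\cup\mbox{Formal}(T^F+\Delta-R+P)$ when $\Delta=\emptyset$, and adjoining the single element from $r$ when the two checks pass---already \emph{is} the new RFS, so the extra minimization is both unnecessary and incorrect as a description of the object. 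With that step removed, your argument goes through verbatim.
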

\begin{proof}
Let $p_1'=p_1+(r)$ and $p_2'=p_2+(r)$. It is trivial that $p_1'$ and $p_2'$ have the same initial and final states, formal closure, and width.

First, we show that $p_1'$ and $p_2'$ have the same DFS.
Suppose $(T_1, T_2)$ is in the DFS of $p_1'$. That is, there exists a decomposition $(q_1', q_2')$ of $p_1'$ where $q_1'$ and $q_2'$ have final states $T_1$ and $T_2$. The last reaction $r$ is either contained in $q_1'$ or $q_2'$. Without loss of generality, suppose the latter is the case. Then, if $q_1=q_1'$ and $q_2+(r)=q_2$, then $(q_1, q_2)$ should decompose $p_1$, which is a prefix of $p_1'$.  Since $p_1$ and $p_2$ have the same DFS, there should be a decomposition $(s_1, s_2)$ of $p_2$ that has the same final states as $q_1$ and $q_2$. Clearly, $(s_1, s_2+(r))$ should be a decomposition of $p_2'$ and thus $(T_1, T_2)$ is also in the DFS of $p_2'$. By symmetry, it follows that $p_1'$ and $p_2'$ have the same DFS.

Now we argue that $p_1'$ and $p_2'$ should have the same RFS. Suppose $T$ is contained in the RFS of $p_1'$.
%EW the following commented-out three sentences make sense are are clear to me, but appear to be unnecessary after I re-wrote the arguments (1) and (2) below.
%
% It suffices to show that some subset of it is contained in the RFS of $p_2'$. (If some subset $U$ of $T$ is contained in the RFS of $p_2'$, by a symmetric argument, some subset $V$ of $U$ must also be contained in the RFS of $p_1'$. Because the RFS only contains minimal states, this means that $V=T$, and therefore $T=U$, which implies that $T$ is contained in the RFS of $p_2'$.)
\begin{enumerate}
%\item If the turning point for $T$ in $p_1'$ is the last reaction $r$, it must be the case that the formal closure of $p_1$ is a subset of the initial state of $p_1'$ (because the formal species in the states prior to a turning point must be contained in the initial state). Since $p_2$ has the same formal closure and final state as $p_1$, it immediately follows that some subset of $T$ must be contained in the RFS of $p_2'$.
%
%EW rewritten for clarity, also noting that not just a subset, but T itself, is in the RFS for p2'
\item If the potential turning point for $T$ in $p_1'$ is the last reaction $r$, with $r = (R,P)$, then it must be the case that $T = \mbox{Formal}(P)$.  Because $p_2$ has the same formal closure and final state as $p_1$, which was sufficient to ensure that $r$ was a valid potential turning point in $p_1'$, $r$ will also be a valid potential turning point in $p_2'$.  Consequently, $T$ is also in the RFS of $p_2'$. 
%
%\item Otherwise, it must be the case that some subset $T'$ of $T$ is contained in the RFS of $p_1$ and the final state of $p_1'$ is a subset of $T$. Since $T'$ is also contained in the RFS of $p_2$ and $p_2'$ has the same final state as $p_1'$, it follows that RFS of $p_2'$ must contain some subset of $T$.
%
%EW rewritten so that I understand it.  This is verbose, but I think it's right.  And it shows that T itself is in the RFS.
\item Otherwise, the potential turning point reaction for $T$ in $p_1'$, call it $t = (R, P)$, also appears in $p_1$.  
Since the initial state of $p_1$ must be a subset of the initial state of $p_1'$, $t$ is also a potential turning point for $p_1$.  Since ``midway through'' the potential turning point reaction, all formal species must be gone, we conclude that in fact $p_1$ and $p_1'$ have the same initial state.  That is, $R$ contains no formal species that aren't already in the final state of $p_1$.  Thus, all shared states after $t$ are the same, and some subset $T'$ of $T$ must be contained in the RFS of $p_1$. By assumption, $T'$ is also in the RFS of $p_2$, and $p_2$ has the same final state as $p_1$.  Since $R$ contains no formal species that aren't already in the final state of $p_2$, the initial states of $p_2$ and $p_2'$ are the same.  Consequently, the potential turning point of $p_2$ corresponding to $T'$ is also a potential turning point for $p_2'$.  This ensures that $T$ is in the RFS for $p_2'$. 
%end EW revisions
\end{enumerate}
\end{proof}

\begin{theorem}
A nonempty pathway $p$ is a prime pathway if and only if its signature satisfies the following conditions:
\begin{enumerate}
\item The initial and final states are formal.
\item The DFS is the empty set.
\end{enumerate}
\end{theorem}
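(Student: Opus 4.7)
The plan is to unfold the definitions on both sides of the biconditional, at which point the theorem reduces to the observation (already noted in the excerpt, immediately after the semiformal version of decomposability is introduced) that for a formal pathway, being decomposable into two semiformal subpathways is equivalent to being decomposable into two formal subpathways. With that equivalence in hand the two directions become essentially tautological.

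For the forward direction, I would assume $p$ is a prime formal pathway. By the definition of a formal pathway, both its initial and final states are formal, giving condition (1). Because $p$ is prime, it cannot be partitioned into two nonempty formal subpathways; by the equivalence just mentioned, it also cannot be partitioned into two nonempty semiformal subpathways. Since the DFS consists precisely of the pairs of final states arising from decompositions of $p$ into two semiformal pathways, the DFS is empty, giving condition (2).

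For the backward direction, I would assume conditions (1) and (2). Condition (1) says $p$ is a formal pathway. Condition (2) says $p$ admits no decomposition into two semiformal subpathways; by the same equivalence, $p$ admits no decomposition into two formal subpathways either, so $p$ is undecomposable as a formal pathway and therefore prime. Hence $p$ is a prime formal pathway.

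The main obstacle is justifying the decomposability reduction for formal pathways, since the rest of the argument is definitional. I would spell it out as follows. Suppose $p$ is a formal pathway partitioned into two nonempty semiformal subpathways $p_1$ and $p_2$, with minimal initial states $I_1, I_2$, both formal. For each intermediate species $i$ and each $j \in \{1,2\}$, the count of $i$ in the final state of $p_j$ is non-negative and equals the net of productions minus consumptions of $i$ inside $p_j$; because $p_j$'s initial state contains no intermediates, any consumption of $i$ in $p_j$ must be matched by an earlier production of $i$ within $p_j$, so this net count is $\geq 0$. Summing these net counts over $j$ yields the count of $i$ in the final state of $p$, which is $0$ since $p$ is formal. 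Hence each net count is $0$, each $p_j$ has a formal final state, and $p_1, p_2$ are themselves formal. The converse direction of the equivalence (every formal decomposition is automatically a semiformal one) is immediate from the definitions.
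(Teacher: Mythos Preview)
Your proposal is correct. The paper states this theorem without proof, evidently regarding it as immediate from the definitions together with the earlier remark (which you also invoke) that for a formal pathway the semiformal notion of decomposability coincides with the original one. Your argument simply spells out this immediacy; the final paragraph, which justifies that coincidence by a counting argument on intermediate species, actually supplies more detail than the paper itself provides.
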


\subsection{Algorithm for enumerating signatures}

It is now clear that we can find the formal basis by enumerating the signatures of all undecomposable semiformal pathways. In this section we present a simple algorithm for achieving this.

\begin{verbatim}
function enumerate(p, w, ret)
    if p is not semiformal or has width greater than w then return ret
    sig = signature of p
    if sig is in ret then return ret
    add sig to ret
    for every reaction rxn
        ret = enumerate(p + [rxn], w, ret)
    end for
    return ret
end function

function main()
    w_max = 0
    b = branching factor of the given CRN
    while true
        signatures = enumerate([], w_max, {})
        w = maximum width of an undecomposable pathway in signatures
        if (w+1)*b <= w_max then break
        w_max = (w+1)*b
    end while
    return signatures
end function
\end{verbatim}

The subroutine \texttt{enumerate} is a function that enumerates the signatures of all semiformal pathways of width at most \texttt{w}. Note that it uses memoization to avoid duplicating pathways that have identical signatures, as justified by Theorem \ref{thm:enumc}. Because of this memoization, Theorem \ref{thm:enumt} ensures that this subroutine will terminate in finite time.

The subroutine \texttt{main} repeatedly calls \texttt{enumerate}, increasing the width bound according to Corollary \ref{cor:main}. It is obvious that \texttt{main} will terminate in finite time if and only if there exists a bound to the width of an undecomposable semiformal pathway.

It is out of scope of this paper to attempt theoretical performance analysis of this algorithm or to study the computational complexity of finding the formal basis. While there are obvious further optimizations by which the performance of the above algorithm can be improved, we meet our goal of this paper in demonstrating the existence of a finite time algorithm and leave further explorations as a future task.

\subsection{Testing tidiness and regularity}
Finally, we discuss how to use the enumerated signatures to test tidiness and regularity of the given CRN.
\begin{theorem}\label{thm:tidiness2}
A CRN is tidy if and only if every undecomposable semiformal pathway has a closing pathway.
\end{theorem}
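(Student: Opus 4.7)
The forward direction is immediate from the definition: a tidy CRN guarantees a closing pathway for every semiformal pathway, and undecomposable semiformal pathways are a special case.

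For the converse, my plan is to induct on the length of a semiformal pathway $p$, showing that $p$ has a closing pathway. The base case (empty $p$) is trivial. In the inductive step, if $p$ is undecomposable the hypothesis applies directly. Otherwise, write $p$ as an interleaving of two nonempty semiformal pathways $p_1, p_2$, both strictly shorter than $p$, so the inductive hypothesis yields closing pathways $c_1$ for $p_1$ and $c_2$ for $p_2$. The goal is to show that the concatenation $c_1+c_2$ is a closing pathway for $p$.

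Let $S, T$ and $S_i, T_i$ denote the initial and final states of $p$ and $p_i$, and let $F_i := T_i \oplus c_i$ (formal). Summing the net multiset effects of the reactions across the decomposition gives $T = T_1 + T_2 - \Delta$, where $\Delta := S_1 + S_2 - S$; Theorem \ref{prop:lemma} ensures $\Delta \ge 0$, and $\Delta$ contains only formal species because $S, S_1, S_2$ are all formal. By the paper's convention a closing pathway consumes no formal species, so every reaction of $c_1$ requires only intermediate reactants. Since subtracting the formal $\Delta$ does not disturb intermediate content, $T$ contains every intermediate of $T_1$, so $c_1$ can execute step-by-step from $T$ via exactly the reaction sequence it executes from $T_1$. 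The resulting state is $T + (F_1 - T_1) = F_1 + T_2 - \Delta$; its intermediate content is exactly that of $T_2$, so by the same argument $c_2$ fires from there, producing $F_1 + F_2 - \Delta$, which is formal. Neither $c_1$ nor $c_2$ consumes any formal species, so $c_1+c_2$ does not either.

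The main obstacle is verifying that the intermediate multiset $F_1 + T_2 - \Delta$ is actually non-negative, i.e.~$\Delta \le F_1 + T_2$. A formal species balance across the interleaving yields $\Delta = \mbox{Formal}(T_1) + \mbox{Formal}(T_2) - \mbox{Formal}(T)$, so $\Delta \le \mbox{Formal}(T_1) + \mbox{Formal}(T_2)$. Since $c_1$ consumes no formal species, $F_1 \supseteq \mbox{Formal}(T_1)$, and therefore $F_1 + T_2 \supseteq \mbox{Formal}(T_1) + \mbox{Formal}(T_2) \supseteq \Delta$, as needed. It is precisely this no-formal-consumption property that makes the compositional step go through: without it, a closing pathway for one piece could require formal species that the other piece's cleanup has already depleted, and the naive concatenation would not fire from $T$.
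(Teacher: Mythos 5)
Your proof is correct and follows essentially the same route as the paper's: the paper argues by taking a minimal-length semiformal pathway lacking a closing pathway (equivalent to your induction on length), decomposes it, and concatenates the closing pathways of the two pieces, justified by the same two facts you use --- that the intermediate content of the final state is the sum of those of the pieces (via Theorem \ref{prop:lemma}) and that strong closing pathways consume no formal species. Your multiset bookkeeping verifying non-negativity of $F_1 + T_2 - \Delta$ is a welcome extra level of care that the paper leaves implicit.
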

\begin{proof}
The forward direction is trivial. For the reverse direction, we show that if a CRN is not tidy, there exists an undecomposable semiformal pathway that does not have a closing pathway.

By definition, there exists a semiformal pathway $p$ that does not have a closing pathway.  Consider a minimal-length example of such a pathway.  If $p$ is undecomposable, then we are done.  So suppose that $p$ is decomposable into two semiformal pathways $p_1$ and $p_2$.  By the minimality of $p$, both pathways $p_1$ and $p_2$ must have closing pathways. However, since the final state of $p$ has the same intermediate species as the sum of the final states of $p_1$ and $p_2$ (by Theorem 4.1 and the fact that $p_1$ and $p_2$ are semiformal), the two closing pathways concatenated will be a closing pathway of $p$ (because a closing pathway does not consume any formal species).  This contradicts that $p$ does not have a closing pathway, and thus we conclude that the case where $p$ is decomposable is impossible.
\end{proof}
\begin{theorem}\label{thm:tidiness3}
Let $p$ be an undecomposable semiformal pathway that has a closing pathway. Then $p$ also has a closing pathway $q$ such that $p+q$ is undecomposable.
\end{theorem}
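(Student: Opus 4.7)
The plan is to choose $q$ to be a shortest strong closing pathway of $p$ (well-defined since by hypothesis at least one exists) and show by contradiction that $p+q$ is undecomposable. If $q$ is empty then $p+q=p$, which is undecomposable by assumption, so suppose $q$ is nonempty and that $p+q$ decomposes into two nonempty semiformal sub-sequences $\alpha$ and $\beta$, with initial/final states $S_\alpha,T_\alpha,S_\beta,T_\beta$ and $S_p$ for $p$. The first step of the proof is a multiset-accounting argument that upgrades ``semiformal'' to ``formal'' for both $\alpha$ and $\beta$: using $T_\alpha-S_\alpha=\mathrm{net}(\alpha)$, $T_\beta-S_\beta=\mathrm{net}(\beta)$, and $T_{p+q}=S_p+\mathrm{net}(\alpha)+\mathrm{net}(\beta)$, together with $S_p\subseteq S_\alpha+S_\beta$ from Theorem~\ref{prop:lemma}, one obtains $T_\alpha+T_\beta=T_{p+q}+(S_\alpha+S_\beta-S_p)$. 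Since $S_p,S_\alpha,S_\beta$ and $T_{p+q}$ are all formal, the right-hand side is formal, which forces both $T_\alpha$ and $T_\beta$ to be formal. Hence $\alpha$ and $\beta$ are in fact formal pathways.

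The second step exploits a structural feature of $p+q$: every reaction of $p$ precedes every reaction of $q$, so after restricting to $\alpha$ (resp.\ $\beta$) the $p$-reactions still appear first and the $q$-reactions after, giving $\alpha=p_\alpha\cdot q_\alpha$ and $\beta=p_\beta\cdot q_\beta$ as concatenations (not general interleavings), where $p_\alpha,p_\beta$ and $q_\alpha,q_\beta$ are the sub-sequences of $p$ and $q$ going into $\alpha$ and $\beta$. The proof then splits into three cases. If $p_\alpha=\emptyset$, then $\beta=p\cdot q_\beta$ runs all of $p$ starting from $S_\beta=S_p$ (no extra formal reactants are needed because $q$, hence $q_\beta$, consumes no formal species) and ends in the formal state $T_\beta$, so $q_\beta$ is a strong closing pathway of $p$ of length strictly less than $|q|$ (since $\alpha=q_\alpha$ is nonempty), contradicting minimality of $q$. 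The case $p_\beta=\emptyset$ is symmetric.

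In the remaining case both $p_\alpha$ and $p_\beta$ are nonempty. Since $\alpha=p_\alpha\cdot q_\alpha$ is executable from the formal state $S_\alpha$, its prefix $p_\alpha$ is also executable from $S_\alpha$, so the minimal initial state of $p_\alpha$ is contained in $S_\alpha$ and is therefore formal; that is, $p_\alpha$ is semiformal. The same reasoning shows $p_\beta$ is semiformal. But then $(p_\alpha,p_\beta)$ partitions $p$ into two nonempty semiformal sub-sequences, contradicting that $p$ is undecomposable. I expect the main technical obstacle to be the multiset-conservation step: it is what converts ``semiformal $\alpha,\beta$'' into ``formal $\alpha,\beta$'' and thereby makes the concatenation structure $\alpha=p_\alpha\cdot q_\alpha$ useful for deducing that $p_\alpha$ itself has a formal initial state; once that is in hand, the three-way case analysis above is routine.
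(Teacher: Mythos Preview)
Your proof is correct and follows essentially the same approach as the paper: take $q$ of minimal length, suppose $p+q$ decomposes, and derive a contradiction either with the undecomposability of $p$ (your third case) or with the minimality of $q$ (your first two cases). The paper's version is terser---it directly asserts that the two pieces of the decomposition are formal (relying on the earlier ``obvious'' remark that the formal and semiformal notions of decomposability coincide for formal pathways) and compresses your case analysis into the single line ``one of $p_1$ or $p_2$ contains all the reactions of $p$, because otherwise $p$ must be decomposable''---so your multiset-accounting step and explicit case split simply fill in details the paper leaves implicit.
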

\begin{proof}
Let $q$ be a minimal-length closing pathway for $p$.  Note that $p+q$ is a formal pathway.  If $p+q$ is undecomposable, we are done.  So suppose that $p+q$ decomposes into two formal pathways $p_1$ and $p_2$, which by definition must both be nonempty.  Then it must be the case that one of $p_1$ or $p_2$ contains all the reactions of $p$, because otherwise $p$ must be decomposable as well.  Without loss of generality, suppose $p_1$ contains all the reactions of $p$.  Then $p_2$ consists only of reactions from $q$.  This means that the reactions of $q$ that went into $p_1$ constitute a shorter closing pathway $qÕ$ for $p$, contradicting the minimality of $q$.  We conclude that $p+q$ must have been undecomposable.
\end{proof}

To test tidiness, we attempt to find a closing pathway for each undecomposable semiformal pathway $p$ enumerated by the main algorithm. Theorem \ref{thm:tidiness2} ensures that it suffices to consider only these pathways. We do this by enumerating the signatures of all semiformal pathways of the form $p+q$ where $q$ is a pathway that does not consume a formal species, but only those of width up to $w_{\text{max}}$ ($w_{\text{max}}$ is the maximum width of the undecomposable semiformal pathways discovered by the main algorithm). Theorem \ref{thm:tidiness3} ensures that it is safe to enforce this width bound. 

The testing of regularity is trivial, using the following theorem.
\begin{theorem}
A prime pathway is regular if and only if its RFS contains its final state.
\end{theorem}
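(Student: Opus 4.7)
The plan is to prove both directions of the equivalence by unwinding the definitions of regularity and RFS and observing that they coincide precisely when one takes the candidate RFS state to be the actual final state $T^{*} = S_k$ of the pathway $p$.

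For the forward direction, suppose $p = (r_1, \ldots, r_k)$ is regular with turning point $r_j = (R', P')$. The regularity conditions assert that $\text{Formal}(S_i) \subseteq S$ for $i < j$, $\text{Formal}(S_i) \subseteq T^{*}$ for $i \geq j$, and $\text{Formal}(S_{j-1} - R') = \emptyset$. These are exactly the three RFS conditions with candidate state $T^{*}$ and potential turning point $r_j$. It remains to verify that $T^{*}$ is minimal among states satisfying them with this particular turning point: any such state $T$ must in particular contain $\text{Formal}(S_k)$, and since $p$ is a formal pathway $\text{Formal}(S_k) = S_k = T^{*}$, giving $T^{*} \subseteq T$. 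Combined with the fact that $T^{*}$ itself satisfies the conditions, $T^{*}$ is the unique minimum for this turning point, so $T^{*} \in \text{RFS}(p)$.

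For the reverse direction, if $T^{*} \in \text{RFS}(p)$, then by definition of RFS there exists a potential turning point reaction $r_j$ for which $T^{*}$ satisfies the three conditions. But those three conditions with the candidate equal to $T^{*}$ are literally the definition of $p$ being regular with turning point $r_j$, so $p$ is regular.

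The only mildly delicate point is reading the RFS definition correctly with respect to the word ``minimal'': the example of $(A \to i, i \to A + j, A + j \to B)$ with $\text{RFS} = \{A+B, B\}$ makes clear that minimality is intended per candidate turning point rather than as an overall infimum across the set, and the forward-direction argument above respects this interpretation (the minimality claim there is with respect to the particular turning point $r_j$ witnessing regularity). I will also note in passing that the primality hypothesis is not actually invoked in the argument; it appears because the theorem is applied in the algorithm only to the prime formal pathways enumerated when computing the formal basis.
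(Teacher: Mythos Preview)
Your proof is correct and follows the same approach as the paper's, which is simply to observe that the regularity conditions and the RFS conditions coincide when the candidate state is taken to be the actual final state. The paper's proof is a terse two sentences; you have filled in the details it omits, most notably the minimality check in the forward direction (that $T^{*}=S_k$ must be contained in any admissible $T$ because $\text{Formal}(S_k)=S_k$ for a formal pathway) and the per-turning-point reading of ``minimal'' in the RFS definition, which you correctly justify from the worked example. Your closing remark that primality is not used in the argument is also accurate.
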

\begin{proof}
The potential turning point corresponding to the final state proves regularity.  If the final state is lacking in the RFS, then none of the potential turning points qualify as a turning point and the pathway is not regular.
\end{proof}

We emphasize that these methods work only because of the bounded width assumption we made on undecomposable semiformal pathways. Without this assumption, it is unclear whether these problems still remain decidable.

\subsection{Optimization techniques}\label{sec:optimization}
In this section, we discuss some optimization techniques that can be used to improve the performance of the main enumeration algorithm. While we provide no theoretical analysis of these techniques, we report that there are test instances on which these techniques speed up the algorithm by many orders of magnitude.

\begin{definition}
If $S$ is a state, $\textbf{Intermediate}(S)$ denotes the multiset that consists of exactly all the intermediate species in $S$.
\end{definition}
\begin{theorem}\label{thm:initialmain}
If $p$ is an undecomposable semiformal pathway of CRN $\mathcal C$ with an initial state of size $m>0$, there exists an undecomposable semiformal pathway of $\mathcal C$ with an initial state of size smaller than $m$ but at least $\min_{(R,P)\in \mathcal C}\{(m - |\mbox{Formal}(R)|) / |\mbox{Intermediate}(R)|\}$.
\end{theorem}
\begin{proof}
Since $m>0$, $p$ is nonempty. Let $p_{-1}$ denote the pathway obtained by removing the last reaction $(R,P)$ from $p$. Let $x$ be the number of formal species in $R$. Also, let $S$ and $S_{-1}$ denote the initial states of $p$ and $p_{-1}$ respectively.

It is obvious that the initial state of $p_{-1}$ is smaller than the initial state of $p$ by at most $x$, i.e., $|S_{-1}|\geq |S|-x$. Then, we decompose $p_{-1}$ until it is no longer decomposable. As a result, we will end up with $l$ undecomposable pathways $p_1, p_2,\ldots, p_l$ which by interleaving can generate $p_{-1}$. Also, they are all semiformal. First, we show that $l$ is at most $y$, where $y$ is the number of intermediate species in $R$ (clearly, $x+y=|R|$). Assume towards a contradiction that $l>y$. Then, by the pigeonhole principle, there exists $i$ such that $(\mbox{Intermediate}(R),P)$ can occur in the sum of the final states of $p_1,\ldots,p_{i-1},p_{i+1},\ldots,p_l$ (as in the proof of Theorem \ref{thm:main}). Then, consider the decomposition $(p_i, p'_i)$ of $p_{-1}$ where $p'_i$ denotes the pathway we obtain by interleaving $p_1,\ldots,p_{i-1},p_{i+1},\ldots,p_l$ in the same order that those reactions occur in $p_{-1}$. By Theorem \ref{prop:lemma}, $p'_i$ is semiformal. Since $p_j$'s are all semiformal, this means that the intermediate species in the final state of $p'_i$ will be exactly the same as those in the sum of the final state of $p_1,\ldots,p_{i-1},p_{i+1},\ldots,p_l$. 
That is, the final state of $p'_i$ contains all the intermediate species that $(\mbox{Intermediate}(R),P)$ needs to occur, i.e., $p'_i$ with $(R,P)$ appended at the end should have a formal initial state. However, this means that $p$ is decomposable which is a contradiction. Hence, $l\leq y$.

Now, note that if we have $l$ semiformal pathways whose initial states have size $m_1, \ldots, m_l$, any pathway obtained by interleaving them can have an initial state of size at most $\sum_{i=1}^l m_i$. Since $p_{-1}$ had an initial state of size  at least $m-x$ and $l\leq y$, we can conclude that at least one of $p_1,\ldots,p_l$ has an initial state of size at least $(m-x)/y$. It is also clear that the size of its initial state  cannot exceed $m$. Thus, we have found a pathway $p'$ which
\begin{enumerate}
\item has a smaller length than $p$, and
\item has an initial state of size at least $\min_{(R,P)\in \mathcal C}\{(m - |\mbox{Formal}(R)|) / |\mbox{Intermediate}(R)|\}$ and at most $m$.
\end{enumerate}
If $p'$ has an initial state of size exactly $m$, then we have failed to meet the requirements of the claim. However, since we have decreased the length of the pathway by at least one and the initial state of a zero-length pathway is of size $0$, we can eventually get an initial state of size smaller than $m$ by repeating this process. The first time that the size of the initial state decreases, we will have found a pathway $p'$ that satisfies the theorem statement because in that case conditions (1) and (2) must hold by the arguments above.
\end{proof}
The above theorem allows us to maintain a bound \texttt{i\_max} on the size of initial states during enumeration, in a similar manner to how \texttt{w\_max} is maintained. Since our enumeration algorithm is essentially brute-force, imposing this additional bound may significantly reduce the number of pathways that need to be enumerated.

Moreover, the proof of the above theorem immediately lets us optimize the constants in Theorem \ref{thm:main}.
\begin{theorem}\label{thm:opt}
If $p$ is an undecomposable semiformal pathway of width $w>0$, there exists an undecomposable semiformal pathway of width smaller than $w$ but at least $(w - b)/b_r$, where $$b_r=\max_{(R,P)\in\mathcal C}\{|\mbox{Intermediate}(R)|\}.$$
\end{theorem}
\begin{proof}
Same as the proof of Theorem \ref{thm:main}, except that we argue $l\leq b_r$ instead of $l\leq b$, using the argument from the proof of Theorem \ref{thm:initialmain}.
\end{proof}

The following theorem helps us further eliminate a huge number of pathways from consideration.
\begin{definition}
Let $p$ be a semiformal pathway. We say that $p$ is \textbf{strongly decomposable} if $p$ can be decomposed into two semiformal pathways $p_1$ and $p_2$ such that at least one of $p_1$ and $p_2$ is a formal pathway.
\end{definition}
\begin{theorem}\label{thm:sdec}
Let $p$ be a semiformal pathway. If it is strongly decomposable, any semiformal pathway that contains $p$ as a prefix is decomposable.
\end{theorem}
\begin{proof}
Suppose $p$ is strongly decomposable into formal pathway $p_1$ and semiformal pathway $p_2$. We show that for any $p'$, if $p+p'$ is semiformal, then it is decomposable into $p_1$ and $p_2+p'$. It suffices to show that $p_2+p'$ is semiformal. Assume towards a contradiction that the initial state of $p_2+p'$ contains an intermediate species. Since $p_2$ is semiformal, it means that there is an intermediate species $x$ contained in $S-T$, where $T$ is the final state of $p_2$ and $S$ is the initial state of $p'$. Let $T'$ be the final state of $p$. Since $p_1$ is formal, $\mbox{Intermediate}(T')=\mbox{Intermediate}(T)$. Hence, $x$ is also contained in $S-T'$, which means that $x$ appears in the initial state of $p+p'$. This is a contradiction to our initial assumption that $p+p'$ was semiformal. Hence, $p_2+p'$ is semiformal and therefore $p+p'$ is decomposable.
\end{proof}

Finally, we show that CRNs that possess a certain structure can be processed extremely quickly. Since most published implementations do have this structure (e.g.~\cite{SSW10, C09, QSW11}), this observation is very useful in practice.
\begin{definition}
Let $\mathcal C$ be a CRN. We define the following two sets, which partition the set of intermediate species of $\mathcal C$ according to whether they ever participate in a reaction as a reactant.
\begin{align*}
W(\mathcal C)&=\{\text{species $x$ in $\mathcal C$}: \text{$x$ is not formal and $x$ never appears as a reactant in reactions of $\mathcal C$}\}\\
NW(\mathcal C)&=\{\text{species $x$ in $\mathcal C$}: \text{$x$ is not formal and $x\notin W(\mathcal C)$}\}
\end{align*}
Moreover, for any state $S$, we will denote by $S^{NW(\mathcal C)}$ the multiset containing exactly those species of $S$ that belong to $NW(\mathcal C)$.
\end{definition}
\begin{definition}
A CRN $\mathcal C$ is said to have \textbf{monomolecular substructure} if for every reaction $(R,P)\in\mathcal C$, both $|R^{NW(\mathcal C)}|$ and $|P^{NW(\mathcal C)}|$ are at most one.
\end{definition}
\begin{theorem}
Let $\mathcal C$ be a CRN that has monomolecular substructure and $p=(r_1,\ldots,r_k)$ any undecomposable semiformal pathway of $\mathcal C$. Also, let $S_0$ be the initial state of $p$ and $S_i=S_0\oplus r_1\oplus\cdots\oplus r_i$ all the states that $p$ goes through. Then, for any $0\leq i\leq k$, we have $|S_i^{NW(\mathcal C)}|\leq 1$.
\end{theorem}
\begin{proof}
We prove this by induction on $k$. If $k=1$, the claim holds trivially. Now assume that the claim holds for all pathways of length up to $k-1$. Let $p_{-1}$ be the pathway obtained by removing the last reaction $r_k$ from $p$. Note that $r_k$  consumes up to one intermediate species because the CRN has monomolecular substructure. Moreover, $r_k$ must consume at least one intermediate species because otherwise $p$ can be decomposed into $(r_1,\ldots,r_{k-1})$ and $(r_k)$. Therefore $r_k$ consumes exactly one intermediate species $x$, which by definition must be in $NW(\mathcal C)$. By induction hypothesis, $|S_i^{NW(\mathcal C)}|\leq 1$ for all $0\leq i\leq k-1$ (the initial state of $p_{-1}$ and the initial state of $p$ differ only by formal species) and in particular the final state of $p_{-1}$ contains at most one intermediate species that belongs to $NW(\mathcal C)$. This implies that this intermediate species must be $x$, because otherwise $p=p_{-1}+(r_k)$ would not be semiformal. The last reaction $r_k$ consumes this $x$ and produces at most one intermediate species that belongs to $NW(\mathcal C)$, which means that $|S_k^{NW(\mathcal C)}|\leq 1$. The theorem now follows by induction.
\end{proof}
The above theorem implies that when we run our algorithm on a CRN with monomolecular substructure, there is no need to enumerate semiformal pathways that ever go through a state that contains more than one species from $NW(\mathcal C)$.

\subsection{Testing pathway decomposition equivalence}
In this section, we have presented an algorithm for enumerating the formal basis of a given CRN, which is guaranteed to halt if there is a finite bound to the width of an undecomposable semiformal pathway. Moreover, this algorithm can also be used to test whether the CRN is tidy and regular.

Hence, we are finally in a position to be able to verify the correctness of CRN implementations; namely, using the above algorithm we can test whether the target CRN and the implementation CRN are pathway decomposition equivalent. Since it immediately follows from definition that the target CRN is tidy and regular and that its formal basis is equal to itself, this verification amounts to checking that the implementation CRN is tidy and regular and that its formal basis is equal to the target CRN up to addition or removal of trivial reactions. All of these tasks can easily be achieved using our algorithm.

We note that because of Theorem~\ref{thm:equivalence} our theory applies also to the more general scenario of comparing two arbitrary CRNs. In this case, one would need to enumerate the elementary and formal bases of both CRNs, verify that both CRNs are tidy and regular, and finally check that their formal bases are identical up to addition or removal of trivial reactions.

\section{Handling the general case}\label{section:discussion}\label{section:fuel}
In this section, we discuss some important issues that pertain to practical applications and hint at the possibility of further theoretical investigations.

As we briefly mentioned earlier, many CRN implementations that arise in practice involve not only formal and intermediate species but also what are called fuel and waste species. Fuel species are chemical species that are assumed to be always present in the system at fixed concentration, as in a buffer. For instance, DNA implementations \cite{SSW10, C09, QSW11} often employ fuel species that are present in the system in large concentrations and have the ability to transform formal species into various other intermediates. This type of ``implementation'' is also prevalent in biological systems, where the concentrations of energy-carrying species such as ATP, synthetic precursors such as NTPs, and general-purpose enzymes such as ribosomes and polymerases, are all maintained in roughly constant levels by the cellular metabolism.

In CRN verification, the standard approach to fuel species is to preprocess implementation CRNs such that all occurrences of fuel species are simply removed. For instance, if the CRN contained reaction $A+g\to i+t$ where $g$ and $t$ are fuel species, the preprocessed CRN will only have $A\to i$. The justification for this type of preprocessing is that since fuel species are always present in the system in large concentrations by definition, consuming or producing a finite number of fuel species molecules do not have any effect on the system. 
In particular, it can be shown that holding fuel species at constant concentration, versus simply removing them from the reactions while appropriately adjusting the reaction rate constants, leads to exactly the same mass-action ODE's and continuous-time Markov chains.
%While no formal justification has been established for this technique, it is nevertheless widely accepted in the field at the present time. \note{[There is a formal justification:  holding fuel (and waste) species constant, versus removing them from the reactions while appropriately adjusting the reaction rate constants, leads to exactly the same mass-action ODEs and CTMCs.]}

On the other hand, implementations sometimes produce ``waste'' species as byproducts. Waste species are supposed to be chemically inert and thus cannot have interaction with other formal or intermediate species. However, in practice it is often difficult to implement a chemical species which is completely inert and therefore they may interact with other species in trivial or nontrivial ways. Therefore the main challenge is to ensure that these unwanted interactions do not give rise to a logically erroneous behavior. One way to deal with this problem is to first verify that such waste species are indeed ``effectively inert'' and then preprocess them in a similar manner to fuel species. To achieve this we need to answer two important questions: first, how to satisfactorily define ``effectively inert'' and second, how such waste species may be algorithmically identified.
%Unfortunately, it is unclear how it may be possible to algorithmically identify those waste species that could be safely filtered out this way.
%has been difficult to arrive at a definition of waste that is completely satisfactory. If the user identifies waste species, \note{[If the user identifies waste, they can hold their concentration constant as above.  The question really relates to automatic identification.]}

Another related problem which must be solved before we can use pathway decomposition is that some implementations may have multiple chemical species that are interpreted as the same formal species. (For example, see DNA implementations \cite{SSW10, C09} with ``history domains.'' An example is given in Section \ref{sec:case}.) Since our mathematical framework implicitly assumes one-to-one correspondence between formal species of the target CRN and formal species of the implementation CRN, it is not immediately clear how we can apply our theory in such cases.

Interestingly, the weak bisimulation-based approach to CRN equivalence proposed in \cite{Qing,Johnson2016} does not seem to suffer from any of these problems, because it in fact does not make a particular distinction between these different types of species except fuel species.
Rather, it requires that there must be a way to interpret each species that appears in the implementation CRN as one or more formal species. For instance, if $\{A\rightleftharpoons i, B+i\rightleftharpoons j, j\to C\}$ is proposed as an implementation of $A+B\to C$, the weak bisimulation approach will interpret $A$ and $i$ as $\llbrace A\rrbrace$, $B$ as $\llbrace B\rrbrace$, $j$ as $\llbrace A,B\rrbrace$, and $C$ as $\llbrace C\rrbrace$. Therefore the state of the system at any moment will have an instantaneous interpretation as some formal state, which is not provided by pathway decomposition. On the other hand, the weak bisimulation approach cannot handle interesting phenomena that are allowed in the pathway decomposition approach, most notably the delayed choice phenomenon explained in Section \ref{section:motivation}.

Our proposed solution to the problem of wastes and multiple formal labeling is a compositional hybrid approach between weak bisimulation and pathway decomposition. Namely, we take the implementation CRN from which only the fuel species have been preprocessed, and tag as ``formal'' species all the species that have been labeled by the user as either an implementation of a target CRN species or a waste. All other species are tagged as ``intermediates''. Then we can apply the theory of pathway decomposition to find its formal basis (with respect to the tagging, as opposed to the smaller set of species in the target CRN). Note that waste species must be tagged as ``formal'' rather than ``intermediate'' because they will typically accumulate, and thus tagging them as ``intermediate'' would result in a non-tidy CRN to which pathway decomposition theory does not apply.
Finally, we verify that the resulting formal basis of tagged species is weak bisimulation equivalent to the target CRN under the natural interpretation, which interprets implementations of each target CRN species as the target CRN species itself and wastes as ``null.''   If the implementation is incorrect, or if some species was incorrectly tagged as ``waste'', the weak bisimulation test will fail.  See Figure \ref{fig:combined} for example.

\begin{figure}[!h]
\centering
\vspace{0.7cm}
\begin{minipage}[b]{0.3\linewidth}
\centering
$A_1\to i$\\
$i\to B_1+W$\\
$A_2\to j$\\
$j\to B_2$\\
$W+j\to B_1$\\
\ \\
\vspace{0.3cm}
\textbf{Implementation CRN}
\end{minipage}
\begin{minipage}[b]{0.3\linewidth}
\centering
$A_1 \to B_1 +W$\\
$A_2 \to B_2$\\
$A_2+W\to B_1$\\
\ \\
\ \\
\vspace{0.3cm}
\textbf{Formal basis}
\end{minipage}
\begin{minipage}[b]{0.3\linewidth}
\centering
$A\to B$\\
\ \\
\ \\
\ \\
\vspace{0.3cm}
\textbf{Under weak bisimulation}
\end{minipage}
\caption{The compositional hybrid approach for verifying an implementation of the formal CRN $\{A\to B\}$. We first apply pathway decomposition, treating the upper case species as formal species and lower case species as intermediate species. Then, we apply weak bisimulation using the natural interpretation which interprets $A_1$ and $A_2$ as $\{| A|\}$, $B_1$ and $B_2$ as $\{| B|\}$, and $W$ as $\emptyset$. Thus, in two steps, the implementation CRN has been shown to be a correct implementation of $\{A\to B\}$.}\label{fig:combined}
\end{figure}

On the other hand, we note that the weak bisimulation approach can sometimes handle interesting cases which pathway decomposition cannot. For instance, the design proposed in \cite{QSW11} for reversible reactions implements $A+B\rightleftharpoons C+D$ as $\{A\rightleftharpoons i, i+B\rightleftharpoons j, j\rightleftharpoons k+C, k\rightleftharpoons D\}$. Note that this implementation CRN is not regular according to our theory because of the prime pathway $A\to i, i+B\to j, j\to k+C, k+C\to j, j\to i+B, i\to A$. Interestingly, this type of design seems to directly oppose the foundational principles of the pathway decomposition approach. One of the key ideas that inspired pathway decomposition is that of ``base touching,'' namely the idea that even though the evolution of the system involves many intermediate species, a pathway implementing a formal reaction must eventually produce all its formal products and thus ``touch the base.'' This principle is conspicuously violated in the above pathway, because while the only intuitive way to interpret it is as $A+B\to C+D$ and then $C+D\to A+B$, the first part does not touch the base by producing a $D$ molecule. In contrast, the weak bisimulation approach naturally has no problem handling this implementation: $i$ is interpreted as $\llbrace A\rrbrace$, $j$ is interpreted as $\llbrace A,B\rrbrace$, and $k$ is interpreted as $\llbrace D\rrbrace$.

The fact that the two approaches are good for different types of instances motivates us to further generalize the compositional hybrid approach explained above. To define the generalized compositional hybrid approach, we begin by formally introducing the weak bisimulation approach of \cite{Qing,Johnson2016}. As we have seen above, the weak bisimulation approach requires an ``interpretation map'' $m$ from species of the implementation CRN to states of the target CRN. For instance, in the above example $m$ was defined as $m(A)=m(i)=\llbrace A\rrbrace$, $m(B)=\llbrace B\rrbrace$, $m(j)=\llbrace A,B\rrbrace$, $m(C)=\llbrace C\rrbrace$, and $m(D)=m(k)=\llbrace D\rrbrace$. Although the domain of $m$ is technically species of the implementation CRN, there is an obvious sense in which we can also apply it to states, reactions, or pathways. Thus when convenient we will abuse notation to mean $m(S)=\sum_{x\in S} m(x)$ for a state $S$, $m(r)=(m(R),m(P))$ for a reaction $r=(R,P)$, and $m(p)=(m(r_1),m(r_2),\ldots,m(r_k))$ for a pathway $p=(r_1,\ldots,r_k)$. Then, the following definition and theorem are adapted from \cite{Qing,Johnson2016} to fit our definitions of chemical reactions and pathways.
\begin{definition}
%(Section 2.2 of \cite{Qing}) 
{(Section 3.2 of \cite{Johnson2016})}
A target CRN $\mathcal C_1$ and an implementation CRN $\mathcal C_2$ are \textbf{weak bisimulation equivalent} under interpretation $m$ if
\begin{enumerate}
\item for any state $S$ in $\mathcal C_1$, there exists a state $S'$ in $\mathcal C_2$ such that $m(S')=S$,
\item for any state $S'$ in $\mathcal C_2$ and $S=m(S')$,
\begin{enumerate}
\item if $r\in\mathcal C_1$ can occur in $S$, then there exists a pathway $p=(s_1,\ldots,s_k)$ in $\mathcal C_2$ such that $S\oplus r=m(S' \oplus s_1\oplus\cdots\oplus s_k)$ and $m(p)$ is equal to $(r)$ up to addition or removal of trivial reactions, and
\item if $r'\in\mathcal C_2$ can occur in $S'$, then $m(r')$ is either a reaction in $\mathcal C_1$ or a trivial reaction, and thus $m(S'\oplus r')=S\oplus m(r')$.
\end{enumerate}
\end{enumerate}
\end{definition}
\begin{theorem}\label{thm:qing}
%(An immediate corollary of the theorem from Section 2.3 of \cite{Qing}) 
{(An immediate corollary of Theorem 1 of \cite{Johnson2016}) }
If a target CRN $\mathcal C_1$ and an implementation CRN $\mathcal C_2$ are weak bisimulation equivalent under interpretation $m$, then the following holds:
\begin{enumerate}
\item If $S$ is a state in $\mathcal C_1$, $p$ is a pathway in $\mathcal C_1$ that can occur in $S$, and $S'$ is a state in $\mathcal C_2$ such that $m(S')=S$, then there exists a pathway $p'$ in $\mathcal C_2$ such that $p'$ can occur in $S'$ and $m(p')$ is equal to $p$ up to addition or removal of trivial reactions.
\item If $S'$ is a state in $\mathcal C_2$ and $p'$ is a pathway in $\mathcal C_2$ that can occur in $S'$, then there exists a pathway $p$ in $\mathcal C_1$ such that $p$ can occur in $m(S')$ and $p$ is equal to $m(p')$ up to addition or removal of trivial reactions.
\end{enumerate}
\end{theorem}
Similarly to Theorem \ref{thm:interpretation}, the above theorem establishes a kind of pathway equivalence between the target CRN and the implementation CRN.
Now, we can formally define the generalized compositional hybrid approach as follows.

\begin{definition}\label{def:hybrid}
Suppose we are given a target CRN $\mathcal C_1$ and an implementation CRN $\mathcal C_2$. Let $\mathcal F$ and $\mathcal S$ denote the species of $\mathcal C_1$ and $\mathcal C_2$ respectively. Let $\mathcal X\subseteq S$ be the set of species that have been labeled by the user as implementations of target CRN species or wastes.
In the \textbf{compositional hybrid approach}, we say $\mathcal C_2$ is a \textbf{correct} implementation of $\mathcal C_1$ if there exists some $\mathcal X\subseteq \mathcal V\subseteq S$ such that
\begin{enumerate}
\item $\mathcal C_2$ with respect to $\mathcal V$ as formal species is tidy and regular, and
\item the formal basis of $\mathcal C_2$ with respect to $\mathcal V$ as formal species is weak bisimulation equivalent to $\mathcal C_1$ under some interpretation that respects the labels on $\mathcal X$ provided by the user.
\end{enumerate}
\end{definition}

The flexibility to vary $\mathcal V$ can be useful:  for example, intermediates that are involved in ``delayed choice'' pathways can be kept out of $\mathcal V$ so as to be handled by pathway decomposition, whereas intermediates involved in the aforementioned reversible reaction pathways can be retained within $\mathcal V$ so as to be handled by weak bisimulation.

Finally, we prove a theorem analogous to Theorems \ref{thm:interpretation} and \ref{thm:qing}, in order to provide an intuitive justification for the adequacy of the above definition. We begin by extending the notion of interpretation of pathways that we introduced in Section \ref{section:theory_theorems} to include the concept of interpretation map.
\begin{definition}\label{def:interpretation}
Suppose $\mathcal V$ denotes the set of species of $\mathcal C_2$ that are being tagged as formal species in the compositional hybrid approach. Let $m$ be an interpretation map from $\mathcal V$ to states of $\mathcal C_1$.
%EW changed "a pathway p" to "a formal pathway p" below.  Must have been a typo.
We say a formal pathway $p=(r_1,\ldots,r_k)$ in $\mathcal C_2$ can be \textbf{interpreted} as a pathway $q=(s_1,\ldots,s_l)$ in $\mathcal C_1$ under $m$ if
\begin{enumerate}
\item $q$ can occur in $m(S)$, where $S$ is the initial state of $p$,
\item $m(S\oplus r_1\oplus \cdots\oplus r_k)=m(S)\oplus s_1\oplus \cdots \oplus s_l$, and
\item there is a decomposition of $p$ such that if we replace the turning point reaction of each prime pathway with the corresponding element of $\mathcal C_1$ 
%EW adds parenthetical
(i.e. the corresponding formal basis reaction mapped through $m$)
%end EW
and remove all other reactions,
% and call the resulting pathway $p'$, then $m(p')$ 
then the resulting pathway
%EW edited above because after clarification of what the "corresponding element of C1" is, there's no further need to map through m again.
is equal to $q$ up to addition or removal of trivial reactions.
\end{enumerate}
\end{definition}

Then, the following theorem provides a sense in which two CRNs that are ``equivalent'' according to the compositional hybrid approach indeed do have equivalent behaviors.

\begin{theorem}\label{thm:hybrid}
Suppose an implementation CRN $\mathcal C_2$ is a correct implementation of the target CRN $\mathcal C_1$ according to the compositional hybrid approach. Then, there exists a mapping $m$ from $\mathcal V$ to states of $\mathcal C_1$ such that the following two conditions hold.
\begin{enumerate}
\item Let $q$ and $S$ be a pathway and a state in $\mathcal C_1$ such that $q$ can occur in $S$. Then, for any state $S'$ in $\mathcal C_2$ that uses species from $\mathcal V$ such that $m(S')=S$, there exists a formal pathway $p$ in $\mathcal C_2$ that can occur in $S'$ and can be interpreted as $q$ under $m$.
%EW changed "a pathway p" to "a formal pathway p" above.  It is implied by context, but not bad to emphasize.
\item Any formal pathway $p$ in $\mathcal C_2$ can be interpreted as some pathway $q$ in $\mathcal C_1$ under $m$.
\end{enumerate}
\end{theorem}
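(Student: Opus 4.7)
The plan is to take $m$ to be precisely the interpretation map provided by the weak bisimulation between the formal basis $\mathcal F$ of $\mathcal C_2$ (with respect to $\mathcal V$) and the target CRN $\mathcal C_1$, and then prove each of the two claims by chaining this weak bisimulation equivalence with Theorem \ref{thm:interpretation} applied to the tidy and regular pair $(\mathcal C_2, \mathcal F)$. Thus the two tools I would compose are: (i) weak bisimulation, which gives a correspondence between pathways in $\mathcal C_1$ and pathways in $\mathcal F$ under $m$, possibly up to trivial reactions; and (ii) Theorem \ref{thm:interpretation}, which gives a correspondence between pathways in $\mathcal F$ and formal pathways in $\mathcal C_2$. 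Each of parts 1 and 2 of the present theorem is obtained by traversing this chain in one direction.

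For part 1, I would start with the pathway $q$ in $\mathcal C_1$ and the state $S'$ of $\mathcal C_2$ (with species in $\mathcal V$) satisfying $m(S')=S$. By weak bisimulation, there exists a formal pathway $q'$ of $\mathcal F$ that can occur in $S'$ and whose image under $m$ is $q$ up to addition or removal of trivial reactions; the fact that $q'$ can actually occur in $S'$ (rather than merely from some abstract state) is exactly the content of the bisimulation "simulation" clause. Then, applying Theorem \ref{thm:interpretation}(1) to each reaction of $q'$, I replace every reaction $r$ of $q'$ with the corresponding prime formal pathway of $\mathcal C_2$ drawn from the formal basis entry $r$; concatenating these yields a formal pathway $p$ of $\mathcal C_2$. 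By construction, selecting the turning points of each prime segment and replacing them by the matching $\mathcal F$-reaction recovers $q'$, and then applying $m$ recovers $q$ up to trivial reactions, so the three conditions of Definition \ref{def:interpretation} are satisfied. That $p$ occurs in $S'$ rather than merely in its own minimal initial state is then obtained by the same "hybrid" argument used in the proof of Theorem \ref{thm:interpretation}(2), since the replacement of each $\mathcal F$-reaction by its prime formal expansion does not increase the formal requirements, given that $q'$ itself was enabled at $S'$.

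For part 2, I would start with a formal pathway $p$ of $\mathcal C_2$ and directly invoke Theorem \ref{thm:interpretation}(2) to obtain a pathway $q'$ in $\mathcal F$ that $p$ can be interpreted as, via a choice of decomposition and turning points. Mapping $q'$ through $m$ and discarding trivial reactions yields a pathway $q$ in $\mathcal C_1$; weak bisimulation guarantees that $q$ really is a valid pathway of $\mathcal C_1$, enabled at $m(S)$ where $S$ is the initial state of $p$, and that $m$ commutes with the multiset computation of final states. The three clauses of Definition \ref{def:interpretation} are then read off: clause (1) from the bisimulation enabledness of $q$ at $m(S)$, clause (2) from $m(S \oplus r_1\oplus\cdots\oplus r_k) = m(S)\oplus s_1\oplus\cdots\oplus s_l$ (applying $m$ to both sides of the corresponding identity in $\mathcal F$), and clause (3) from the same decomposition and turning points used in the $\mathcal F$-interpretation, now composed with $m$, modulo the trivial reactions absorbed by the "up to addition or removal" clause.

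The main obstacle I anticipate is bookkeeping around the "up to trivial reactions" clause and around initial states. The notion of interpretation in Definition \ref{def:interpretation} is a strict composition of two loose notions (Theorem \ref{thm:interpretation}'s pathway-to-basis interpretation, and weak bisimulation's basis-to-target interpretation), so I will have to carefully verify that trivial reactions introduced by $m$ collapsing $m(R)=m(P)$ reactions in $\mathcal F$, and trivial reactions admitted by the minimality of the initial state in Theorem \ref{thm:interpretation}, can be added or removed to reconcile the two sides; the parallelism subtlety noted after Definition of interpretation (where $p$'s minimal initial state can strictly exceed the initial state predicted by its interpretation) is exactly where the "up to trivial reactions" slack is consumed, and the proof must keep this accounting straight.
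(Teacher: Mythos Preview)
Your proposal is correct and follows essentially the same approach as the paper: take $m$ to be the interpretation map supplied by the weak bisimulation between the formal basis of $\mathcal C_2$ (with respect to $\mathcal V$) and $\mathcal C_1$, then chain the bisimulation correspondence with Theorem~\ref{thm:interpretation} in each direction. One minor point: in part~1, once you have the $\mathcal F$-pathway $q'$ occurring in $S'$ and replace each of its reactions by the corresponding prime formal pathway, the fact that the resulting $p$ occurs in $S'$ is immediate (each prime pathway has initial and final states exactly matching the reactants and products of the $\mathcal F$-reaction it replaces), so you do not actually need to invoke the hybrid argument from Theorem~\ref{thm:interpretation}(2) there.
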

\begin{proof}
Let $m$ be the interpretation map provided by the weak bisimulation equivalence \cite{Qing,Johnson2016}, and $\mathcal I$ the formal basis of $\mathcal C_2$ with respect to $\mathcal V$ as formal species.  
%EW changed "formal basis" to "formal species" above.  Must have been a typo.
\begin{enumerate}
\item By Theorem \ref{thm:qing}, we
  have a pathway $p'$ in $\mathcal I$ that can occur in $S'$ and
  $m(p')$ is equal to $q$ up to addition or removal of trivial reactions. Now replace each reaction in $p'$ by the prime
  pathway that implements that reaction and call the resulting pathway
  $p$. Clearly, $p$ can occur in $S'$. To show that $p$ can be
  interpreted as $q$ under $m$, observe that the first condition of Definition~\ref{def:interpretation} follows from the fact that $q$ can clearly occur in $m(S')$ and $S'$ is a superset of the initial state of $p$ (because $p$ can occur in $S'$). Since $p$ and $p'$ have the same initial and final states and $m(p')=q$, we also satisfy the second condition. The final condition trivially follows from the way $p$ was constructed and the fact that $m(p')$ was equal to $q$ up to addition or removal of trivial reactions.
\item By Theorem \ref{thm:interpretation}, $p$ can be interpreted as some pathway $p'$ in $\mathcal I$. Let $q$ be the pathway we obtain by removing all the trivial reactions from $m(p')$. Now we show that $p$ can be interpreted as $q$ under $m$. For the first condition of Definition~\ref{def:interpretation}, we use Theorem \ref{thm:qing} to see that $q$ can occur in $m(S')$ where $S'$ is the initial state of $p'$. Since $p'$ can occur in the initial state of $p$, this implies that $q$ can also occur in $m(S)$ where $S$ is the initial state of $p$. The second condition follows immediately from the way $q$ was constructed and the fact that $p$ and $p'$ have the same net effect. The final condition follows from the fact that $p$ can be interpreted as $p'$ in $\mathcal I$ and that $m(p')=q$ up to removal of trivial reactions.
\end{enumerate}
\end{proof}

As we shall see in Section \ref{sec:case}, the compositional hybrid approach allows for the verification of interesting real-life systems that neither pathway decomposition nor bisimulation is able to handle individually. In fact, the compositional hybrid approach seems to be the most general approach proposed thus far in terms of the range of implementations that it can address, which, to our best knowledge, includes all currently known enzyme-free DNA implementation techniques. At the same time, we remark that its definition as presented in this paper does not seem to be completely satisfactory. To see why, consider the CRN $\{A\to i,\ i+B_1\leftrightharpoons j_1,\ i+B_2\leftrightharpoons j_2,\ j_1\to C,\ j_2\to C \}$ as an implementation of $\{A+B\to C\}$. Intuitively, it seems that the compositional hybrid approach should have no problem handling this example with $\mathcal V=\{A,B_1,B_2,C\}$ and $m(A)=\llbrace A\rrbrace,\ m(B_1)=m(B_2)=\llbrace B\rrbrace,\ m(C)=\llbrace C\rrbrace$. Surprisingly, it turns out that the prime pathway $A\to i,\ i+B_1\to j_1,\ j_1\to i+B_1,\ i+B_2\to j_2,\ j_2\to C$ is not regular under this choice of $\mathcal V$, because the product $B_1$ is produced before the reactant $B_2$ is consumed. Of course, the compositional hybrid approach can still handle this implementation because we can always choose $\mathcal V=\{A,B_1,B_2,C,i,j_1,j_2\}$ and delegate the whole verification to the bisimulation part. Nonetheless, it is troubling that the above pathway is considered irregular because if indeed $B_1$ and $B_2$ both represent $B$, then there is a sense in which this pathway should really be thought of as $A\to i,\ i+B\to j_1,\ j_1\to i+B,\ i+B\to j_2,\ j_2\to C$ and hence be considered regular.

Towards the resolution of the above issue, we may want to imagine a modified version of hybrid approach where pathway decomposition and bisimulation are not merely composed as in the above definition, but combined in a more integrated manner. For example, we have considered a kind of ``integrated''  hybrid approach in which regularity and the delimiting condition of weak bisimulation \cite{Qing,Johnson2016} are tested only after we apply the interpretation $m$ to the prime pathways in the elementary basis. While empirical results suggest that such modifications may successfully fix the issue described above, their theoretical implications are yet to be understood.

%While this approach seems to be the most general approach proposed thus far in terms of the range of implementations that it can address, its exact capabilities are not completely understood at present. Probing the potential of this approach is an important open question.

\section{Case studies}\label{sec:case}
In this section, we study five real-life examples from the field of DNA computing in order to demonstrate how the theory developed in this paper can be applied in practice. The code that was used to test these examples is included as part of the \emph{Nuskell} suite for compiling and verifying DNA implementations (previously called BioCRN \cite{S11}). Nuskell interfaces the formal basis enumeration algorithm from Section \ref{sec:algorithm} with other software pieces to form the following pipeline for verifying CRN implementations. First, the given target CRN is converted into a set of DNA molecules using the Nuskell compiler \cite{S11}. Second, all the reactions that can occur between these DNA molecules are enumerated using Grun et al.'s domain-level DNA reaction enumerator \cite{Grun}, from which the fuel species are pruned out as described in Section \ref{section:fuel}. The resulting reactions constitute the implementation CRN. Finally, either pathway decomposition, weak bisimulation \cite{Qing,Johnson2016}, or the compositional hybrid approach can be applied to the target CRN and the implementation CRN to verify that the two are indeed equivalent.

The current version of Nuskell implements a special case of the compositional hybrid approach in which $\mathcal V=\mathcal X$ (see Definition \ref{def:hybrid}). In other words, the Nuskell compiler provides the verifier with not only the target and implementation CRNs, but also information on formal and waste labeling. For formal labeling, it uses a pattern matching algorithm described in \cite{S11} to decide which species in the implementation CRN should correspond to which formal species. For waste labeling, it currently uses the following criterion proposed in \cite{S11}:
\begin{definition}
A species is a \textbf{non-waste} if it is formal or it is a reactant of a reaction that involves at least one non-waste either as a reactant or a product.
An intermediate species that is not a non-waste is a \textbf{waste} species.
\label{def:waste}
\end{definition}
Given this information, our verifier first finds the formal basis of the implementation CRN with respect to exactly those species labeled as formal or waste species by the Nuskell compiler, and then verifies that this formal basis is weak bisimulation equivalent to the target CRN under the natural interpretation that accords with the compiler's labeling.

\subsection*{Example \#1}

For the first example, we will implement the target CRN $\{A\to X+Y+Z,\ A\to X+Y,\ A\to X,\ A\to B\}$ using the translation scheme proposed in \cite{SSW10}. Figure~\ref{fig:soloveichik} shows how a unimolecular reaction (i.e.~reaction with exactly one reactant) gets implemented in this translation scheme.

\begin{figure}
\centering
\includegraphics[width=\textwidth]{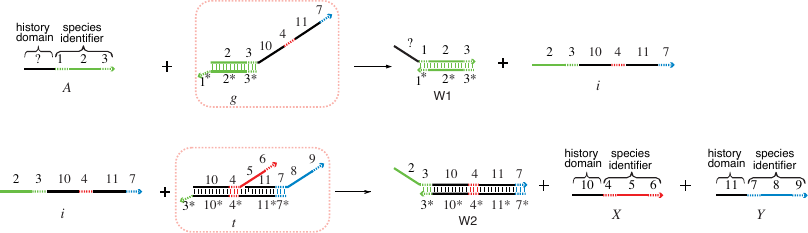}
\caption {The implementation of $A\to X+Y$ according to the translation scheme from \cite{SSW10}. Dotted boxes indicate fuel species. Note that in this translation scheme each formal species molecule will retain a ``history domain'' specific to the gate from which it was produced.
}\label{fig:soloveichik}
\end{figure}

To understand how this translation works, first note that the figure makes use of domain-level annotation as opposed to sequence-level annotation: that is, the DNA strands in Figure \ref{fig:soloveichik} are specified by numbered segments, or ``domains,'' instead of the actual nucleotide base sequences of A, G, C, and T. Here, the star is used to indicate sequence complementarity, e.g.~segments 1 and 1* are complementary to each other. Under the assumption that domains otherwise have very little complementarity, this abstraction is very useful in modeling complex DNA systems.

In this implementation, $g$ and $t$ are fuel species that are assumed to be present in large concentration. Hence, when the molecule $A$ is present in the solution, $A$ and $g$ may collide and bind to each other by the 1 and 1* domains that are complementary to each other. When this happens, since the adjacent domains 2 and 3 on $A$ and 2* and 3* on the bottom strand of $g$ are also complementary to each other, the hybridization can continue to the right, by a process called branch migration, thus displacing the top strand of $g$ and producing two species on the right-hand side of the first reaction. The resulting molecule $i$ can then react with another fuel species $t$ to produce the desired products $X$ and $Y$. Note that $g$ and $t$ can be easily modified to implement reactions with different numbers of product molecules, e.g.~$A\to X$ or $A\to X+Y+Z$. 

Note also that this scheme makes use of ``history domains'' in implementing formal species, represented by the question mark in the domain specification. For instance, in this example, any single-stranded molecule that has an arbitrary domain followed by domains 1, 2, and 3 is considered to be  $A$. This is necessary because if the implementation consists of multiple modules like the one depicted in this figure, $A$ molecules produced by different modules will have different history domains, each specific to the gate from which the molecule was produced. 
However, all those different versions of $A$ will then be able to participate in the same set of downstream modules, because as can be seen in the figure, the history domains do not participate in the reactions employed by those modules.

If we follow this translation scheme blindly, we would require exactly two fuel species for each unimolecular reaction in the target CRN. However, in the case of our target CRN $\{A\to X+Y+Z,\ A\to X+Y,\ A\to X,\ A\to B\}$, there is an optimization technique we can use to reduce the number of fuel species, exploiting the fact that the first three reactions in this target CRN are very similar to one another. Namely, it turns out that in this case we can share one fuel species $g$ among those three reactions, therefore using only $6$ fuel species to implement the four reactions in the target CRN rather than $2\times4=8$. In practice, researchers who experiment with actual DNA systems generally want to employ such optimizations whenever possible, because they are often crucial to the cost and efficacy of the experiment. Figure \ref{fig:casestudy1} illustrates how the optimized implementation works for this example.

\begin{figure}
\centering
\includegraphics[width=\textwidth]{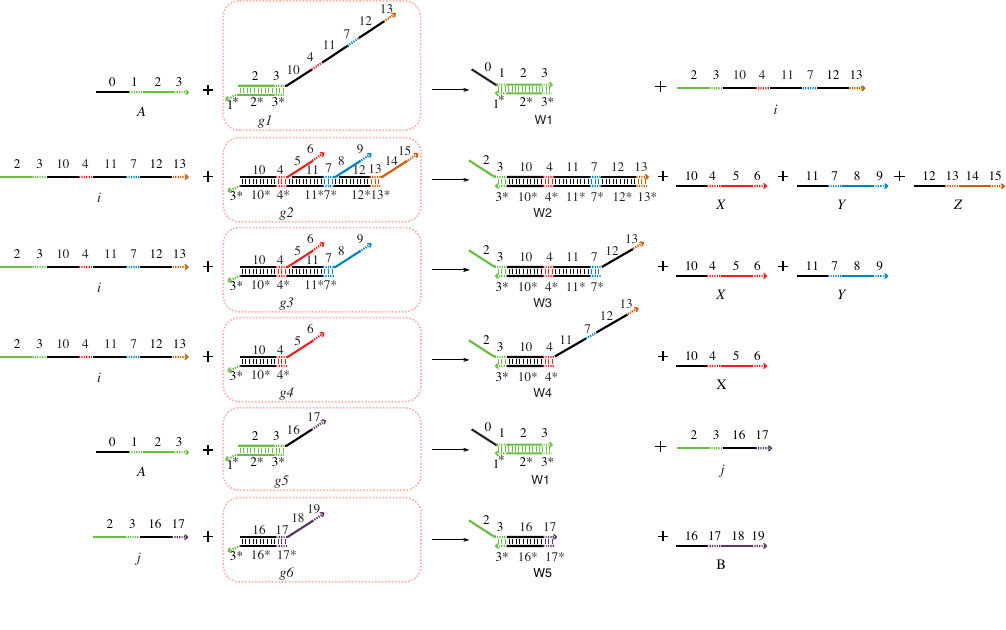}
\caption {An optimized implementation of target CRN $\{A\to X+Y+Z,\ A\to X+Y,\ A\to X,\ A\to B\}$.}\label{fig:casestudy1}
\end{figure}

In the next step, we preprocess the fuel and waste species to obtain a simpler CRN that involves only formal and intermediate species. Although in principle waste species are handled using the compositional hybrid approach, we will assume for the sake of presentation that in this example we can treat waste species in the same way as fuel species. We note that this assumption is not far-fetched because in this example it is rather obvious that the waste species do not participate in any reaction at all. After this preprocessing, the resulting implementation CRN looks as follows:
\begin{align*}
A&\to i\\
i&\to X+Y+Z\\
i&\to X+Y\\
i&\to X\\
A&\to j\\
j&\to B
\end{align*}
Moreover, the optimized implementation no longer uses multiple history domains for a single formal species, so the theory of pathway decomposition can be directly applied to this implementation.

Before we proceed, we also remark that this example contains a notable instance of the delayed choice phenomenon, and therefore cannot be verified by either weak bisimulation \cite{Qing,Johnson2016} nor serializability \cite{Lakin}. Namely, the intermediate $i$ has multiple fates $\llbrace X,Y,Z\rrbrace$, $\llbrace X,Y\rrbrace$, and $\llbrace X\rrbrace$, and hence it is unclear what its instantaneous interpretation should be. Indeed, we cannot interpret $i$ to be any of $\llbrace X,Y,Z\rrbrace$, $\llbrace X,Y\rrbrace$, and $\llbrace X\rrbrace$ because then the CRN would appear to contain reactions $X+Y+Z\to X+Y$, $X+Y\to X+Y+Z$, and $X\to X+Y+Z$, respectively. Neither can we interpret $i$ to be $A$, because $i$ cannot turn into $B$.

In contrast, pathway decomposition has no difficulty verifying this implementation CRN. Running the algorithm for enumerating the basis, we find that its elementary basis is
\begin{align*}
\{&(A\to i,\ i\to X+Y+Z),\\
&(A\to i,\ i\to X+Y),\\
&(A\to i,\ i\to X),\\
&(A\to j,\ j\to B)\},
\end{align*}
from which it is clear that the CRN is tidy and regular, and moreover its formal basis equals the target CRN that we desired to implement. Hence, this implementation is pathway decomposition equivalent to the target CRN.

\subsection*{Example \#2}
Our second example is a verification of the network condensation procedure proposed in \cite{Grun}. The domain-level reaction enumerator from \cite{Grun} can produce the output using several different semantics. One is ``detailed'' semantics, in which all internal configuration changes within DNA molecules are enumerated step by step, one domain change at a time. Another is ``condensed'' semantics, in which internal configuration changes that occur within a single molecule are considered to be one step. For instance, we note that Figure \ref{fig:soloveichik} is an example of condensed semantics. If the first reaction in Figure \ref{fig:soloveichik} was enumerated using detailed semantics instead, it would be enumerated as three reactions instead of one, where the first reaction would be $A$ binding to $g$ by domain 1, the second reaction would be domain 2 on $A$ hybridizing with domain 2* on the bottom strand of $g$ (i.e.~domain 2* on the top strand of $g$ would now be displaced), and the third reaction would be domain 3 on $A$ hybridizing with domain 3* on the bottom strand of $g$ (i.e.~the top strand is now completely released). For practical purposes, it is often convenient to use condensed semantics, which produces many fewer species and reactions while still capturing the essential behavioral features of the given system.

While \cite{Grun} provides its own theoretical justification for the correctness of this condensation procedure, it would be interesting to verify using pathway decomposition that the CRNs generated by the two different semantics are indeed equivalent. For example, let us consider the DNA system in Figure \ref{fig:casestudy2}, enumerated using detailed and condensed semantics of Grun et al.'s enumerator \cite{Grun}. This example was taken from Figure 4 of \cite{Grun} and was slightly modified to highlight the delayed choice phenomenon inherent in the system. Condensed semantics identifies species that differ only by a reversible change of internal configuration with one another, resulting in ``resting sets'' of species variants that are easily interconvertible. In this example, species $G$ and $i4$ are grouped together and will be treated as one species $G$ in the condensed CRN (and similarly $D$ and $i7$ as one species $D$). In general, this grouping gives rise to subtle issues that necessitate the use of the compositional hybrid approach (see Example \#4), but we will show that the system at hand is simple enough that we can verify it using only pathway decomposition. To achieve this, we treat the CRN generated by condensed semantics as a target CRN and the CRN generated by detailed semantics as an implementation CRN.

\begin{figure}
\centering
\begin{subfigure}[t]{\textwidth}
  \centering
  \includegraphics[width=.85\textwidth]{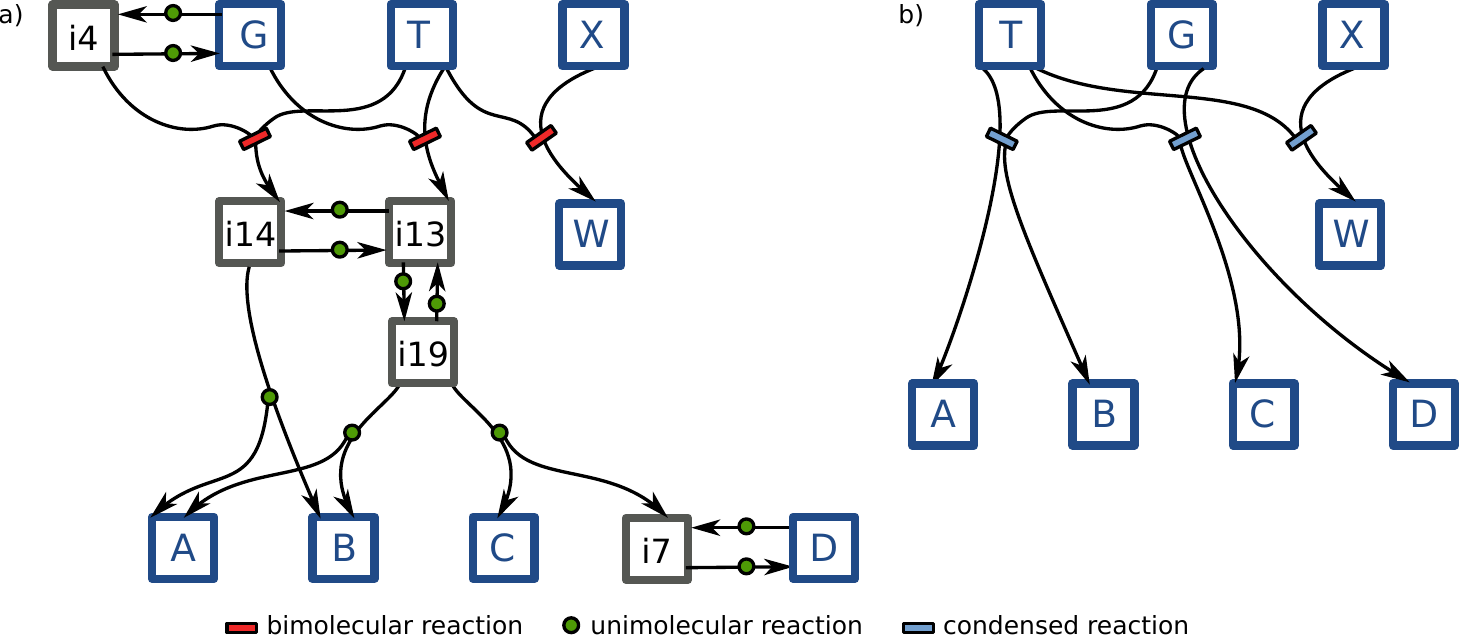}
\end{subfigure}
\vspace{.1in}
\begin{subfigure}[t]{\textwidth}
  \centering
  \includegraphics[width=.9\textwidth]{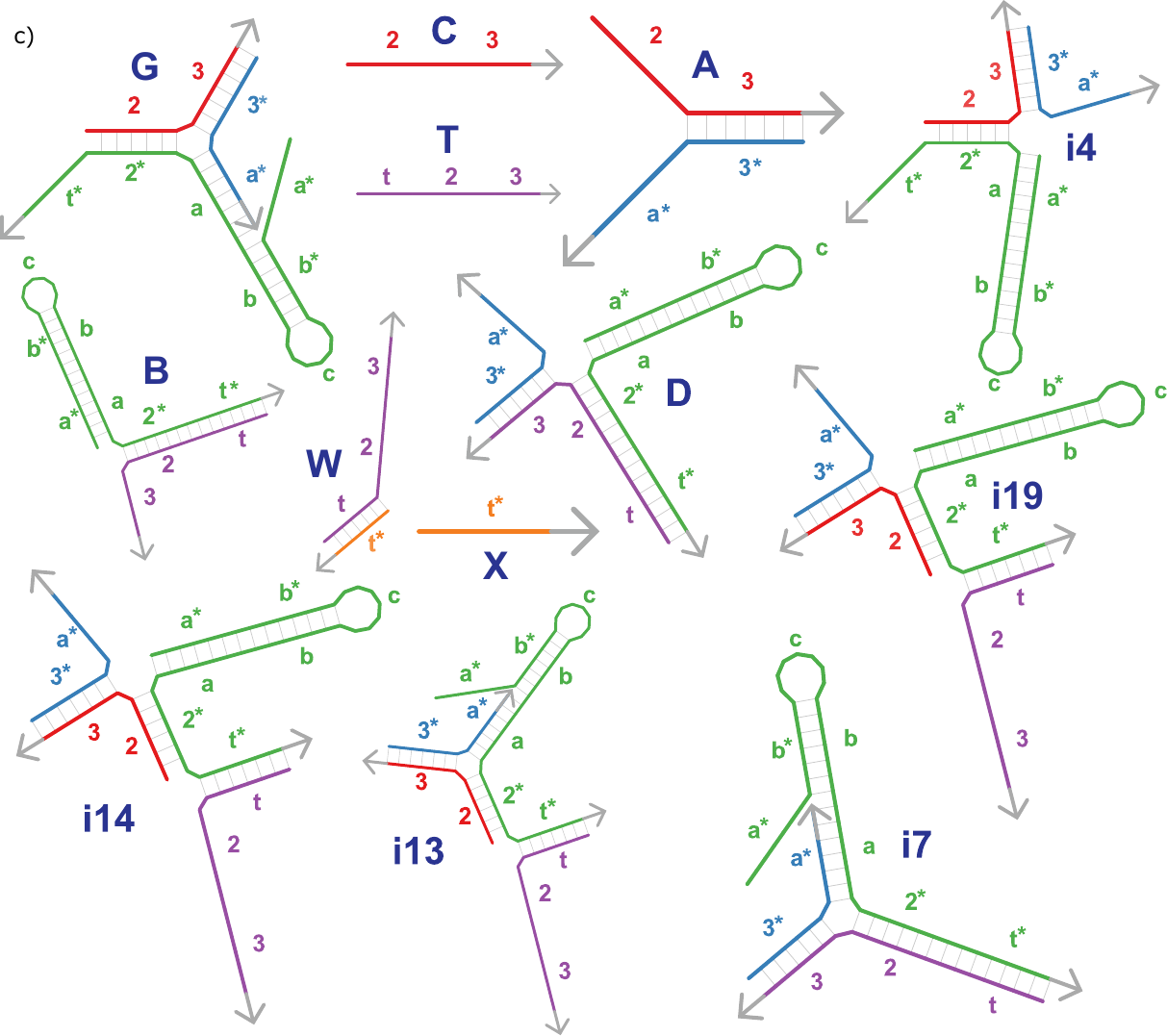}
\end{subfigure}
\caption {An example DNA system enumerated using detailed and condensed semantics. \textbf{a)} Detailed semantics enumeration. \textbf{b)} Condensed semantics enumeration. \textbf{c)} Key species shown in detail (adapted from Figure 4 of \cite{Grun}).}\label{fig:casestudy2}
\end{figure}

First of all, observe that the weak bisimulation approach of \cite{Qing,Johnson2016} is not sufficient to verify the correctness of this condensation. For example, the intermediate species $i19$ does not admit an appropriate instantaneous interpretation. If it is interpreted to be $\llbrace A,B\rrbrace$, the system would appear to contain the reaction $A+B\to C+D$. If it is interpreted to be $\llbrace C,D\rrbrace$, the system would appear to contain the reaction $C+D\to A+B$. It cannot be interpreted as $\llbrace G,T\rrbrace$, because it cannot react with an $X$ molecule to produce $W$.

In contrast, it is easily verified using our algorithm that the detailed CRN is tidy and regular and moreover its formal basis is $\{D\to D,\ G\to G,\ G+T\to A+B,\ G+T\to C+D,\ T+X\to W\}$, which is equal to the condensed CRN up to addition or removal of trivial reactions. Therefore the condensation in this example is correct according to pathway decomposition equivalence.

\subsection*{Example \#3}

As a third example, we consider the same translation scheme and optimization technique as in Example \#1, this time applied to a different target CRN $\{A+X\to X+X+A,\ A+X\to X+X,\ A+X\to X,\ A\to A+X+X,\ A\to A+X\}$. While this example is very similar in flavor to Example \#1, we can no longer directly verify it with pathway decomposition because now there can be multiple history domains for a single formal species. As can be seen in Figure \ref{fig:casestudy3}, there are four different implementation species that correspond to the formal species $X$ and two different implementation species that correspond to the formal species $A$. Moreover, as can be seen in Figure \ref{fig:casestudy32}, bisimulation still does not apply because of the delayed choice of species $j$. Hence, in this case the compositional hybrid approach is necessary to establish the equivalence between the target CRN and the implementation CRN.

\begin{figure}
\centering
\includegraphics[width=\textwidth]{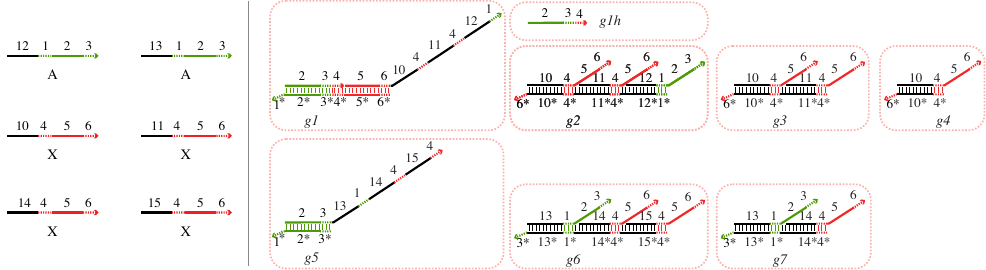}
% Has to illustrate what are X1,X2,X3,X4
\caption {An optimized implementation of target CRN $\{A+X\to X+X+A,\
  A+X\to X+X,\ A+X\to X,\ A\to A+X+X,\ A\to A+X\}$.  (Left) All
  species of $A$ and $X$ with distinct history domains. (Right) All fuel
  species of the optimized implementation.}\label{fig:casestudy3}
\end{figure}

To verify this implementation using the compositional hybrid approach, we first enumerate species and reactions using \cite{Grun}, then preprocess the fuel species to obtain the implementation CRN of Figure \ref{fig:casestudy32}, and finally run our formal basis enumerator on it treating $A_i$'s, $X_i$'s, and $W_i$'s as formal species. It verifies that the implementation CRN is tidy and regular and returns the formal basis shown in Figure \ref{fig:casestudy32}. While the formal basis turns out to be very large because of the existence of multiple history domains, it is easily shown to be weak bisimulation equivalent to the target CRN under the obvious interpretation $m(A_i)=\llbrace A\rrbrace, m(X_i)=\llbrace X\rrbrace, m(W_i)=\emptyset$. Therefore the given target CRN and implementation CRN are equivalent according to the compositional hybrid approach.

\begin{figure}
\centering
\begin{minipage}[b]{0.35\linewidth}
\centering
\begin{align*}
    A_1 &\to i_1\\
    i_1 &\to A_1\\
    A_2 &\to i_2\\
    i_2 &\to A_2\\
    i_1 + X_1 &\to j + W_3\\
    i_1 + X_2 &\to j + W_4\\
    i_1 + X_3 &\to j + W_5\\
    i_1 + X_4 &\to j + W_6\\
    i_2 + X_1 &\to j + W_7\\
    i_2 + X_2 &\to j + W_8\\
    i_2 + X_3 &\to j + W_9\\
    i_2 + X_4 &\to j + W_{10}\\
    j &\to X_3 + W_{11}\\
    j &\to X_3 + X_4 + W_{12}\\
    j &\to A_2 + X_3 + X_4 + W_{13}\\
    A_1 &\to k + W_1\\
    A_2 &\to k + W_2\\
    k &\to A_1 + X_1 + X_2 + W_{14}\\
    k &\to A_1 + X_1 + W_{15}
\end{align*}
\vspace{0.3cm}
\textbf{Implementation CRN}
\vspace{0.5cm}
\begin{align*}
    A &\to A\\
    A + X &\to X\\
    A + X &\to X+X\\
    A + X &\to A+X+X\\
    A &\to A+X+X\\
    A &\to A+X
\end{align*}
\vspace{0.3cm}
\textbf{After applying the interpretation}

\end{minipage}
\begin{minipage}[b]{0.6\linewidth}
\centering
\begin{align*}
    A_1 &\to A_1\\
    A_2 &\to A_2\\
    A_1 + X_1 &\to X_3 + W_{11} + W_3\\
    A_1 + X_2 &\to X_3 + W_{11} + W_4\\
    A_1 + X_3 &\to X_3 + W_{11} + W_5\\
    A_1 + X_4 &\to X_3 + W_{11} + W_6\\
    A_2 + X_1 &\to X_3 + W_{11} + W_7\\
    A_2 + X_2 &\to X_3 + W_{11} + W_8\\
    A_2 + X_3 &\to X_3 + W_{11} + W_9\\
    A_2 + X_4 &\to X_3 + W_{11} + W_{10}\\
    A_1 + X_1 &\to X_3 + X_4 + W_{12} + W_3\\
    A_1 + X_2 &\to X_3 + X_4 + W_{12} + W_4\\
    A_1 + X_3 &\to X_3 + X_4 + W_{12} + W_5\\
    A_1 + X_4 &\to X_3 + X_4 + W_{12} + W_6\\
    A_2 + X_1 &\to X_3 + X_4 + W_{12} + W_7\\
    A_2 + X_2 &\to X_3 + X_4 + W_{12} + W_8\\
    A_2 + X_3 &\to X_3 + X_4 + W_{12} + W_9\\
    A_2 + X_4 &\to X_3 + X_4 + W_{12} + W_{10}\\
    A_1 + X_1 &\to A_2 + X_3 + X_4 + W_{13} + W_3\\
    A_1 + X_2 &\to A_2 + X_3 + X_4 + W_{13} + W_4\\
    A_1 + X_3 &\to A_2 + X_3 + X_4 + W_{13} + W_5\\
    A_1 + X_4 &\to A_2 + X_3 + X_4 + W_{13} + W_6\\
    A_2 + X_1 &\to A_2 + X_3 + X_4 + W_{13} + W_7\\
    A_2 + X_2 &\to A_2 + X_3 + X_4 + W_{13} + W_8\\
    A_2 + X_3 &\to A_2 + X_3 + X_4 + W_{13} + W_9\\
    A_2 + X_4 &\to A_2 + X_3 + X_4 + W_{13} + W_{10}\\
    A_1 &\to A_1 + X_1 + X_2 + W_{14} + W_1\\
    A_1 &\to A_1 + X_1 + W_{15} + W_1\\
    A_2 &\to A_1 + X_1 + X_2 + W_{14} + W_2\\
    A_2 &\to A_1 + X_1 + W_{15} + W_2
\end{align*}
\vspace{0.3cm}
\textbf{Formal basis}
\end{minipage}
\caption {Applying the compositional hybrid approach to Example \#3.}\label{fig:casestudy32}
\end{figure}

We remark that even though this implementation CRN contained 25 species and 19 reactions, the basis enumeration algorithm took less than one second to finish on an off-the-shelf laptop computer. This performance is perhaps surprising considering that the algorithm is based on brute-force enumeration, and it suggests that despite the adverse worst-case time complexity, the algorithm may still be practical for many instances that arise in practice. In fact, when we made full use of the optimization techniques outlined in Sections \ref{sec:modular} and \ref{sec:optimization}, the algorithm terminated on almost all of our test instances in less than ten seconds. The performance was particularly strong for implementations that had monomolecular substructure. For example, the translation scheme from \cite{SSW10} applied on a target CRN consisting of 20 reactions produces an implementation CRN consisting of 394 reactions and 387 species. However, since this implementation CRN is modular and has monomolecular substructure, our verifying algorithm was able to successfully verify it in mere 4 seconds. On the other hand, we report that the algorithm failed to terminate in an hour on some instances that did not have monomolecular substructure.

%We remark that even though this implementation CRN contained 25 species and 19 reactions, the basis enumeration algorithm took less than one second to finish on an off-the-shelf laptop computer, even without any of the optimization techniques from Sections \ref{sec:modular} and \ref{sec:optimization}. This performance is perhaps surprising considering that the algorithm is based on pure brute-force enumeration, and it suggests that despite the adverse worst-case time complexity, the algorithm may still be practical for many instances that arise in practice. In fact, our algorithm without any optimization technique was already able to successfully verify many implementation CRNs of up to 72 species and 59 reactions within a few tens of seconds, although there were some other test instances of comparable size on which it failed to terminate within 5 minutes. In sheer contrast, when we made full use of the optimization techniques outlined in Sections \ref{sec:modular} and \ref{sec:optimization}, the algorithm terminated on almost all instances in less than ten seconds. The performance improvement was particularly strong for implementations that had monomolecular substructure. For example, the translation scheme from \cite{SSW10} on a target CRN consisting of twenty reactions produces an implementation CRN consisting of 394 reactions and 387 species. However, since this implementation CRN is modular and has monomolecular substructure, our verifying algorithm was able to successfully verify it in mere 4 seconds.

\subsection*{Example \#4}
In our fourth example, we illustrate why pathway decomposition may not
suffice for the verification of the condensation procedure in \cite{Grun} and how the compositional hybrid approach can be used to remedy this problem. This example is almost identical to the system in Example \#2, except that we remove species $X$ and $W$ from the system and add a different species $Y$, which is merely the DNA strand consisting of a single domain `a'.

\begin{figure}
\centering
\begin{subfigure}[t]{0.45\textwidth}
  \includegraphics[width=.5\textwidth]{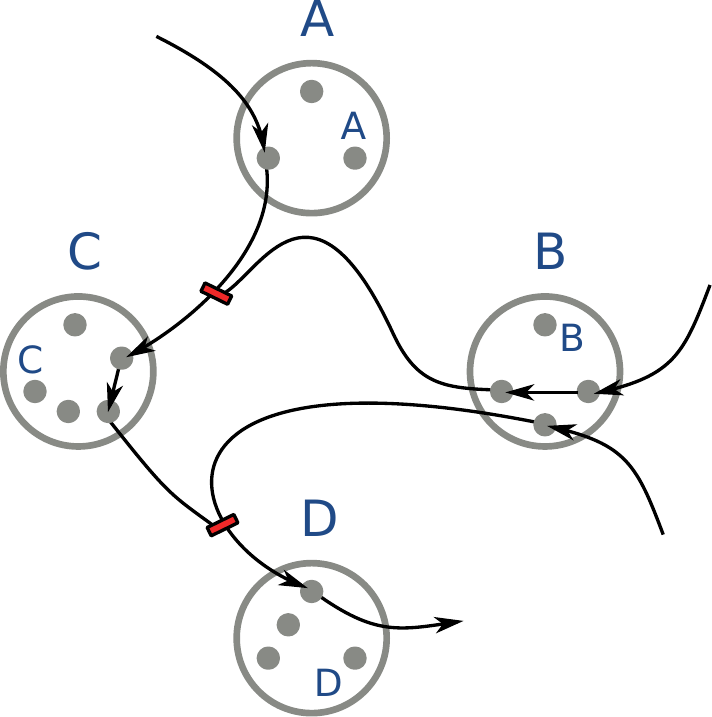}
  \begin{minipage}[b]{0.4\linewidth}
    \centering
      $? \to A$\\
      $? \to B$\\
      $? \to B$\\
      $B + A \to C$\\
      $C + B \to D$\\
      $D \to\ ?$\\~
  \end{minipage}
  \caption{}
\end{subfigure}
\qquad\qquad
\begin{subfigure}[t]{0.4\textwidth}
  \centering
  \includegraphics[width=.65\textwidth]{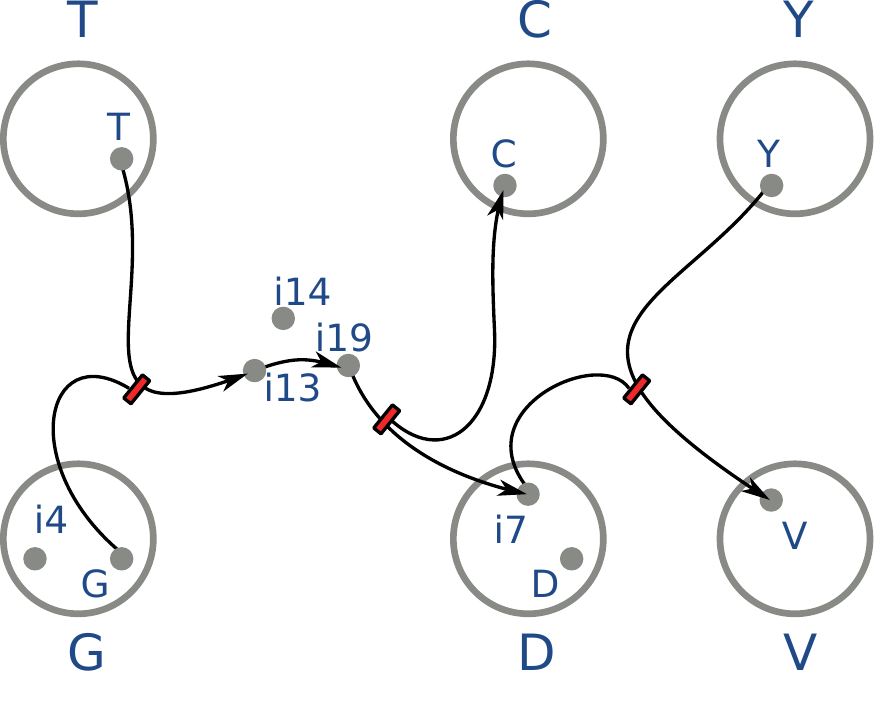}
  \caption{}
\end{subfigure}
\caption {Conceptual issues for pathways involving resting sets with multiple interconvertible species.   Species are illustrated as dots, while resting sets are indicated by circles encoding the interconvertible species. Although unimolecular reactions between the species within a given resting set form a strongly connected graph, for clarity they are not explicitly shown. (a) A graph of the resting sets and potential reaction
  pathways for the adjacent implementation CRN.  Reaction pathways do not necessarily involve formal species, but can instead utilize equivalent species from the same resting set. (b) A
  subgraph of the resting sets and reaction pathways for the CRN of
  Example \#4 that illustrates the pathway
  $G+T\to i13,\ i13\to i19,\ i19\to C+i7,\ i7+Y\to
  V$.}\label{fig:ten-minus-epsilon}
\end{figure}

Since there are many species in this system that have an unhybridized a* domain (e.g. $G$, $i4$, $A$, $i7$, and $D$), molecule $Y$ can react with those species to form various other species, thus giving rise to a reaction network that is much more complex than in Figure \ref{fig:casestudy2}. Most importantly, we note that molecule $i7$ can also bind with $Y$ to form a species which we shall call $V$. Since $i7$ is identified with $D$ in condensed semantics, this reaction will appear as $D+Y\to V$ in the condensed CRN. However, if we try to run pathway decomposition on the detailed CRN treating as formal species only those species that appear in the condensed CRN, we will find that the prime pathway $G+T\to i13,\ i13\to i19,\ i19\to C+i7,\ i7+Y\to V$ is irregular and thus pathway decomposition does not apply to this system (see Figure~\ref{fig:ten-minus-epsilon}). The problem here is that species $i7$ should have been considered a formal species because it is identified with $D$, even though its name does not appear in the condensed CRN. Hence, in order to apply pathway decomposition properly, we would need a way to inform the theory that $i7$ is also a copy of the species $D$.

\begin{figure}
\centering
\begin{minipage}[b]{0.6\linewidth}
\centering
\begin{align*}
    i13 &\leftrightharpoons i14\\
    i13 &\leftrightharpoons i19\\
    i14 &\to A + B\\
    i38 &\leftrightharpoons i40\\
    i38 &\leftrightharpoons i46\\
    i40 &\to B + Z\\
    i46 &\to B + Z\\
    i46 &\to C + V\\
    i19 &\to A + B\\
    i19 &\to C + i7\\
    A + Y &\to Z\\
    D &\leftrightharpoons i7\\
    D + Y &\to i41\\
    i7 + Y &\to V\\
    G &\leftrightharpoons i4\\
    G + T &\to i13\\
    G + Y &\to U\\
    i4 + T &\to i14\\
    i4 + Y &\to i42\\
    T + U &\to i38\\
    T + i42 &\to i40\\
    U &\leftrightharpoons i42\\
    V &\leftrightharpoons i41
\end{align*}
\vspace{0.3cm}
\textbf{Implementation CRN}
\vspace{0.2cm}
\begin{align*}
    m(A)=\llbrace A\rrbrace,&\quad\quad m(B)=\llbrace B\rrbrace,\\
    m(C)=\llbrace C\rrbrace,&\quad\quad m(D)=m(i7)=\llbrace D\rrbrace,\\
    m(G)=m(i4)=\llbrace G\rrbrace,&\quad\quad m(T)=\llbrace T\rrbrace,\\
    m(i42)=m(U)=\llbrace U\rrbrace,&\quad\quad m(Y)=\llbrace Y\rrbrace,\\
    m(i41)=m(V)=\llbrace V\rrbrace,&\quad\quad m(Z)=\llbrace Z\rrbrace
\end{align*}
\vspace{0.3cm}
\textbf{Bisimulation interpretation $m$}
\end{minipage}
\begin{minipage}[b]{0.35\linewidth}
\centering
\begin{align*}
    A + Y &\to Z\\
    D &\to i7\\
    D + Y &\to i41\\
    i7 &\to D\\
    i7 + Y &\to V\\
    G &\to i4\\
    G + T &\to A + B\\
    G + T &\to C + i7\\
    G + Y &\to U\\
    i4 &\to G\\
    i4 + T &\to A + B\\
    i4 + T &\to C + i7\\
    i4 + Y &\to i42\\
    T + U &\to B + Z\\
    T + U &\to C + V\\
    T + i42 &\to B + Z\\
    T + i42 &\to C + V\\
    U &\to i42\\
    i42 &\to U\\
    i41 &\to V\\
    V &\to i41
\end{align*}
\vspace{0.3cm}
\textbf{Formal basis}
\vspace{0.1cm}
\begin{align*}
    A + Y &\to Z\\
    D + Y &\to V\\
    G + T &\to A + B\\
    G + T &\to C + D\\
    G + Y &\to U\\
    T + U &\to B + Z\\
    T + U &\to C + V
\end{align*}
\vspace{0.3cm}
\textbf{Condensed CRN}
\end{minipage}
\caption {Applying the compositional hybrid approach to Example \#4.}\label{fig:casestudy4}
\end{figure}

We note that this problem is very similar to the problem of multiple history domains that we discussed in Example \#3. Just as $A_1$ and $A_2$ had both to be considered an implementation of $A$, in this example we need to ensure that both $D$ and $i7$ are considered an implementation of $D$. Thus, we can apply the compositional hybrid approach to such systems by first running pathway decomposition on the detailed CRN, treating as formal species all species from the condensed CRN and any other species that are identified with those species within resting sets, and then verifying that the resulting formal basis is weak bisimulation equivalent to the condensed CRN. In our example, this would correspond to treating species like $i4$ and $i7$ as formal species, and then using the interpretation $m(G)=m(i4)=\llbrace G\rrbrace, m(D)=m(i7)=\llbrace D\rrbrace$ in the bisimulation step of the compositional hybrid approach. All of this information, i.e.~which species should be considered formal and what interpretation should be used, is provided to our verifier software by the reaction enumerator of \cite{Grun}. This way, the irregular pathway from the previous paragraph would no longer be prime, because it can now be decomposed into $G+T\to i13,\ i13\to i19,\ i19\to C+i7$ and $i7+Y\to V$. The full result of a compositional hybrid approach verification of this example is shown in Figure \ref{fig:casestudy4}, where we can easily check that the formal basis of the detailed CRN is weak bisimulation equivalent to the condensed CRN under the interpretation $m$.

Like Example \#3, we note that this is an example of a DNA system that neither pathway decomposition nor bisimulation can verify, but the compositional hybrid approach can.

\subsection*{Example \#5}
For the final example, we investigate the following CRN as an implementation of $\{A+B\to C+D+E\}$:
\begin{align*}
A&\leftrightharpoons i+j\\
i+B&\leftrightharpoons k+l\\
k&\to C\\
l&\leftrightharpoons m+n\\
m&\to D\\
j+n&\to E
\end{align*}
Unlike the previous examples, this CRN was not constructed by a direct application of a published CRN implementation scheme, although it was inspired by the reaction network given rise to by the ``garbage collection'' module of \cite{C09}. Nonetheless, this system is very interesting because it distinguishes pathway decomposition from bisimulation \cite{Qing,Johnson2016} or serializability \cite{Lakin} without making use of the delayed choice phenomenon. Therefore it suggests that delayed choice may in fact be just one example of many interesting behaviors that are allowed by pathway decomposition but not by other approaches.

First, we observe that this implementation would be deemed ``incorrect'' by bisimulation or serializability. To see this for bisimulation, we simply note that the interpretation of $k$ must be $C$ because of reaction $k\to C$. However, this would result in $k+l\to i+B$ being interpreted as some reaction that consumes at least one $C$ and produces at least one $B$, which means that this CRN cannot be a correct implementation of $\{A+B\to C+D+E\}$ according to bisimulation. To see it for serializability, we note that, roughly translated into our language, serializability requires every prime pathway in a module to have a well-defined turning point reaction and moreover visit the same set of states prior to the turning point reaction. In this example, the turning point reaction of a prime pathway would be defined as the first irreversible reaction that occurs in the pathway, e.g.~reaction $k\to C$ in pathway $(A\to i+j,\ i+B\to k+l,\ k\to C,\ l\to m+n,\ m\to D,\ j+n\to E)$ and reaction $j+n\to E$ in pathway $(A\to i+j,\ i+B\to k+l,\ l\to m+n,\ j+n\to E,\ k\to C,\ m\to D)$.\footnote{Lakin et al.'s definition of a turning point (which they call a ``commit reaction'') is slightly different from our definition. For detail, refer to \cite{Lakin}.} Noting that these two pathways do not visit the same set of states prior to the respective turning point reactions, we conclude that this implementation cannot be handled by serializability. However, we do remark that there could be a relaxed version of serializability which can handle this example.

In contrast to the above two approaches, pathway decomposition does not have any difficulty with this example. Not only does it allow prime pathways implementing the same reaction to have different choices of turning point reactions, but it also allows them to go through different sets of states as long as each of those prime pathways is regular. In fact, our basis enumeration algorithm easily finds the formal basis of the given implementation CRN to be $\{A\to A,\ A+B\to A+B,\ A+B\to C+D+E\}$, proving that it is a correct implementation of $\{A+B\to C+D+E\}$ according to pathway decomposition.

\section{Conclusions}

The development of pathway decomposition theory was motivated by a desire for a general notion of CRN behavioral equivalence, up to ignoring rate constants and implementation intermediates.  An overarching challenge is that, despite the set of species and set of reactions both being finite, the set of possible initial states may be infinite, the set of system states reachable from a given initial state may be infinite, and the set of possible pathways from a given initial state may be infinite -- yet we desire a guarantee that the available behaviors within two CRNs are essentially identical, and we desire that this guarantee may be found (or refuted) algorithmically in all cases.  These factors eliminate many standard approaches -- such as those that only handle finite state spaces -- from consideration. How well does the pathway decomposition approach meet these goals? 

The central concepts of pathway decomposition are quite general, allowing application of the theory to a wide range of CRNs, but some important limitations are imposed.  First, species must be divided into formal species and intermediate species, and we are only concerned with ``formal'' pathways of reactions that start with purely formal states and end in purely formal states.  (I.e. our theory does not concern itself with what may or may not happen when you start the CRN with intermediate species; they occur only in the middle of formal pathways.)  The basic idea is that any such formal pathway can be decomposed (perhaps not uniquely) into interleaved sub-pathways until non-decomposable (``prime'') pathways are reached.  The set of all such prime pathways defines the formal basis for the CRN -- the corresponding set of initial and final states for the set of prime pathways -- and two CRNs with the same formal basis are deemed equivalent.   The consequence is that the sets of formal pathways in the two CRNs can be put in correspondence with each other by decomposing into prime pathways and replacing each prime pathway by its formal basis reaction.  In this sense, anything that one CRN can do, can also be done by the other.  However, pathway decomposition theory applies only to CRNs that are tidy (any state reached from a formal state can clean up all intermediate species and return to a formal state) and regular (every prime pathway consists of a consumptive phase followed by a productive phase, separated by a turning point reaction).  

The choices implicit in the formulation of pathway decomposition allow for a general and elegant theory.  A primary feature is that other than regularity, there are no structural constraints on what goes on inside the prime pathways.  This provides the potential for intermediates within a pathway to perform a non-trivial (deterministic or non-deterministic) computation.  In particular, intermediates can be shared between prime pathways, and intermediates may not ``know'' which pathway they are on -- a phenomenon we call ``delayed choice''.   This is substantially less restrictive than other related methods for CRN implementation verification \cite{Qing,Johnson2016,Lakin}.  On the other hand, in the case that a CRN can be divided into two parts with distinct intermediates, a modularity property holds such that the formal bases of the two parts can be considered independently (as in a previous method \cite{Lakin}). This greatly facilitates algorithmic verification of CRN implementations.   Finally, because the formal basis of an implementation CRN is unique (up to choice of which species are considered formal and which are intermediates), a given implementation CRN cannot be considered a correct implementation of two distinct formal CRNs -- a natural property that, again, does not hold for some related methods \cite{Qing,Johnson2016}.

On the other hand, the elegance and generality of pathway decomposition theory come at a cost.  Because the core theory only addresses tidy CRNs, fuel and waste species must be removed by pre-processing outside the core pathway decomposition theory. 
Because fuels are presumed to be held at constant concentrations, they can be eliminated from the CRN representation with a change of rate constants and absolutely no effect on the dynamics.  However, implementation species considered ``waste'' are often not entirely inert, but rather their interactions with the system are such that their presence or absence does not affect the possible formal pathways -- for example, they might interact only with other ``waste'' species.
Further, because the core theory requires a one-to-one correspondence between formal species and selected representative species in the implementation, the core theory is insufficient for implementation schemes where a formal species may be represented by molecules with variable regions, such as the ``history domains'' of Soloveichik et al \cite{SSW10}.  To accommodate these concerns, we developed a compositional hybrid theory, which allows pathway decomposition to be first applied to the implementation CRN with all waste species and all representations of formal species being designated as ``formal'', after which bisimulation \cite{Qing,Johnson2016} is applied to the resulting formal basis to establish correctness with respect to the original formal CRN.   For both the core theory and the compositional hybrid theory, ``correctness'' of an implementation provides guarantees that every pathway in the formal CRN can occur in the implementation and vice versa, under an appropriate interpretation of implementation pathways.

The compositional hybrid theory is conceptually sufficient for verifying {-- or finding errors in --} CRNs implemented according to most published translation schemes\footnote{{We have applied pathway decomposition to translation schemes from \cite{SSW10, C09, C11, QSW11, chen2013programmable, Lakin}, verifying implementations using many schemes, revealing errors and suggesting fixes for some schemes, identifying concerns such a potential leak pathways in other schemes, and encountering limitations of the theory for still other cases.  These investigations will be reported in more detail elsewhere; some general observations are mentioned below.}}.
%Some schemes fail due to the lack of a turning point~\cite{C09,C11,QSW11,chen2013programmable}, but a straightforward modification described in \cite{C11} can be applied to introduce an irreversible step immediately after the last input is consumed. Several schemes~\cite{SSW10,C11} fail for specific cases when using a domain-level reaction enumerator that considers so-called remote-toehold strand displacement~\cite{Genot11}, indicating a low probability leak pathway.  In these cases pathway decomposition acts as desired, verifying systems or identifying flaws.  However, limitations of the theory are exposed in several cases:  when a translation scheme uses a physically reversible implementation for reversible bimolecular reactions \cite{QSW11}, which don't yield faulty trajectories but yet are declared incorrect by pathway decomposition; when translation schemes with history domains \cite{SSW10} are applied to trimolecular or higher order reactions, integrated hybrid theory may be necessary and the verification may be unreasonably slow; when ``garbage collection'' stages \cite{C09} are not eliminated by declaring key waste species to be fuels, verification may be unreasonably slow and irregular pathways may be found -- even though we don't believe that any faulty trajectories arise.
%
However, the algorithmic challenges of finding the formal basis for pathway decomposition theory, and of finding the interpretation function for bisimulation theory, are substantial:  the verification of correctness pertains to all possible initial states of the CRNs (which are infinite in number) and all possible pathways from these states (also infinite).  To meet this challenge, we developed the notion of ``signatures'' for partial pathways, proving that they are bounded in number for implementations for which the prime pathways have bounded width, and thus proving that at least in this case our algorithm is guaranteed to terminate, even if there are an infinite number of distinct prime pathways.  Although the worst-case complexity of our algorithm is unknown, in practice implemented CRNs have a modularity property that is easily recognized and exploited, often allowing verification to complete in a matter of seconds for systems of the scale that is currently experimentally feasible\footnote{{For example, a system of ten up-to-bimolecular reactions compiles, according to the translation scheme in \cite{SSW10}, into a implementation with 185 species and 150 reactions and verifies in about ten seconds on a 2013 MacBook Pro.}}.  
{Although verifying CRN bisimulation equivalence in the general case is PSPACE-complete \cite{Johnson2016}, the bisimulation test implemented for the compositional hybrid theory in Nuskell is a restricted case, and}
%Although the algorithm we use for the bisimulation verification step in \cite{Qing} does not have a proven worst-case complexity and is not even guaranteed to terminate, 
in practice this has not proven to be the limiting step for difficult verification cases.

During the course of our investigations, we encountered a number of CRN implementations that intuitively seem correct, but which are not accepted by our theory, pointing to the need for a yet more general notion of correctness.  We give {four} examples here.
The first was previously mentioned:  our notions of regularity and turning points, which appear necessary for correctly implementing irreversible reactions, preclude the use of physically reversible implementations of logically reversible reactions, such as $\{ A+B \rightleftharpoons i, i \rightleftharpoons j, j \rightleftharpoons C+D \}$.  
{Such implementations appear in  \cite{QSW11} and can reduce energy consumption and reduce implementation complexity exponentially \cite{thachuk2012space,condon2012less}.} 
Interestingly, the serializability approach to verification \cite{Lakin} shares this restriction, but the bisimulation approach \cite{Qing,Johnson2016} easily accommodates physically reversible implementations. Technically speaking, the compositional hybrid theory, which generalizes both pathway decomposition and bisimulation, can handle the above example, but it is only by ``abusing'' the theory by setting all species to be ``formal'' for its pathway decomposition step.
As a second example, one might have a reaction implementation where a ``waste'' species is produced prior to the turning point, e.g., 
$\{A  \to i + W, i  \to A, B + i \to C \}$.  The compositional hybrid theory will not accept this implementation because in the initial pathway decomposition step, where waste $W$ is temporarily considered ``formal'', the prime pathway $( A \to i + W, B+i \to C)$ is not a regular implementation of $A + B \to C + W$ because $W$ is produced prior to the arrival of $B$. 
{Although this type of situation does not commonly arise in DNA strand displacement systems, it could be ameliorated by simply eliminating waste species (as defined in Definition \ref{def:waste}) from the implementation CRN prior to verification.}
{A more general solution would be} to propose an ``integrated hybrid theory'' where regularity is tested only after the bisimulation interpretation has been applied; however, this complicates attempts to prove a theorem relating formal and implementation pathways analogous to Theorem~\ref{thm:hybrid}. 
The third example  concerns situations where formal species interact in non-meaningful ways, resulting in pathways that have no net effect.  As an illustration, in the implementation $\{A \rightleftharpoons i, i+B \to C, i + D \rightleftharpoons j \}$, the pathway $(A \to i, i+D \to j, j \to D+i, i+B \to C)$ is a prime pathway taking $A+B+D$ to $D+C$, but it is not regular.  Pathways like $(i+D \to j, j \to i+D)$, which end where they begin and which never produce a net formal species that wasn't previously consumed, could be called {\it futile loops}; they would arise, for example, when the implementation model explicitly accounts for fleeting binding between molecules that results in no transformation.    In some published CRN-to-DNA translation schemes, especially those that involve 
variable ``history'' domains and multiple reversible steps prior to the turning point {(e.g. \cite{C09}, and \cite{SSW10} generalized to trimolecular reactions)}, the involvement of futile loops can result in {irregular prime pathways that interlink multiple history-distinct versions of the same signal, 
such as $(A \to i, i+B_1 \to j_1, j_1 \to B_1+i, i+B_2 \to j_2, j_2 \to C)$}.
%prime pathways that interlink multiple instances of a reaction implementation to create a non-decomposable irregular pathway. 
{Interestingly, the integrated hybrid theory could also address this particular problem, if it could be rigorously justified.}
%\note{[I think I recall that this is the problem for Cardelli's garbage collection in the strand algebra scheme.  But how about Cardelli's two-domain scheme -- without history domains, might the garbage collection work?  Maybe we don't know the answer?]} \notesw{[SW: Indeed the code base already contains cardelli\_2domain\_fixed.ts, but Cardelli's garbage collection uses cooperative hybridization. I don't yet know how to turn on this setting for the reaction enumerator. If this is possible, could you give it a try?]}.  \note{[ The history domain issue mentioned above that integrated hybrid fixes, could possibly alternatively be fixed by decomposing futile loops.  But the problem in that case isn't properly described as ``interlinking pathway instances''.  Note also that an implementation that doesn't order the arrival of products, such as one based on cooperative hybridization, would also have the irregularity problem unless futile loops are removed -- even without history domains.]} 
{Alternatively, it} would be desirable to allow decomposition by removing such futile loops (in which case the above  pathway would be considered to implement $A+B \to C$, a sensible result) but unfortunately modifying our definitions to allow this leads to additional complexities for the notion of a ``signature'', and we have not been able to generalize our algorithm while retaining a proof that it is guaranteed to terminate.
{The fourth example highlights a situation where our algorithm's combinatorial explosion makes verification infeasible.  Specifically, most cases where verification is fast and scalable involve translation schemes that result in implementations that are modular and have monomolecular substructure, as per Sections \ref{sec:modular} and \ref{sec:optimization}.  However, the ``garbage collection'' stages in the translation schemes of \cite{C09} do not have monomolecular substructure, and our algorithms are capable of verifying only the simplest instances.  Interestingly, if the garbage collection stages are removed or key garbage collection species are held at constant concentrations as fuel, then monomolecular substructure is restored and verification proceeds apace.}
{In summary, there is} room for a deeper understanding of the notion of logical correctness for CRN implementations, and of the relative capabilities of different existing theories, such as composition and modularity properties.

Even in their present form, existing theories and algorithms for establishing the correctness of CRN implementations can play an important role in the development of rigorous compilers for molecular programming.   We envision that future compilers for molecular programming will conform to the standards established in electrical and computer engineering: complexity is managed using an abstraction hierarchy that allows a program specification in a high-level language (such as Verilog) to be translated through a series of intermediate-level languages to a physically implementable low-level language (such as transistor-level netlists). Furthermore, the language at each level of the hierarchy has well-defined semantics (the mathematical model describing the behavior of the program) and most importantly, formal proofs can establish that the compiler's transformation from a higher-level language to a lower-level language preserves the essential behavioral invariants~{\cite{leroy2009,Rival04}}.  For molecular programming with dynamic DNA nanotechnology, formal CRNs could serve as a higher-level language, while domain-level DNA strand displacement models could serve as a lower-level language -- the semantics of which provide the implementation CRN.  Note that in contrast to traditional compilers that (ideally) may be proved to be correct for all source programs, our pathway decomposition theory and algorithms are best suited for evaluating the correctness on a case-by-case basis.  While proving the general correctness of a CRN-to-DNA translation scheme would obviate the need for time-consuming verification algorithms to be run, the advantages of case-by-case verification are (1) when new translation schemes are proposed, they may immediately be used with confidence prior to establishing what may be a difficult general-case proof; (2) when a new low-level semantics is considered (e.g. either more or less detail in the molecular model), again there is no need to attempt a new proof for the general case; (3) in cases where a translation scheme is in fact not correct in the general case, it may still be used with confidence for CRNs that it does implement correctly; and (4) since it is highly desirable to make experimental systems as simple as possible, formal verification can be used to establish or refute the correctness of arbitrary attempts to optimize and simplify the DNA-level design.  These ideas have been implemented in the verifying compiler, Nuskell, which has already been used to catch bugs in several translation schemes~\cite{S11}.

Although the task of building an abstraction hierarchy for molecular programming with dynamic DNA nanotechnology seems particularly tractable, in principle the same formalism could be used to establish the correctness of other types of molecular and biochemical systems.  For example, systems of protein enzymes and nucleic acid substrates have been used to construct cell-free biochemical circuits~\cite{Kim2006,Montagne2011}; given a specification for the desired behavior as a formal CRN together with a CRN describing the actual implementation details, one could use pathway decomposition theory to examine its logical correctness.  Similarly, pathway decomposition could provide an alternative perspective on the validity of descriptions at multiple levels of details for biological networks studied in systems biology, or as an evaluation of coarse-graining and model reduction techniques~\cite{Ackermann2012}.  However, there are several limitations to pathway decomposition theory for these purposes.  First, in many such cases rate constants are important, but pathway decomposition theory does not consider them.  Related, in many cases approximate implementations are sufficient -- for example, if non-conforming pathways occur very rarely in the discrete stochastic (Gillespie) dynamics for the CRN, or if non-conforming pathways ``average out'' in the continuous deterministic (ordinary differential equation) dynamics~\cite{Klavins1, Klavins2, C14}.  
Finally, for some purposes the target behavior that the implementation aims to achieve is not best described as a CRN, but rather by some other specification language such as temporal logic.  Many of these issues are explored in the literature on Petri nets \cite{Heiner2008}.  
{However, a more fundamental concern is that CRNs are not an efficient representation for describing combinatorial processes in biology, for which more effective models have been developed~\cite{BNGL,danos2007} and analyzed~\cite{behr2016}.  It is reasonable to presume that such models could in the future provide a programming language for more sophisticated molecular machines.}
Nonetheless, the notion of logical correctness of CRN implementations that is provided by pathway decomposition theory has already proved its effectiveness for catching logical errors in CRN-to-DNA translation schemes and appears to be particularly suitable for incorporation into automated verifying compilers for molecular programming.

\section*{Acknowledgments}
We appreciate helpful discussions with John Baez, Luca Cardelli, 
Vincent Danos, Qing Dong, Robert Johnson, {Stefan Badelt,} and Matthew Lakin.
S.W.S. was supported by California Institute of Technology's Summer Undergraduate Research Fellowship 2009,
NSF grant CCF-0832824, 
ARO Grant W911NF-09-1-0440 and NSF Grant CCF-0905626.
C.T. was supported by NSF grants CCF-1213127, SHF-1718938, and a Banting Fellowship. 
E.W. was supported by NSF grants CCF-0832824, CCF-1213127, and CCF-1317694.

\bibliographystyle{plain}
\bibliography{pd}
\end{document}